\documentclass[11pt,  oneside, a4paper]{article}
\usepackage[utf8]{inputenc}
\usepackage[a4paper, total={6.5in,9in}]{geometry}

\usepackage{graphicx} 
\usepackage{bbm}
\usepackage{amsfonts}
\usepackage{enumitem}
\usepackage{fullpage}
\usepackage{amsmath}
\usepackage[hidelinks]{hyperref}
\usepackage{mathtools}
\usepackage{lipsum}
\usepackage{blindtext}
\usepackage{titlesec}
\usepackage{amssymb}
\usepackage{amsthm}
\usepackage{rotating}
\usepackage{subcaption}
\usepackage{caption}
\usepackage{svg}
\usepackage{xcolor}
\usepackage{verbatim}
\usepackage{natbib}

\usepackage{balance}
\usepackage{breqn}

\theoremstyle{remark}
\newtheorem{remark}{Remark}

\newtheorem{claim}{Claim}

\newtheorem{defn}{Definition}
\newtheorem{thm}{Theorem}
\newtheorem{lem}{Lemma}

\newtheorem{model}{Model}
\newtheorem*{des}{Desiderata for Good Allocation Rules}
\newenvironment{proofsketch}{%
  
  \begin{proof}
}{
  \end{proof}
}

\newcommand{\footremember}[2]{%
    \footnote{#2}
    \newcounter{#1}
    \setcounter{#1}{\value{footnote}}%
}

\title{Fixed Points and Stochastic Meritocracies: \\A Long-Term Perspective}
\author{Gaurab Pokharel\footremember{alley}{Virginia Tech. Email: gaurab@vt.edu (Equal Contribution)} 
\and Diptangshu Sen\footremember{trailer}{Georgia Institute of Technology. Email: dsen30@gatech.edu (Equal Contribution)}
\and Sanmay Das\footremember{something}{Virginia Tech. Email: sanmay@vt.edu}
\and Juba Ziani\footremember{somethingelse}{Georgia Institute of Technology. Email: jziani3@gatech.edu}
}
\date{\today}

\begin{document}

\maketitle

\begin{abstract}
We study group fairness in the context of feedback loops induced by meritocratic selection into programs that themselves confer additional advantage, like college admissions. We introduce a stylized, yet novel inter-generational model for the setting and analyze it in situations where there are no underlying differences between two populations. When the benefit of the program (or the harm of not getting into it) is completely symmetric, we show that disparities between the two populations will vanish on average in the long term, although in the short term disparities will continue to arise and dissipate cyclically. Further, the time an accumulated advantage takes to dissipate can be significant, and increases as a function of the relative importance of the program in conveying benefits. Interestingly, significant disparities can arise purely due to randomness even from completely symmetric initial conditions, especially when populations are small. The introduction of even a slight asymmetry, where the group that has accumulated an advantage becomes slightly preferred, leads to a completely different outcome. In these instances, starting from completely symmetric initial conditions, disparities between groups arise stochastically and then persist over time, yielding a permanent advantage for one group. Our analysis precisely characterizes conditions under which disparities persist or diminish, with a particular focus on the role of the scarcity of available spots in the program and its effectiveness. We also present extensive simulations in a richer model that further support our theoretical results in the simpler, stylized model. Our findings are relevant for the design and implementation of algorithmic fairness interventions in similar selection processes.
\end{abstract}

%\textbf{Keywords.} Long-term fairness, feedback loops, meritocratic selection, capacitated allocation, uncertainty

\section{Introduction}\label{sec:intro}
Algorithms increasingly mediate critical resource allocation, influencing outcomes across education, healthcare, employment, and beyond \citep{chiara_gil_tully, crawford2019ainow, Kleinberg_Raghavan_2021}. In scenarios of inherent scarcity \citep{Investopedia2025Scarcity}, resources are often allocated according to principles of \emph{local justice} \citep{elster1992local}. While local justice principles are broad and can include prioritizing the most vulnerable or those most likely to benefit, our focus here is on so-called \emph{meritocratic} selection procedures. Within algorithmic fairness, meritocratic selection is closely aligned with the principle of \emph{individual fairness}, which requires similar individuals to receive similar outcomes \citep{Dwork_Hardt_Pitassi_Reingold_Zemel_2012}. Meritocratic fairness specifically dictates that similarly qualified individuals obtain comparable outcomes, and those with higher qualifications take priority—for instance, in admissions to prestigious colleges or hiring for high-skilled jobs.

While the ideal of meritocracy is to reward talent over social background \citep{Young1958Rise}, critics argue it can become a ``myth'' that reinforces inequality \citep{McNamee_Miller_2009, Castilla2010Paradox, Marvel_2025}, particularly when `merit' is operationalized through flawed, class-bound proxies. For instance, in higher education, using test scores and ``race-blind'' rankings as the primary measure of academic merit can systematically disadvantage students with fewer resources and reduce diversity \citep{Chetty_Deming_Friedman_2023, kizilcec2023holistic}. This reliance on crude proxies often contrasts with the more holistic, equitable judgments of human experts \citep{pokharel2024discretionary}. This pattern extends across domains, with algorithms perpetuating historical biases in hiring by misinterpreting `job fit' \citep{raghavan2020hiring}, misjudging `need' in welfare allocation \citep{eubanks2018automating}, and using biased `crime risk' scores to create cycles of discriminatory surveillance \citep{LumIssac2016Serve}. Under perfect measurement of qualifications, meritocratic selection appears inherently fair in static settings; however, because allocation itself confers lasting advantages, fairness can become complicated over longer timescales. Our work thus explores the critical question: do resource allocation processes that are both meritocratic and fair in static settings necessarily remain fair over time, or might they generate persistent unfairness? If yes, under what conditions does it emerge?

We introduce a stylized theoretical framework, designed to be as parsimonious as possible, to systematically explore these dynamics. The essence of the basic model is that individuals can belong to one of two types (that we call ``high'' and ``low''), as well as one of two distinct groups. High-type individuals are prioritized for college admission\footnote{%
While we use the running example of college admission as a representative case to simplify the exposition, we note that this can apply to a broad range of meritocratic selection processes including in employment, sports, and other education contexts.}, but if the college capacity is large enough, others may also be admitted. Access to college confers a high probability of remaining (or becoming) a high type, while not being admitted means an individual has a significantly lower probability of remaining (or becoming) so. After college, we assume the population replicates to the next generation, and the process repeats. By design, the model is completely symmetric in how it treats the two groups.

We are interested in the dynamics governing how both the proportion of high-type individuals and college admissions outcomes evolve over generations. We show that under our basic model, \emph{separations} where one group has a higher proportion of high types than the other, eventually vanish, as the system repeatedly converges back toward equitable fixed points. Group outcomes undergo cycles: stochastic forces periodically push groups apart, and the system dynamics subsequently restore fairness. However, it is worth noting that these stochastic pushes that cause separations can be significant, especially when populations are smaller in size. An important additional note is that if the groups start off in a separated state (for example, because of long-term effects of past discrimination), the system can take time to converge back to an equitable state. The duration of this unfair period depends on both college selectivity and the relative efficacy benefit gained from attending college. When college attendance provides a strong support for maintaining advantaged status, existing disparities persist longer, extending periods of inequality and potentially amplifying their impact.

Next, we ask what happens when we introduce a slight asymmetry. Our second model introduces a so-called \textbf{affinity advantage}. In simple terms, when one group gains a slight lead in how many of its members are high-type---such as securing more positions in college admissions or other valuable resources---this group's other members also gain incremental boosts in their likelihood of becoming or remaining advantaged. There are many possible mechanisms that could lead to such an advantage -- for example, members of a currently advantaged group benefiting from enhanced visibility, greater support networks stemming from in-group preferences \citep{in_group, everett2015InGroup, boat2022SocialCapital}, or increased opportunities \citep{seibert2001CareerSuccess, Austin2023}. 
In our model, this translates concretely to a slight increase in the transition probability towards a high-type status for the members of the currently leading group. Our theoretical analysis demonstrates that even \emph{minimal} advantages created through this mechanism can become self-reinforcing, leading to stable and substantial long-term disparities. Crucially, these differences are not due to inherent asymmetries; rather, they emerge inevitably once stochastic fluctuations give rise to small, initial separations between otherwise identical groups. This offers an insight into an important empirical finding in the social science literature on how  small initial advantages tend to grow over time, a phenomenon termed the ``Matthew Effect'' \citep{Merton_1988, Stanovich_1986}. This dynamic of cumulative advantage has been observed in contexts ranging from educational attainment \citep{Bloome2018} to career success \citep{Nelson_Wilmers_Zhang_2025, seibert2001CareerSuccess}.

While our model is, by design, extremely parsimonious, it is able to capture the essential dynamics. We confirm this by conducting simulation experiments in a richer environment model, where ``merit scores'' are real-valued (rather than everyone just belonging to one of two types), as is the stochasticity arising from college advantage and inter-generational replication. These simulations confirm that the phenomena uncovered in our theoretical models are robust and not mere artifacts of simplifying assumptions. 

Our overall findings highlight limitations in static fairness frameworks commonly employed today, underscoring the need to integrate temporal and stochastic considerations into fairness evaluations \citep{Liu2018Delayed, Mouzannar2019, xu2024equal}. Our results carry implications for algorithm designers, policymakers, and educational institutions, emphasizing the necessity of proactive, dynamic interventions to prevent minor, early-stage disparities from crystallizing into lasting inequalities.

\paragraph{Paper Structure} The paper structure is as follows:
\begin{enumerate}[label=\alph*)]
    \item In Section~\ref{sec:model}, we introduce our stylized framework of fair and meritocratic selection into a scarce program (henceforth ``college'' for expositional purposes). 
    In our framework, college is a scarce resource that provides all individuals who attend a high chance of upward social mobility. For individuals who cannot attend college, we first consider our \textbf{Equal Advantage} (EA) model---where no group has any preferential advantage---, and then the \textbf{Affinity Advantage} (AA) model---where the group with more high-type individuals gets a small affinity advantage due to network effects. Importantly, our model is parsimonious \textit{by design}, with the goal of isolating the role of randomness and network effects as drivers of long-term unfairness in the simplest possible model. 
    \item In Section~\ref{sec:results}, we analyze both short-term and long-term behaviors of the system under Equal Advantage. While groups achieve parity in the long term \textit{in an average sense}, in the short term, randomness spontaneously drives the groups to separate before the separation dissipates again. Further, and importantly, this happens \textit{cyclically}. We then move on to the Affinity Advantage model and show that groups can actually achieve persistent separation even in the long-term when one group has a very small affinity advantage. Our results highlight that unfairness can and will arise even in the simplest settings with initially symmetric populations and fair decisions at every step. These forces only get exacerbated in the real world, amplifying the effects our model predicts. 
    \item Finally, in Section~\ref{sec:full_model} we describe our richer simulation model that relaxes the binary assumptions from our theoretical framework in Sections~\ref{sec:model} and ~\ref{sec:results}. Through extensive numerical experiments on this simulation model, we demonstrate the robustness of our insights about small affinity advantages driving long-term group disparities. 
\end{enumerate}

Our work is closely aligned with the literature on dynamic fairness and inter-generational feedback loops (we include a detailed review of this literature in Section~\ref{sec:background}). Unlike this literature which \textit{always} assumes that unfairness arises due to some form of initial disparities between groups (distributions of qualifications; update rules after going through college; costs to take actions; etc.) or unfair decision making, \textit{ours} is the first work to formalize how simply \textit{randomness} across initially symmetric populations can still lead to repeated reappearance of unfairness, \textit{even if the decisions made at each round are fair} (as in our EA model). This has further implications: reaching demographic parity is not enough for long-term fairness, as unfairness can and will always re-appear—a significant departure from the typical view that there is an “end goal” to fairness. We also consider simple yet novel network effects (in our AA model) that were not previously studied in the context of fairness, and highlight how they amplify unfair outcomes in the long term.

\section{Background and related work}\label{sec:background}

The proliferation of algorithmic decision-making has spurred a vast body of research on fairness \citep{Dwork_Hardt_Pitassi_Reingold_Zemel_2012, friedlerImpossibility2021, Kleinberg2016Inherent, barocas2023FairnessandML, Barocas_Selbst_2016, crawford2019ainow}. Much of this foundational work has focused on developing and auditing static fairness metrics, which assess outcomes at a single point in time. These metrics range from group-level constraints like demographic parity and equality of opportunity \citep{hardt2016equality} to individual-level requirements that similar individuals be treated similarly \citep{Dwork_Hardt_Pitassi_Reingold_Zemel_2012}. While crucial, these static snapshots often fail to capture the long-term effects of decisions \citep{static_notions, corbett2018measure}. 

\paragraph{Meritocratic Notions of Fairness.} Our work adopts a \emph{meritocratic} selection principle---a common approach in both policy and fairness literature where individuals with higher measured qualifications are prioritized \citep{elster1992local, joseph2016fair}. Though intuitively appealing, the notion of meritocracy has been critically examined. Studies show that when merit is defined by flawed or biased proxies---such as standardized test scores for college admission or healthcare costs for medical need—meritocratic systems can reinforce existing inequalities rather than remedy them \citep{McNamee_Miller_2009, Castilla2010Paradox, obermeyer2019dissecting, Chetty_Deming_Friedman_2023}. While this prior work critiques the implementation of meritocracy, we use it as a starting point to investigate its downstream consequences when qualifications are measured perfectly. The main novelty of our work on this front is to understand how static meritocratic notions of fairness applied at each time step impact fairness in the long-term, and to argue that static interventions may not be sufficient in practice.
 
Our work is closely related to \citet{zhang2020LongTerm}, who analyze long-run group qualification under threshold policies within a POMDP framework and study how imposing static group or opportunity-based fairness constraints (e.g., demographic parity or equalized odds) affects equilibrium outcomes. In contrast, we (i) assume individually fair, meritocratic selection without statistical constraints, (ii) model scarcity via a fixed capacity $\alpha$, and (iii) show how disparities can \emph{emerge from symmetry} due solely to stochasticity and then become entrenched through a minimal, endogenous affinity advantage.

\paragraph{Long-term Dynamics and Fairness.} Because of such limitations of static analysis, there has been a growing focus on dynamic fairness and feedback loops, where the outcomes of today's decisions influence the inputs of tomorrow. The foundational work of \citet{Liu2018Delayed} showed that enforcing a static fairness metric can paradoxically harm a group's long-term outcomes. This insight has fueled research into models of runaway feedback in predictive policing \citep{Ensign2018Policing}, cumulative advantage in social systems \citep{Merton_1988, Stanovich_1986}, and inter-generational mobility in resource allocation \citep{HeidariKleinberg2021Intergenerational, acharya2023wealth, HuChen2018, Mouzannar2019}. 
Most of this literature explores policy interventions in both the short- and long-terms (such as wealth redistribution, affirmative action etc) to counteract dynamics that amplify disparities, often with varying fairness goals~\cite{rateike2024designing,puranik2024long,zezulka2024fair}. In contrast, we study a different problem: identifying new factors (randomness and network effects) that can drive long-term unfairness even when decisions in the short term are fair, efficient and meritocratic.

\paragraph{Performative Prediction.} Our work is also related to the idea of performative prediction 
\citep{PerdomoPerformative2020, NarangPerformative2023, ShavitCausal2020, YahavGamingHelps2021, BrownHodKalemajPerformative2022, MillerOutsideEcho2021, ZrnicWhoLeads2021} (to only name a few references), i.e., when the model or decision rule deployed by a decision-maker (or multiple decision-makers playing a game against each other) at a given time step affects the outcomes and the data collected at this time step, inducing distribution shifts that can potentially alter dynamics in the long-term. Such performative prediction, sometimes due to strategic considerations, is known to have important implications for fairness 
\citep{MilliSocialCost2019,  HuDisparateEffects2019, SomerstepFiarnessPerformativePolicy2024, mishler2022performative, ZezulkaGenin2023, XieZhang2024, Bechavod2022InformationDiscrepancy, Avasarala_Wang_Ziani_2025}.

\paragraph{Randomness and Stochasticity in Population Dynamics.} Recently, the role of stochasticity has emerged as a critical factor in fairness. Research has shown that randomness inherent in machine learning pipelines—such as from data sampling or model initialization---can lead to significant and unpredictable variations in fairness outcomes for different groups \citep{ganesh2023variance}. Other work has explored the fundamental role of luck versus talent in achieving success \citep{talent_v_luck} and has analyzed the use of intentional randomness, like lotteries, in allocation systems \citep{Jain2024, Baker_Bastedo_2022, lee2024admissions}, showing that its effects on equity are complex and context-dependent.

Our work synthesizes and extends these threads. 
While prior research on dynamic fairness, like that of~\citet{baek2023feedback}, often assumes pre-existing asymmetries between groups (in terms of one or more of the following factors : i) distribution of types, ii) access to opportunities, and iii) evaluation), the main novelty of our work is that we investigate how inequality can emerge and persist between groups \emph{that are initially identical in all aspects}.
We build a unified framework to analyze the joint effects of meritocratic selection, feedback loops, and scarcity. Our primary contribution is to isolate the role of pure stochasticity as a mechanism for generating initial, temporary disparities, and then to demonstrate how these random fluctuations can be captured and permanently entrenched by even a minimal network-style ``affinity advantage''. By doing so, we formalize a pathway by which transient luck can crystallize into permanent social stratification.

\section{Model}\label{sec:model}

We consider a population of size $(2N)$ divided into two groups $A$ and $B$, each with exactly $N$ individuals. Each individual can be one of two types: \textit{low type} or \textit{high type}. The type of the individual is a proxy for their potential to be successful (for example, success can be measured in terms of one's `earning potential'). The population has access to a resource, which we call \textit{college}, which provides individuals with upward social mobility. However, college is a scarce resource and only a fraction\footnote{Our objective is to understand the effects of meritocratic selection to scarce resources like college, so we restrict college capacity $\alpha$ to be less than $\frac{1}{2}$ of the total population.} $\alpha$ ($0 < \alpha < \frac{1}{2}$) of the total population who meet the \textit{college selection rule} can avail it. The college capacity is denoted by $C$ where $C = (2N) \alpha$. Our goal is to understand how preferential access to college and other social dynamics affect the evolution of `successful' or \textit{high-type} individuals across different groups in a discrete-time setting. At each time step $t \in \{0,1,2,...\}$, we measure the fraction of high-type individuals $X_A(t)$ and $X_B(t)$ in groups $A$ and $B$ respectively. Clearly, $0 \leq X_A(t), X_B(t) \leq 1$ for all $t \geq 0$. We also define $X(t) = X_A(t) + X_B(t)$. Note that $(X_A(t), X_B(t))$ completely characterizes the state of the system at time $t$.  

\begin{remark}
We deliberately focus on the special case with equal group sizes for simplicity of exposition and to isolate randomness and network effects as drivers of long-term unfairness, separating from population size effects. However, all our results extend to the setting where group sizes are different. See Appendix~\ref{app:unequal} for details.
\end{remark}

\subsection*{Selection Rule} The mix of students admitted to college at any time step depends on the state of the system in the previous step. For $i \in \{A, B\}$, let $A_i(t+1)$ be the number of students from group $i$ admitted to college at time $t+1$. We want our college selection rule\footnote{We use the terms selection rule and allocation rule interchangeably.} to be fully deterministic --- this is, in many ways, the \emph{best case} from the perspective of fairness because it helps to isolate all uncertainty in the system only to outcomes. But, what constitutes a \textit{good} allocation rule?
\begin{des}
A \textit{good} allocation rule must satisfy all the following properties: 
\begin{itemize}
    \item it must be \textbf{meritocratic}, i.e., it always prioritizes a high type over a low type candidate for an available position, as long as any high type candidates remain to be allocated;
    \item it must be \textbf{fair}, in that it achieves equalized selection rates across groups for all candidate types (this is equivalent to the fairness notion of Equalized Odds introduced by ~\citet{hardt2016equality}); and
    \item finally, it must be \textbf{efficient}, i.e., all available seats are allocated. 
\end{itemize}
\end{des}
We now describe our college allocation rule $\mathcal{A}$. We deliberately make a distinction between scenarios when college capacity is scarce (not all high types can go to college) versus when capacity is abundant (all high types in the population can go to college) because it slightly alters the allocation as we see below:

\begin{defn}[Allocation Rule $\mathcal{A}$]\label{def:allocation_rule}
When capacity is limited (\textit{over-subscribed regime}) with $NX(t) \geq C$, no low type individuals in the population get admitted in round $t+1$. All available seats are allocated exclusively to high type individuals, with each group receiving seats proportional to the fraction of high types in their respective populations in round $t$. Therefore, the total number of admits for each group $i$ is given by: 
\[
     A_i(t+1) = \frac{X_i(t)}{X(t)}\cdot (2N\alpha) \quad \forall~i. 
\]
However, when capacity is abundant (\textit{under-subscribed regime}) with $NX(t) < C$, the college first admits all high type individuals from both groups. The residual capacity is then filled by low type individuals, with each group receiving seats proportional to the fraction of low types in their respective populations in round $t$. In this case, the total number of admits for each group $i$ is given by: 
\[
     A_i(t+1) = NX_i(t) + \frac{1-X_i(t)}{2-X(t)}\cdot \left(C - NX(t)\right) \quad \forall~i.
\]
\end{defn}

\noindent 
Our first result shows that not only is our allocation rule natural, it is, in fact, the \emph{only} rule satisfying our requirements for what it means for a deterministic rule to be a good one in this setting. 
 \begin{thm}\label{thm:allocation}
Allocation rule $\mathcal{A}$ is the only deterministic allocation rule which is simultaneously meritocratic, fair, and efficient. 
\end{thm}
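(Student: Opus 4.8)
The plan is to turn the three desiderata into linear constraints on the admitted counts and then show that, in each capacity regime, these constraints have a unique solution that coincides with $\mathcal{A}$ (which simultaneously establishes that $\mathcal{A}$ is feasible and that it is the only feasible rule). First I would fix a state $(X_A, X_B)$ and write $H_i = N X_i$ and $L_i = N(1 - X_i)$ for the numbers of high- and low-type candidates in group $i$. I would describe an arbitrary deterministic rule by how many high types $h_i \in [0, H_i]$ and low types $\ell_i \in [0, L_i]$ it admits from each group, so that $A_i = h_i + \ell_i$; these are treated as real-valued, consistent with the fractional/mean-field reading of the model. Efficiency becomes $h_A + \ell_A + h_B + \ell_B = C$ (always achievable, since $C = 2N\alpha < N < 2N$ candidates are available). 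Fairness (equalized odds, with `high'/`low' playing the role of the true label and admission the role of the positive prediction) becomes equality of the group-wise high-type rates and of the group-wise low-type rates; I would record this in the cross-multiplied form $h_i = \lambda H_i$ and $\ell_i = \mu L_i$ for common $\lambda, \mu \in [0,1]$, which also covers degenerate groups with $H_i = 0$ or $L_i = 0$. Finally, meritocracy becomes the implication that no low type is admitted while any high type remains unadmitted.

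Next I would split on the comparison between total high-type demand $NX = H_A + H_B$ and capacity $C$, matching each displayed formula to the regime in which it is feasible (the two formulas agree on the boundary $NX = C$, so there is no ambiguity there). In the regime $NX \ge C$, efficiency fills all $C$ seats, and since admitting any low type would require rejecting a high type (there being at least $C$ of them), meritocracy forces $\ell_A = \ell_B = 0$; then $h_A + h_B = C$ together with $h_i = \lambda H_i$ yields $\lambda = C/(NX)$ and hence $A_i = h_i = \frac{X_i}{X}\, 2N\alpha$, the first formula. In the regime $NX < C$, efficiency together with meritocracy forces every high type to be admitted, $h_i = H_i$ (efficiency must admit some low types, and meritocracy forbids doing so while any high type remains), after which $\ell_A + \ell_B = C - NX$ and $\ell_i = \mu L_i$ give $\mu = (C - NX)/(N(2 - X))$ and hence $A_i = N X_i + \frac{1 - X_i}{2 - X}(C - NX)$, the second formula. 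In each regime the complementary type is admitted at an all-or-nothing rate across both groups, so its fairness constraint holds vacuously, and the regime condition guarantees $\lambda, \mu \le 1$, so $\mathcal{A}$ is itself feasible. Since the constraints pinned down a unique $(h_i, \ell_i)$, $\mathcal{A}$ is the unique meritocratic, fair, and efficient rule.

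The step I expect to be the crux is the careful formalization of meritocracy and the argument that, in combination with efficiency, it fixes \emph{which} types are admitted before fairness is ever invoked---ruling out rules that reject a high type in favor of an admitted low type while still equalizing the per-type selection rates. Once that structural step is in place, the remainder is a one-line linear solve for $\lambda$ and $\mu$. The only other care needed is for the degenerate cases ($H_i = 0$ or $L_i = 0$, and $X = 0$, where the division in the first formula never arises because that regime requires $NX \ge C > 0$), which the cross-multiplied fairness form handles uniformly.
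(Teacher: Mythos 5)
Your proposal is correct and follows essentially the same constructive route as the paper: in each regime, meritocracy plus efficiency first determine which types fill the seats, and then equalized per-type selection rates (your $\lambda,\mu$ parametrization is just the cross-multiplied form of the paper's ``addendo'' step) pin the allocation down uniquely to $\mathcal{A}$. Your explicit handling of the boundary case $NX=C$ and of degenerate groups is a small bonus over the paper's writeup, but the argument is the same.
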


\begin{proofsketch} 
The proof is constructive in nature. We impose the three desirable properties outlined above and try to construct compliant allocation rules. We find that there is actually a unique deterministic rule which satisfies all three properties simultaneously and it is our selection rule $\mathcal{A}$. For a detailed proof, please refer to Appendix~\ref{app:theory}. 
\end{proofsketch}
\begin{remark}
For unequal group population sizes, we can again use the same principles as above to reconstruct the unique allocation rule for that setting. The general form of the allocation rule is presented in Appendix~\ref{app:unequal}. Note that it reduces exactly to $\mathcal{A}$ for the equal group size setting. All of our main insights go through unchanged. 
\end{remark}

\subsection*{Type Transitions} At every time step $t$, whether an individual transitions to high type or low type in the next time step depends primarily on whether they get admitted to college or not. 

\paragraph{Admitted Students} Each individual that goes to college transitions to a high type (independently of others) in the next time step with probability $p$, irrespective of their starting type or group membership\footnote{To isolate disparities driven by randomness and feedback effects, we assume that the groups are intrinsically identical, so college affects them identically.}. Thus, $p$ can be perceived as the probability of success from going to college. 

\paragraph{Rejected Students} The students who do not go to college can transition according to one of two models. The first model is called the ``Equal Advantage'' model, and considers situations where being part of the majority high-type group does not confer advantages to low-type agents: 

\begin{model}[\textbf{Equal Advantage}]
In the first model, any high type individual (irrespective of their group membership) who does not go to college, can still retain their type, albeit with a much smaller probability $q$ ($q \ll p$) \footnote{In our model, college provides a significant advantage, making it the strictly preferred option for all individuals, regardless of type.}.
However, any low type individual (across both groups) who does not go to college remains a low type with probability $1$. We call this the \textit{Equal Advantage} model because both groups face exactly identical dynamics over generations with no preferential treatment at any juncture whatsoever. 
\end{model}

In our second model, called the ``Affinity Advantage'' model, we introduce a new modeling element: namely, all individuals that do \emph{not} go to college benefit from being in a group with a larger number of high types. This may be due to network efforts, where for example, individuals of a given group prefer hiring individuals from the same group, or where low-type individuals directly benefit from connections with and exposure to high types: 

\begin{model}[\textbf{Affinity Advantage}]
This model retains all the characteristics of the first model, but there is a key distinction. If group $i$ has more high type individuals at time $t$, all group $i$ non-college-goers in the next round obtain an \textit{affinity advantage} $\epsilon$ (in their chance of transitioning to high type) which does not extend to individuals of the other group $j \neq i$. This means that group $i$ non-college-goers transition as follows: high types remain high types with probability $q + \epsilon$ and low types have a chance to become high types with probability $\epsilon$. In contrast, in group $j \neq i$ and among individuals rejected from college, high types retain their types only with probability $q$ while low types remain low types with probability $1$ (like in the first model). The affinity advantage can be perceived as a network effect --- individuals who belong to the more successful group have a higher chance of success even if they do not end up in college. Such network effects have been studied extensively in the social science literature \citep{Merton_1988, Stanovich_1986, Bloome2018, acharya2023wealth, pedulla2019race, wapman2022quantifying}. 
Two prominent real-world examples include: i) racial inequities perpetuated in labor market outcomes through network access (for example,~\citet{pedulla2019race} document how Black Americans, who have access to smaller networks facilitating employment, also apply and get hired at smaller rates); and ii) US universities hiring faculty mostly from a few elite universities due to network-based reputation effects~\cite{wapman2022quantifying}.
\end{model}

\section{Main Results}\label{sec:results}

Our goal is to compare how $X_A(t)$ and $X_B(t)$ evolve across the two transition models, particularly exploring if and when disparities can arise between the groups. Although $\{ \left(X_A(t), X_B(t) \right) \}$ is a stochastic process (which we will analyze in some detail), we will also consider an abstraction which will help us reason about the \textit{long-run average behavior} of the system. Before we present the main results of this section, let us introduce some additional notation and definitions.  

Given $(x, y) \in [0,1]^2$, define: 
\begin{align}\label{eqn:T}
      \mathcal{T}(x, y) = \left( \mathbb{E}_{|(x,y)} X_A(t+1), \mathbb{E}_{|(x,y)}X_B(t+1) \right),
\end{align}
where $\mathbb{E}_{|(x,y)} (\cdot) = \mathbb{E}[(\cdot)~|~X_A(t) = x, X_B(t) = y]$. $\mathcal{T}(\cdot)$ is a mapping that captures all the model-specific dynamics we have described earlier, but \textit{in expectation}. Given a starting point $(X_A(0), X_B(0))$, $\mathcal{T}(\cdot)$ thus defines a fully deterministic sequence of points $(x,y) \in [0,1]^2$. We can now define the \textit{fixed point} of the system as follows:

\begin{defn}[Fixed Point]\label{defn:fp}
$(x_A, x_B)$ is a fixed point of the system if and only if $(x_A, x_B)$ is a fixed point of $\mathcal{T}$, i.e.,
\[
          \mathcal{T}(x_A, x_B) = (x_A, x_B),
\]
where $\mathcal{T}(\cdot)$ is defined as in Equation~\eqref{eqn:T}. 
\end{defn}
A fixed point $(x_A, x_B)$ of the system can be interpreted as the \textit{expected fraction of high type individuals} in groups $A$ and $B$ once the system has \textit{stabilized}. We study these fixed points in greater detail for the EA and AA models.

\subsection{Equal Advantage}\label{sub:equal_adv}
We first consider the \textbf{Equal Advantage} transition model. In this case, we can derive the analytical form of $\mathcal{T}(\cdot)$ in closed form for both over-subscribed and under-subscribed regimes. 
This implies, by Definition~\ref{defn:fp}, that $(x_A, x_B)$ is a fixed point of the system:
\begin{itemize}
    \item under the \textit{over-subscribed regime} if and only if: 
    \begin{align*}
        x_A &= \frac{x_A(2\alpha)}{x_A + x_B} \cdot p + \left(x_A - \frac{x_A (2\alpha)}{x_A+x_B} \right)\cdot q,\\
        x_B &= \frac{x_B(2\alpha)}{x_A + x_B}\cdot p + \left(x_B - \frac{x_B (2\alpha)}{x_A+x_B} \right)\cdot q.
    \end{align*}
    \item under the \textit{under-subscribed regime} if and only if:
    \begin{align*}
        x_A &= x_A \cdot p + \frac{(1-x_A)(2\alpha - x_A - x_B)}{(2-x_A-x_B)} \cdot p, \\
        x_B &= x_B \cdot p + \frac{(1-x_B)(2\alpha - x_A - x_B)}{(2-x_A-x_B)} \cdot p.
    \end{align*}
\end{itemize}

The main results of this segment are organized as follows: we first present Theorem~\ref{thm:equal_adv} which highlights the long-term behavior of the system in an average sense using fixed points. Although the groups achieve parity in the long term, in the short term, the system is found to behave differently --- in particular, randomness can cause the groups to spontaneously separate (Figure~\ref{fig:separation_N}) before the dynamics inevitably force a return to parity (Theorem~\ref{thm:time_parity}). We also characterize how the degree of separation achieved and the time to achieve parity vary depending on the key system parameters.

\begin{thm}\label{thm:equal_adv}
Under the \textbf{Equal Advantage} model, the system has a unique fixed point $(x_A, x_B)$ which occurs in the under-subscribed regime (i.e., $x_A + x_B < 2\alpha$) and can always be reached from any starting point $(x, y) \in [0,1]^2$. Further, at the fixed point, we have $x_A = x_B = \alpha p$ implying that both groups achieve parity. 
\end{thm}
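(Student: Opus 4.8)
The plan is to establish the four sub-claims in the statement in order: that $(\alpha p, \alpha p)$ is a fixed point, that it lies in the under-subscribed regime, that it is the \emph{only} fixed point, and that the deterministic iteration of $\mathcal{T}$ converges to it from every starting point $(x,y) \in [0,1]^2$.

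First I would verify existence and location by direct substitution. Plugging $x_A = x_B = \alpha p$ into the under-subscribed fixed-point equations and simplifying the admitted-low-type term via $\tfrac{(1-x)(2\alpha - 2x)}{2-2x} = \alpha - x$ collapses the right-hand side to $xp + (\alpha - x)p = \alpha p$, confirming it is a fixed point. Since $p < 1$, the sum $2\alpha p < 2\alpha$, so the point genuinely sits in the under-subscribed regime, as claimed. Next, for uniqueness, I would first rule out over-subscribed fixed points: dividing the (nonzero) coordinate out of each over-subscribed equation yields the single scalar condition $1 = q + \tfrac{2\alpha}{s}(p - q)$ with $s = x_A + x_B$, which forces $s = \tfrac{2\alpha(p-q)}{1-q} < 2\alpha$ whenever $p < 1$, contradicting the defining inequality $s > 2\alpha$ of that regime; the coordinate-zero cases reduce to the same scalar condition and also fail. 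Within the under-subscribed regime, subtracting the two fixed-point equations gives $(x_A - x_B)\bigl(1 - p + \tfrac{(2\alpha - s)p}{2-s}\bigr) = 0$, and since $s \le 2\alpha < 1$ the bracket is at least $1-p > 0$, forcing $x_A = x_B$; the remaining scalar equation then pins down $x_A = x_B = \alpha p$. This yields a unique fixed point overall and establishes parity.

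For global convergence, which I expect to be the crux, I would pass to the coordinates $s = x_A + x_B$ and $d = x_A - x_B$, where two identities do most of the work. In the over-subscribed regime both coordinates scale by the same factor $\beta = q + \tfrac{2\alpha}{s}(p-q) \in (q, p)$, so $s_{t+1} = \beta s_t < p\, s_t$ decays geometrically and the orbit must enter the under-subscribed regime after finitely many steps. A short computation then shows a single under-subscribed step collapses the sum exactly: $s_{t+1} = p s_t + (2\alpha - s_t)p = 2\alpha p$, after which $s$ remains at $2\alpha p$ forever (since $2\alpha p < 2\alpha$). Meanwhile the difference contracts in both regimes, with $d_{t+1} = \beta\, d_t$ when over-subscribed and $d_{t+1} = d_t\, p\,\tfrac{2(1-\alpha)}{2-s_t}$ when under-subscribed, each factor bounded above by $p < 1$, so $d_t \to 0$ geometrically. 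Recovering $x_A = (s+d)/2$ and $x_B = (s-d)/2$ then gives $(x_A(t), x_B(t)) \to (\alpha p, \alpha p)$ from any start.

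The main obstacle is organizing this piecewise convergence cleanly, in particular ruling out the orbit lingering in or oscillating between the two regimes; this is resolved by the geometric decay of $s$ in the over-subscribed regime together with the one-step reset of $s$ to $2\alpha p$ in the under-subscribed regime. The only genuinely delicate point is the boundary case $p = 1$, where several contraction factors degrade to $1$ and the fixed point slides to $s = 2\alpha$; I would handle this separately by a direct argument, or note that it is excluded by the modeling assumption $q \ll p < 1$.
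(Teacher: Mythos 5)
Your proof is correct, and it follows the same overall skeleton as the paper's: derive the unique under-subscribed fixed point $(\alpha p,\alpha p)$ from the sum of the two equations, show the sum $s_t=x_A(t)+x_B(t)$ must exit the over-subscribed regime in finite time, observe that a single under-subscribed step forces $s_{t+1}=2\alpha p$ exactly (so the orbit is trapped there), and then drive the individual coordinates to $\alpha p$. You depart from the paper in two worthwhile ways. First, you prove uniqueness directly: you rule out over-subscribed fixed points via the scalar condition $1=q+\tfrac{2\alpha}{s}(p-q)$, which forces $s=\tfrac{2\alpha(p-q)}{1-q}<2\alpha$, and you kill asymmetric under-subscribed fixed points by subtracting the two equations to get $(x_A-x_B)\bigl(1-p+\tfrac{(2\alpha-s)p}{2-s}\bigr)=0$. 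The paper instead obtains uniqueness only as a corollary of global convergence, so your argument is more self-contained. Second, for the final convergence step the paper argues monotone convergence of each coordinate separately once $s=2\alpha p$, whereas you track the difference $d_t=x_A(t)-x_B(t)$ and show it contracts by a factor of at most $p$ in both regimes (by $\beta=q+\tfrac{2\alpha}{s}(p-q)\le p$ when over-subscribed and by $p\,\tfrac{2(1-\alpha)}{2-s}\le p$ when under-subscribed); combined with $s_t\equiv 2\alpha p$ this gives geometric convergence, which is cleaner and also quantitative. Two trivial touch-ups: $\beta=p$ exactly when $s=2\alpha$, so your interval should be $(q,p]$ (the bound $s_{t+1}\le p\,s_t$ you actually use is unaffected), and your caveat about $p=1$ is consistent with the paper, whose own proof also assumes $0<q<p<1$.
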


For a detailed proof of the above result, please see Appendix~\ref{app:theory}. Theorem \ref{thm:equal_adv} shows that under the Equal Advantage model, there is no disparate impact on either group in the long-term and both groups eventually reach parity. We demonstrate examples of this kind of behavior in Figure~\ref{fig:ea_parity} for different starting points of the system (both over-subscribed and under-subscribed regimes). This looks promising from a fairness perspective; but what happens in the short-term when there is stochasticity? We study this question in the next segment, showing that \textit{stochasticity can actually cause the system to behave differently on shorter time-scales}. \\

\begin{figure*}[ht]
  \centering
    \includegraphics[width=0.75\textwidth]{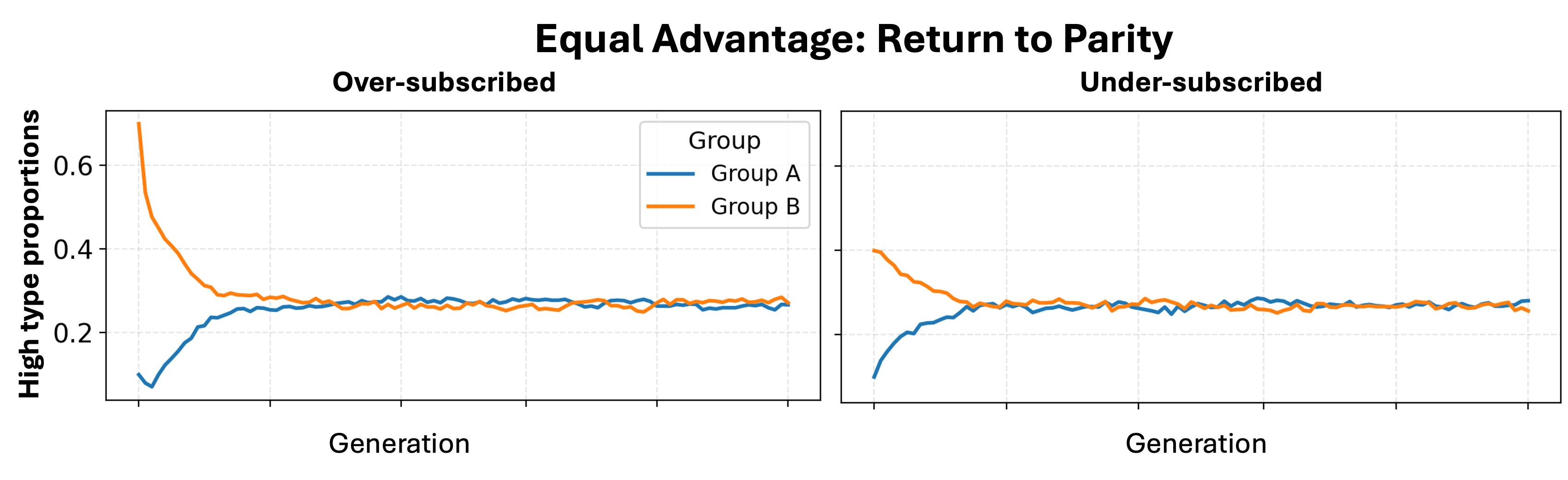}
    \caption{Sample trajectories of the fraction of high-types in each group over time. Left (over-subscribed, $(X_A(0),X_B(0))=(0.1,0.7)$) and right (under-subscribed, $(X_A(0),X_B(0))=(0.1,0.4)$) panels illustrate different starting points for the system. Parameter combination for runs: $p=0.90$, $q=0.40$, $\alpha=0.30$. In both cases, the system converges to $x_A=x_B=\alpha p=0.27$, which implies each group occupies half the college seats in the long run.}
    \label{fig:ea_parity}
\end{figure*}

\textbf{Separation.} Although Theorem~\ref{thm:equal_adv} shows that the system must reach parity on average in the long-run, it will not remain at parity throughout. Rather, stochasticity can sometimes drive the system to separate to some degree and then return to parity, before separating again. We measure separation between the groups at time $t$ as the absolute gap in the fraction of high-type individuals between the groups given by $|\Delta(t)| = |X_A(t) - X_B(t)|$. 

When it comes to separation, we are interested in how the extent of separation depends on our system parameters. In fact, the primary object of interest is the dependence on the population size $N$. When $N$ is small, stochasticity can drive large separation between groups in the short-term despite the initial populations being fully identical. Unsurprisingly, as $N$ grows large, stochasticity gradually becomes insignificant to the behavior of aggregate population statistics. We see this phenomenon clearly in Figure~\ref{fig:separation_N}.

\begin{figure*}[ht]
    \centering
    \includegraphics[width=0.65\textwidth]{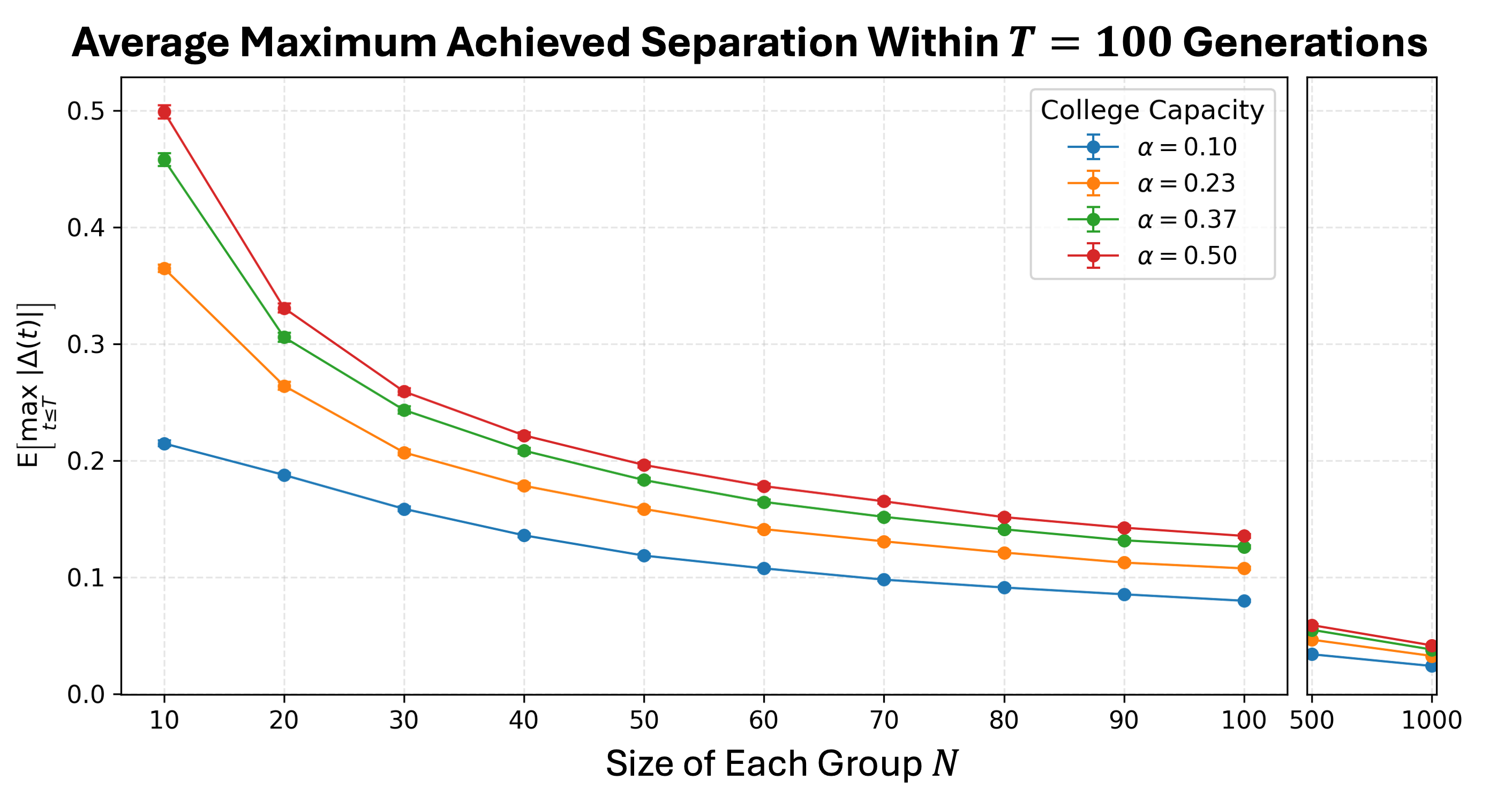}
    \caption{Average maximum separation $\mathbb{E}[\max_{t\le T}|\Delta(t)|]$ achieved versus population size $N$ for different capacity parameters $\alpha$ ($T=100, p=0.90, q=0.40$). The left panel shows the trend for smaller $N$ ($10 \le N \le 100$). The right panel highlights that the trend continues for large $N$ --- the expected maximum level of separation achieved over $T$ steps continues to shrink and approaches zero. At low capacities, the extent of separation is significant even for moderate $N$. }
    \label{fig:separation_N}
\end{figure*}

Further, note that separation is only possible to the extent that there can be high types in the total population, which is largely controlled by $\alpha$. Therefore, the extent of separation always has to be contextualized in terms of the allocation capacity parameter $\alpha$. One useful metric is the extent of separation normalized by $2\alpha$. At low capacities, this ratio can be very high which signals large disparities between the groups arising purely out of stochasticity. For example, at $\alpha = 0.1$, while small $N~(\approx 10)$ can lead to extreme disparity (the ratio approaches $1$ indicating that all high types in the population come exclusively from one group). Even for moderately large $N$ ($\approx 100$), we still observe significant disparities (ratio evaluates close to $0.5$). \textit{Thus, limited capacity can exacerbate the effect of stochasticity in driving differences between groups.} However, the separation effect diminishes with increasing capacity (for example, at $\alpha = 0.5$, the ratio evaluates to less than $0.1$), as increased capacity makes it easier for high types from both groups to get into college.    

\begin{remark}[Applications to real-world scenarios]
Our main running example is university admissions, where populations are typically large and stochasticity \emph{alone} plays a limited role in promoting unfairness. However, there are many practical settings where $N$ is relatively small and chance events can have outsized, lasting effects. Examples include:
\begin{itemize}
    \item Employees competing for promotion. Early random successes (such as securing a major grant) can drive large differences in career trajectories by enabling faster growth, greater visibility, and compounding advantages;
    \item Prestigious fellowships or internships with only a handful of recipients. Small early differences can yield disproportionate opportunities and long-term advantages in the job market;
    \item Age effects in youth athletics. Older children within an age group receive more attention and coaching simply because of earlier physical development, leading to greater success and chances of turning professional.\\
\end{itemize}
\end{remark}

\textbf{Return to Parity.} As per Theorem~\ref{thm:equal_adv}, the system eventually returns to parity, irrespective of how far apart the groups start. But, \textit{given that the system starts off at some (potentially large) separation $|\Delta(0)|$, how long would it take for the system to achieve parity ? }

Our next result answers this question, demonstrating how the time to reach parity depends on system parameters. We start by defining the notion of $\eta$-parity for $\eta > 0$:
\begin{defn}[$\eta$-parity]
For $\eta > 0$, we say that the system is at $\eta$-parity at time $t$ if the separation between groups at time $t$ does not exceed $\eta$, i.e., $|\Delta(t)| \leq \eta$. 
\end{defn}

\begin{thm}\label{thm:time_parity}
Given an initial separation $|\Delta(0)|$ and target $\eta > 0$ with $|\Delta(0)| > \eta$, the system will reach $\eta$-parity with high probability by time $T_{\eta}$. I.e., for $t \geq T_{\eta}$, we have that $\mathbb{P}\left[~|\Delta(t)| \leq \eta \right] \geq 1-\omega$ where 
\[
    T_{\eta} = \left \lceil  \log\left(  \frac{\omega - \frac{f(\alpha,p,q)}{N\eta^2 (1-p)} }{ \frac{|\Delta(0)|}{\eta} - \frac{f(\alpha,p,q)}{N\eta^2 (1-p)} }   \right)/ \log(p)     \right \rceil - 1,
\]
and $f(\alpha, p, q) = \alpha(1-p) + \frac{(1-\alpha)q(1-q)}{p}$. 
\end{thm}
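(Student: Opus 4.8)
The plan is to track the signed gap $\Delta(t) = X_A(t) - X_B(t)$ and show that it obeys a contractive stochastic recursion: conditioned on the current state, its mean shrinks by a factor of at most $p$ each step, while its variance is only $O(1/N)$. A second-moment (Chebyshev) argument then turns this into the stated tail bound, and inverting that bound in $t$ produces the closed form for $T_\eta$.

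First I would establish the \emph{drift}. Conditioning on $\mathcal{F}_t = (X_A(t), X_B(t))$ and using the closed forms of $\mathcal{T}$ from Section~\ref{sub:equal_adv}, I compute $\mathbb{E}[\Delta(t+1)\mid\mathcal{F}_t]$ in each capacity regime. In the regime where high types are scarce relative to capacity ($X(t) < 2\alpha$) the multiplier works out to $\frac{2p(1-\alpha)}{2-X(t)}$, and since $X(t) < 2\alpha$ forces $2-X(t) > 2(1-\alpha)$, this is at most $p$. In the scarce-capacity regime ($X(t) \ge 2\alpha$) the multiplier is $q + \frac{2\alpha(p-q)}{X(t)}$, which, because $X(t)\ge 2\alpha$ and $p>q$, lies in $(q,p]$. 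Hence uniformly $|\mathbb{E}[\Delta(t+1)\mid\mathcal{F}_t]| \le p\,|\Delta(t)|$. This is the single most important ingredient and also the main obstacle: one must verify that the \emph{same} contraction constant $p$ governs \emph{both} regimes (the scarce-capacity case is the delicate one), so that the bound holds no matter which regime the trajectory visits on its way back to parity guaranteed by Theorem~\ref{thm:equal_adv}.

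Next I would bound the \emph{diffusion}. Given $\mathcal{F}_t$ the two groups evolve independently, and within a group the next high-type count is a sum of independent Bernoulli transitions (admitted students flip high w.p.\ $p$, rejected high types stay high w.p.\ $q$, with the deterministic rule $\mathcal{A}$ fixing how many individuals fall into each category). Summing the per-individual variances across both groups and maximizing over the state gives $\mathrm{Var}(\Delta(t+1)\mid\mathcal{F}_t) \le \frac{2p\,f(\alpha,p,q)}{N}$ with $f(\alpha,p,q)=\alpha(1-p)+\frac{(1-\alpha)q(1-q)}{p}$, where the two summands in $f$ are exactly the worst-case contributions of the $p$-flips and the $q$-flips.

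Combining the two pieces by decomposing the second moment into conditional-mean-squared plus conditional variance yields the scalar recursion $m_{t+1} \le p^2 m_t + \frac{2pf}{N}$ for $m_t := \mathbb{E}[\Delta(t)^2]$. Unrolling this geometric recursion from $m_0 = \Delta(0)^2$ and simplifying the constant (using $\frac{2p}{1+p}\le 1$ to replace the $(1-p^2)$ denominator by $(1-p)$) bounds $m_t$ by a transient decaying geometrically in $t$ plus a stochastic floor $\frac{f}{N(1-p)}$. Applying Chebyshev, $\mathbb{P}[|\Delta(t)|>\eta] \le m_t/\eta^2$, then rearranging so the right-hand side equals $\omega$ and taking logarithms (dividing by $\log p < 0$ reverses the inequality and introduces the ceiling and the $1/\log p$ factor) gives $T_\eta$; monotonicity of the bound in $t$ extends it to all $t\ge T_\eta$. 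I expect the only remaining work to be bookkeeping: matching the transient and floor to the precise constants $\frac{|\Delta(0)|}{\eta}$ and $\frac{f}{N\eta^2(1-p)}$ appearing in the statement (in the relevant range of $t$ the squared transient is dominated by its linear counterpart) and handling the off-by-one in the integer threshold.
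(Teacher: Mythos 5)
Your drift and diffusion calculations match the paper's exactly: the one-step conditional mean of $\Delta(t+1)$ contracts by the factor $\frac{2p(1-\alpha)}{2-X(t)} \le p$ in the under-subscribed regime and by $q + \frac{2\alpha(p-q)}{X(t)} \le p$ in the over-subscribed one, and your worst-case conditional variance $\frac{2p\,f(\alpha,p,q)}{N}$ is precisely what the paper obtains from the binomial transition counts. Where you genuinely diverge is the concentration step. The paper converts the mean-squared-plus-variance decomposition into a recursion on the \emph{first absolute moment}: it applies Jensen in the form $\mathbb{E}[|Z|]^2 \le \mathbb{E}[Z]^2 + \mathrm{Var}(Z)$, completes the square to get $\mathbb{E}\left[|\Delta(t+1)| \mid \mathcal{F}_t\right] < p|\Delta(t)| + \frac{f}{N\eta}$ (using $|\Delta(t)| > \eta$ to control a $1/|\Delta(t)|$ term), unrolls to $\mathbb{E}[|\Delta(t)|] < p^{t}\bigl(|\Delta(0)| - \tfrac{f}{N\eta(1-p)}\bigr) + \tfrac{f}{N\eta(1-p)}$, and finishes with Markov's inequality applied to $|\Delta(t)|$; this is what produces the exact constants in $T_\eta$. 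You instead run the recursion on the second moment, $m_{t+1} \le p^2 m_t + \frac{2pf}{N}$, and apply Chebyshev. This is sound --- arguably cleaner, since your recursion holds unconditionally, whereas the paper's substitution $1/|\Delta(t)| < 1/\eta$ is only valid pointwise on the event $|\Delta(t)|>\eta$ --- but the last step is not mere bookkeeping: your tail bound decays as $p^{2t}(|\Delta(0)|/\eta)^2$ rather than $p^{t}(|\Delta(0)|/\eta - \cdot)$, so solving for $t$ yields a threshold of the same logarithmic form but with different constants (typically smaller, since the squared transient decays twice as fast), not the literal $T_\eta$ of the statement. To recover the stated formula exactly you would need to pass back to the first absolute moment as the paper does; as written, your argument establishes a (slightly stronger) variant of the theorem rather than the theorem verbatim.
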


Detailed proof can be found in Appendix~\ref{app:theory}. It is important to note that the above expression is meaningful only when for any given $\eta > 0$ and failure probability $\omega > 0$, the population size $N$ is $\Omega(\omega^{-1}\eta^{-2})$. When $N$ is too small, the variability of the system is too high, which impedes reliable convergence behavior. Additionally, we have $\frac{|\Delta(0)|}{\eta} > 1 > \omega$. This means that the numerator term in $\left \lceil \cdot \right \rceil$ is negative which would imply that a higher value of $|\Delta(0)|$ increases the magnitude of the numerator (the negative sign is offset by the negative sign from $\log(p)$) and thus leads to a larger $T_{\eta}$. This is a sanity check---starting off at a higher separation, the system \emph{should} take longer to return to parity. 

Our expression of $T_{\eta}$ predicts that the time to reach parity should increase with $p$ (the ``efficacy of college'' in yielding high type outcomes). This is what we observe empirically in Figure~\ref{fig:parity_p}, Appendix~\ref{app:exp}. The mean number of generations to reach $\eta$-parity given some initial separation $|\Delta(0)|$ is found to increase sharply with the success probability $p$. The intuition is the following: as $p$ approaches $1$, college outcomes tend to become more deterministic. This implies that any separation that has already been achieved becomes harder to mitigate. On the other hand, while our expression of $T_{\eta}$ does depend on $q$ (it is increasing for $q \in [0, \frac{1}{2}]$), empirically we find that $q$ has a negligible effect on the mean time to return to parity (Figure~\ref{fig:parity_q}). We believe that this phenomenon is driven by the fact that $q$ affects the dynamics only for a limited time (until the system is in the over-subscribed regime, if at all). Since the fixed point is in the under-subscribed regime, the system has to reach this regime at some point, beyond which $q$ is irrelevant.

\subsection{Affinity Advantage}\label{sub:affinity_adv}
We now consider the \textbf{Affinity Advantage} (AA) transition model. To simplify the exposition, we set $q = 0$ (although note that $q > 0$ only amplifies the separation effects introduced by $\epsilon$, as per Appendix~\ref{app:exp}). We can show that $(x_A, x_B)$ is a fixed point of the system:
\begin{itemize}
    \item under the \textit{over-subscribed regime} if and only if: 
    \begin{align*}
        x_A = \frac{x_A(2\alpha)}{x_A + x_B} \cdot p + \left(1- \frac{x_A(2\alpha)}{x_A + x_B} \right)\cdot \epsilon, \quad
        x_B = \frac{x_B(2\alpha)}{x_A + x_B} \cdot p. 
    \end{align*}
    \item under the \textit{under-subscribed regime} if and only if:
    \begin{align*}
     &x_A =  x_A \cdot p + \frac{(1-x_A)(2\alpha - x_A-x_B)}{(2-x_A-x_B)}\cdot p + \frac{2(1-\alpha)(1-x_A)}{(2-x_A-x_B)}\cdot\epsilon,\\
     &x_B =  x_B \cdot p + \frac{(1-x_B)(2\alpha - x_A-x_B)}{(2-x_A-x_B)}\cdot p.
    \end{align*}
\end{itemize}
\noindent
The following result characterizes the fixed points of the system as a function of the affinity advantage parameter $\epsilon > 0$.

\begin{figure*}[ht]
  \centering
    \includegraphics[width=0.75\textwidth]{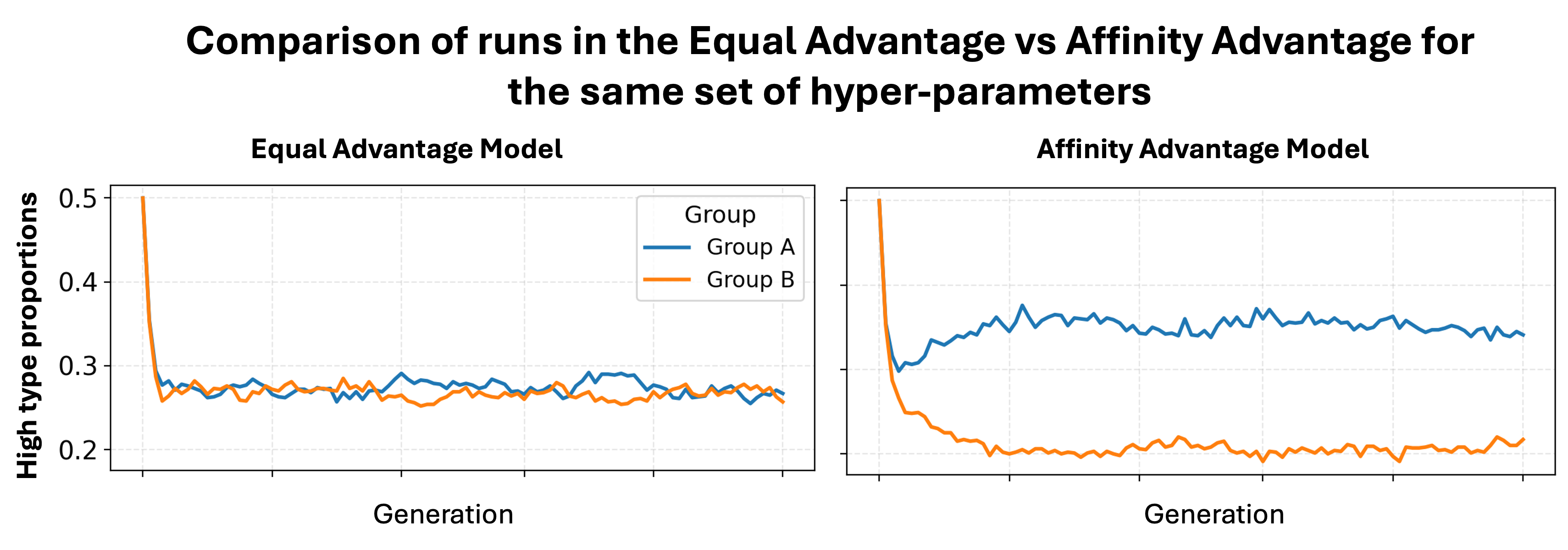}
    \caption{Sample trajectories showing the impact of adding a small affinity advantage ($\epsilon=0.03$): in the EA model (left), parity is restored, while in the AA model (right), the same advantage produces persistent separation under identical starting conditions.}
    \label{fig:ea_aa_comparison}
\end{figure*}

\begin{thm}\label{thm:affinity}
Let $\alpha < \frac{1}{2}$. Without loss of generality, we assume that group $A$ has the affinity advantage $\epsilon > 0$. There exists a threshold value of $\epsilon$ given by
%\[
    $\tilde{\epsilon} = \frac{2\alpha(1-p)}{(1-2\alpha)}$%, \quad \text{such that:}$
%\]
such that:
\begin{enumerate}[label=\alph*)]
    \item for all $\epsilon \geq \tilde{\epsilon}$, the system has a unique fixed point $(x_A, x_B)$. It occurs in the over-subscribed regime (i.e., $x_A+x_B \geq 2\alpha$). The fixed point can be reached from any starting point $(x,y) \in [0,1]^2$ with $x>y$. Further, at the fixed point, we have $x_A = 2\alpha(p-\epsilon) + \epsilon$ and $x_B = 0$; the system achieves strictly positive separation. 
    \item for all $\epsilon < \tilde{\epsilon}$, the system has a unique fixed point $(x_A, x_B)$. It occurs in the under-subscribed regime (i.e., $x_A+x_B < 2\alpha$) and can be reached from any starting point $(x,y) \in [0,1]^2$ with $x>y$. Further, at the fixed point, we have $x_A > 1-\min \left(\frac{2\alpha(1-p)}{\epsilon}, \frac{(1-\alpha p)^2}{2(1-\alpha)\epsilon} \right)$; the system achieves a separation strictly larger than
    \[
                  \max \left(0, 2(1-\alpha) - \min \left(\frac{4\alpha(1-p)}{\epsilon}, \frac{(1-\alpha p)^2}{(1-\alpha)\epsilon} \right) \right).
    \]
\end{enumerate}
\end{thm}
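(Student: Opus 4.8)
The plan is to first pin down the fixed points of the piecewise map $\mathcal{T}$ regime by regime, then show that the regime in which the unique fixed point lives flips exactly at $\tilde\epsilon$, and finally establish global convergence from any state with $x>y$. The single most useful structural fact, which I would prove first, is \emph{order preservation}: if $x_A>x_B$ then $\mathcal{T}$ keeps the first coordinate strictly larger than the second. A direct computation of $\Delta'=\mathcal{T}_A-\mathcal{T}_B$ gives, in the over-subscribed regime, $\Delta'=\frac{2\alpha p}{S}\Delta+\bigl(1-\frac{2\alpha x_A}{S}\bigr)\epsilon$ with $S=x_A+x_B\ge 2\alpha$, so both terms are nonnegative and $\Delta'>0$; in the under-subscribed regime the same algebra collapses to $\Delta'=\frac{2(1-\alpha)}{2-S}\bigl(p\Delta+(1-x_A)\epsilon\bigr)>0$. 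Hence $\{x>y\}$ is forward invariant, which justifies using the given $\epsilon$-asymmetric equations along the entire trajectory (group $A$ remains the leader and keeps the advantage throughout).

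For the fixed points, in the over-subscribed regime the $x_B$-equation forces either $x_B=0$ or $S=2\alpha p<2\alpha$; the latter is out of regime, so $x_B=0$, and then $x_A=2\alpha p+(1-2\alpha)\epsilon=2\alpha(p-\epsilon)+\epsilon$, which satisfies $S=x_A\ge 2\alpha$ exactly when $\epsilon\ge\tilde\epsilon$. For the under-subscribed regime I would eliminate a variable: the two stationarity equations solve explicitly for $x_A(S)$ and $x_B(S)$ as rational functions of $S$, reducing everything to the scalar equation $g(S):=x_A(S)+x_B(S)-S=0$ on $(0,2\alpha)$. One checks $g(0)>0$ and that $g(2\alpha)=\frac{\epsilon}{(1-p)+\epsilon}-2\alpha$, which is negative \emph{iff} $\epsilon<\tilde\epsilon$; monotonicity of $g$ then yields a unique root in the regime precisely in that range. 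This simultaneously proves uniqueness and shows that $\tilde\epsilon$ is exactly the value at which the fixed point crosses the regime boundary, tying parts (a) and (b) together.

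Global reachability is the step I expect to be the main obstacle, because $\mathcal{T}$ is nonlinear and piecewise, so contraction in a fixed norm need not hold across the regime boundary. My plan is: (i) exhibit a compact forward-invariant rectangle containing the fixed point, using the a priori bounds $S'\in[2\alpha p,2\alpha p+\epsilon]$ in the over-subscribed regime and analogous bounds below; (ii) show the trajectory eventually enters and remains in the regime of the fixed point — for $\epsilon\ge\tilde\epsilon$ the $\epsilon$-driven growth of $x_A$ pushes $S$ above $2\alpha$, after which $x_B'\le p\,x_B$ forces $x_B\to 0$, while for $\epsilon<\tilde\epsilon$ an analogous argument traps the system below $2\alpha$; and (iii) within the terminal regime, establish a contraction or a Lyapunov/monotone-map argument driving convergence to the unique fixed point, using order preservation to reduce the effective dynamics. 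Controlling the possibly repeated regime switches before the system settles is the delicate point, which I would isolate as a lemma on the sign of $S-2\alpha$ under one step of $\mathcal{T}$.

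Finally, the quantitative bounds in part (b) follow from the stationary relation $1-x_A=\frac{(1-p)(2-S)}{(1-p)(2-S)+(2\alpha-S)p+2(1-\alpha)\epsilon}$. Dropping nonnegative terms in the denominator and using $S<2\alpha$ (together with a second estimate that retains the $(2\alpha-S)p$ contribution) gives the two competing upper bounds on $1-x_A$, whose minimum is the stated lower bound on $x_A$. The separation bound then needs no new work: in the under-subscribed regime $x_A-x_B=2x_A-S>2x_A-2\alpha$, so inserting the $x_A$ lower bound and doubling the subtracted quantity yields exactly $2(1-\alpha)-\min\!\bigl(\tfrac{4\alpha(1-p)}{\epsilon},\tfrac{(1-\alpha p)^2}{(1-\alpha)\epsilon}\bigr)$, truncated at $0$.
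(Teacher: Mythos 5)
Your fixed-point analysis is sound and in places cleaner than the paper's. The paper merely verifies that $(2\alpha(p-\epsilon)+\epsilon,0)$ is a fixed point (Claim~\ref{clm:1}); you derive that $x_B=0$ is \emph{forced} by the over-subscribed $x_B$-equation, which is a genuine improvement. Likewise, for the under-subscribed regime the paper invokes Brouwer's theorem on a carefully chosen convex set and then argues uniqueness through a quadratic in $x_A+x_B$ whose coefficients still contain $x_A$ (Lemmas~\ref{lem:brouwer} and~\ref{lem:fp_under}); your reduction to a single scalar equation $g(S)=0$ with $g(2\alpha)=\frac{\epsilon}{(1-p)+\epsilon}-2\alpha$ identifies the threshold $\tilde\epsilon$ transparently and handles existence and uniqueness in one stroke (you do need to verify that $g$ is monotone, which holds for $\epsilon<p$, i.e., throughout the regime of interest). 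Your derivation of the quantitative bounds from the closed form of $1-x_A$ can also be made to work, though for the bound $1-x_A<\frac{2\alpha(1-p)}{\epsilon}$ you need the monotonicity of $x_A(S)$ in $S$ and evaluation at the endpoint $S=2\alpha$ rather than just ``dropping nonnegative terms'' (a naive term-dropping estimate loses a factor of $\frac{1}{1-\alpha}$), and the second bound $\frac{(1-\alpha p)^2}{2(1-\alpha)\epsilon}$---which the paper reads off the discriminant of its quadratic---is asserted but not actually derived in your sketch.

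The genuine gap is in global convergence for $\epsilon<\tilde\epsilon$, which you correctly flag as the delicate step but then dispose of with a claim that is false as stated: that ``an analogous argument traps the system below $2\alpha$.'' The under-subscribed region is \emph{not} forward invariant when $\epsilon<\tilde\epsilon$: from an under-subscribed state with $x_A$ small, the affinity term $\frac{2(1-\alpha)(1-x_A)}{2-S}\epsilon$ can push $S$ above $2\alpha$, and the paper's Lemma~\ref{lem:unique} explicitly accommodates trajectories that bounce into the over-subscribed region and back. The idea you are missing is that the correct trapping region is the subset of the under-subscribed regime on which $x_A\geq 1-\frac{2\alpha(1-p)}{\epsilon}$: on that set $S$ decreases monotonically and the set is invariant under $\mathcal{T}$, while from the over-subscribed region one shows that the first under-subscribed iterate necessarily lands in it, and from the complementary under-subscribed region $S$ is nondecreasing so the system cannot cycle there and must eventually exit. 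Without identifying this subregion (or an equivalent Lyapunov structure that survives the regime switches), steps (ii) and (iii) of your convergence plan do not close, and this is where most of the paper's work actually lies.
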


Detailed proof can be found in Appendix~\ref{app:theory}. While the theorem characterizes behavior across a range of $\epsilon$, for small $\epsilon$, the lower bound can be vacuous. However, the dynamics of the system can still be solved precisely (through numerical experiments). Together, the theorem and the numerical solutions have two key implications. Firstly, even a small advantage ($\epsilon > 0$) guarantees a permanent, non-zero separation between the groups in the long run (Figure~\ref{fig:thm_4}). More importantly, the magnitude of $\epsilon$ dictates the severity of the disparity; a large enough advantage to one group can drive the system to an extreme outcome where the trailing group is completely excluded from college. As $\epsilon$ decreases, the extent of the disparity at the fixed point diminishes, until we recover Theorem~\ref{thm:equal_adv} in the limit $\epsilon \to 0$.

\begin{figure*}[h]
    \centering
    \includegraphics[width=0.6\textwidth]{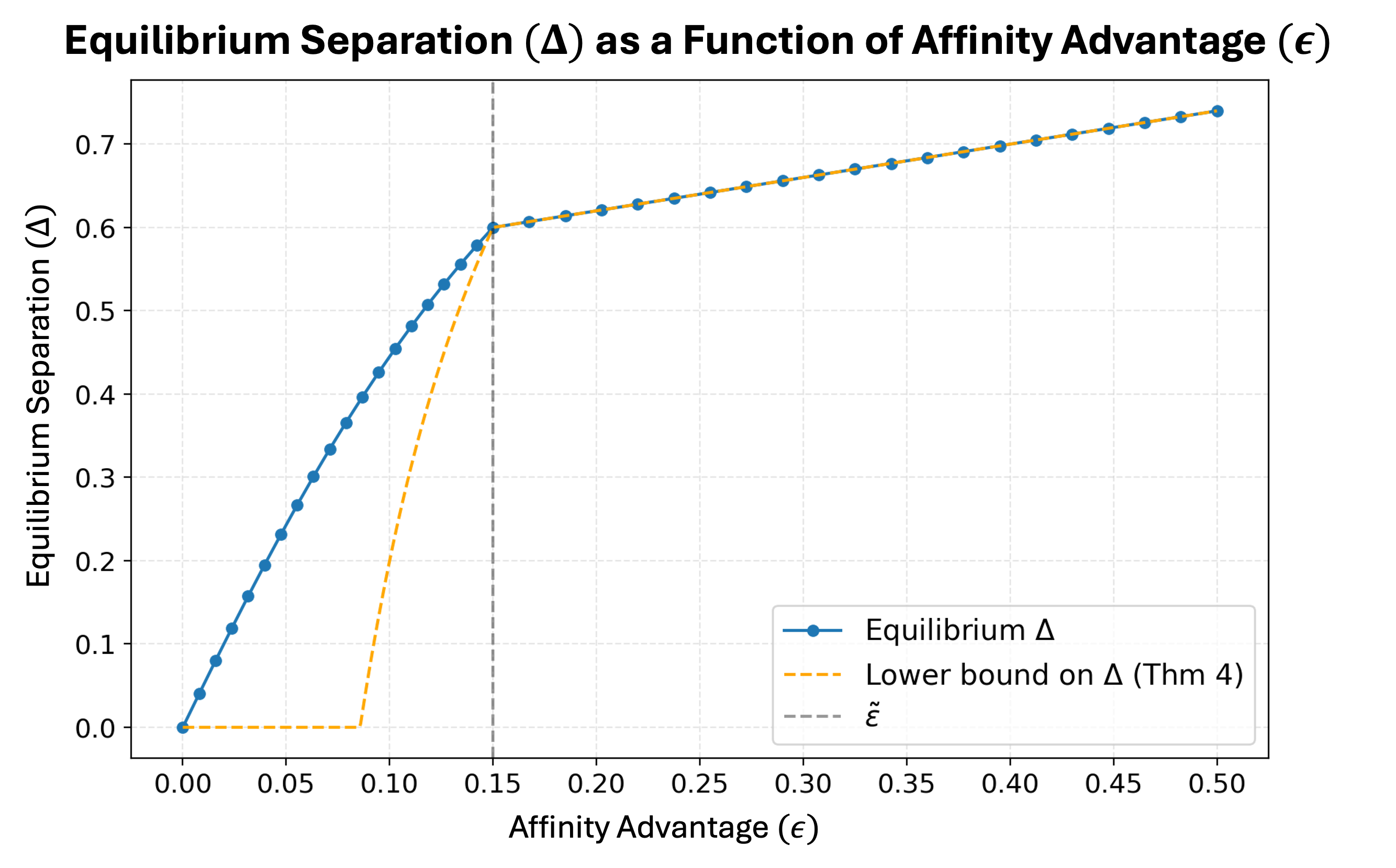}
    \caption{Equilibrium separation ($\Delta$, solid, blue, with markers) and the lower bound on separation (dashed, orange) from Theorem \ref{thm:affinity} as a function of the affinity advantage $\epsilon$. The vertical line marks the threshold value 
    $ \tilde{\epsilon} = \frac{2\alpha(1-p)}{(1-2\alpha)}$. Parameter combinations: $T=500, \alpha=0.30, p=0.90, q=0.0$. For very small $\epsilon$, the lower bound predicted by the theorem is vacuous, but it approximates the true equilibrium separation better for larger $\epsilon$ while matching it exactly for $\epsilon \geq \tilde \epsilon$. Note that separation is meaningful even for small $\epsilon$.}
    \label{fig:thm_4}
\end{figure*}

\section{A Richer Simulation Model}\label{sec:full_model}

To examine whether our theoretical insights carry over to richer environments, we now turn to a simulation that preserves the \emph{same primitives}—meritocratic selection to a scarce resource and inter-generational transmission of advantage—but swaps binary types for continuous abilities and allows stochastic variation. First, this allows us to check the \emph{robustness} of our simpler, theoretical model. Second, we note that the mechanisms that are expected to lead to separation map one-for-one to our theoretical model. Here, we study whether our qualitative predictions (no persistent separation without affinity unless populations are small, and persistent separation with affinity) remain true in this setting.

\begin{figure*}[!ht]
  \centering
    \includegraphics[width=0.6\textwidth]{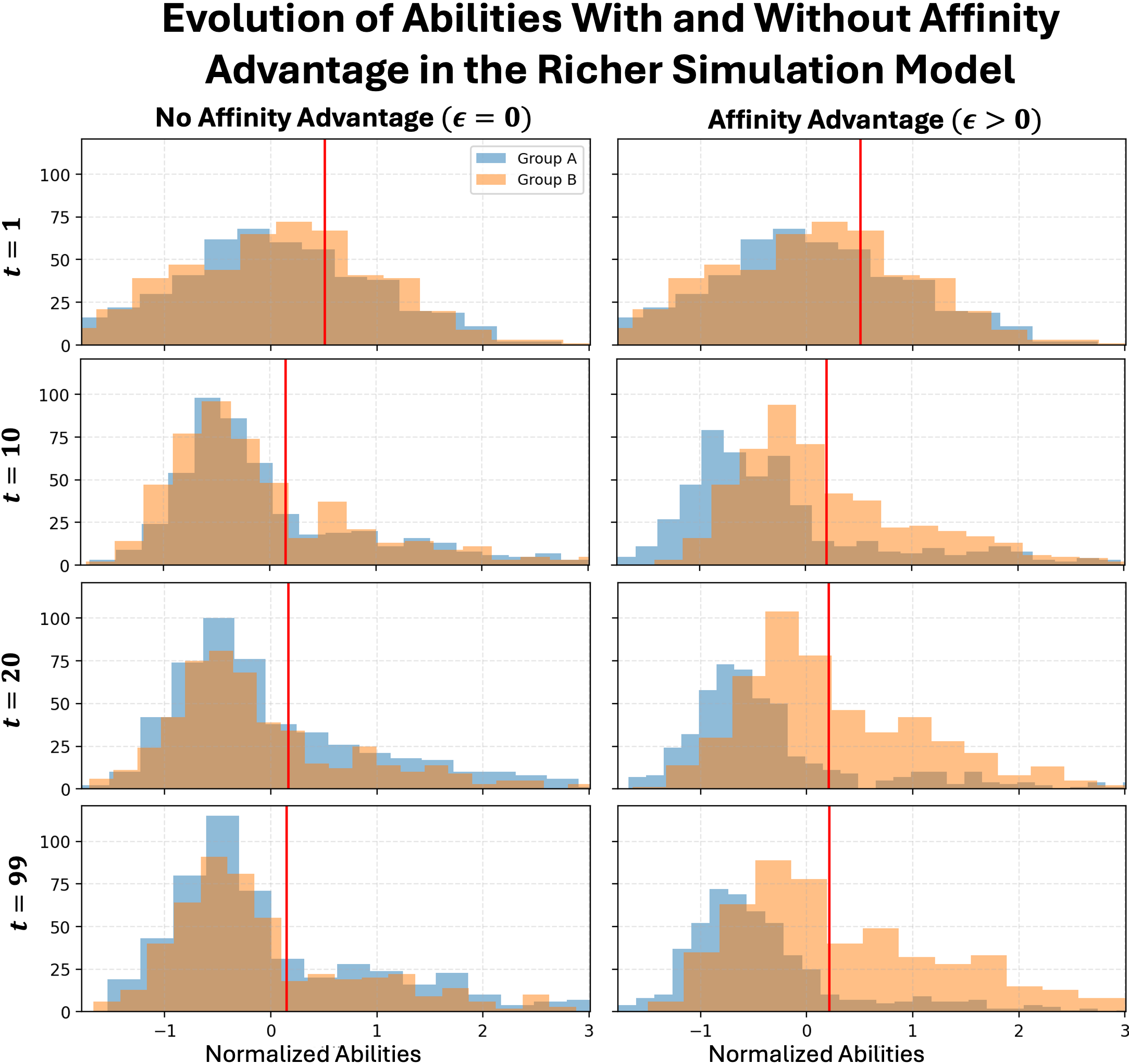}
    \caption{Depiction of how affinity advantage ($\mu_\epsilon$) affects ability distributions and admission outcomes over time. We compare two identical initial groups (group A and group B) under no advantage (left, $\mu_\epsilon = \sigma_\epsilon= 0$) and with advantage (right, $\mu_\epsilon = 0.20, \sigma_\epsilon = 0.10$) at time steps $t={1, 10, 20, 99}$. The vertical red line indicates the admission threshold. On the right, the ability distribution for one group progressively stochastically dominates the other.}
    \label{fig:trajectories_full}
\end{figure*}

\paragraph{Model}
We now model ability as continuous: for individual $j$, $a_j(t)$ denotes ability at time $t$, and $a_j(0)\sim\mathcal{N}(\mu_i,\sigma_i^2)$ for group $i\in{A,B}$. At each generation, a fraction $\alpha$ of the population is admitted to college by selecting the individuals with the highest \textit{standardized abilities}. Admitted students receive an ability boost $ I_j(t)\sim\mathrm{Gamma} \left(\tfrac{\mu_I^2}{\sigma_I^2},\tfrac{\sigma_I^2}{\mu_I}\right)$---this is analogous to the success probability \textbf{$p$} in the theoretical model. We choose a Gamma distribution, one of the two standard choices, along with the lognormal, for modeling non-negative random variables, because, in addition to modeling the assumed non-negative effect of college on ability, it also (i) is right-skewed with a tunable heavy tail, modeling situations where a small fraction of the population may see exceptional gains; and (ii) has a natural ``sum-of-increments'' interpretation (a Gamma can be viewed as the sum of many random variables---such as the combination of repeated contributions throughout one's college experience). 
Non-admits receive no boost $(I_j(t)=0)$. The next generation's ability is determined by the parent's final ability, a fraction of the college boost transmitted across generations (controlled by $\theta$, analogous to \textbf{$q$}), and a noise term $(\lambda_j(t) \sim \mathcal{N}(0, \sigma^2_{\lambda}))$ which models the imperfect transmission of ability from parent to child, representing the natural variation between generations:
\begin{equation*}
    a_j(t+1) = a_j(t) + \theta \cdot I_j(t) + \lambda_j(t).
\end{equation*}

\paragraph{Conferring Affinity Advantage}
To mirror our Affinity Advantage (AA) model from earlier, we introduce a feedback loop: if one group has more members admitted to college in generation $t$, then in the next generation every non-college admit from that leading group receives an additional stochastic ability boost, $\epsilon_j(t+1)$, in the next generation, drawn from $\mathcal{N}(\mu_\epsilon, \sigma_\epsilon^2)$. College-admits (who already receive $I_j$) and individuals in the trailing group do not receive this.

\begin{figure*}[!ht]
  \centering
    \includegraphics[width=0.7\textwidth]{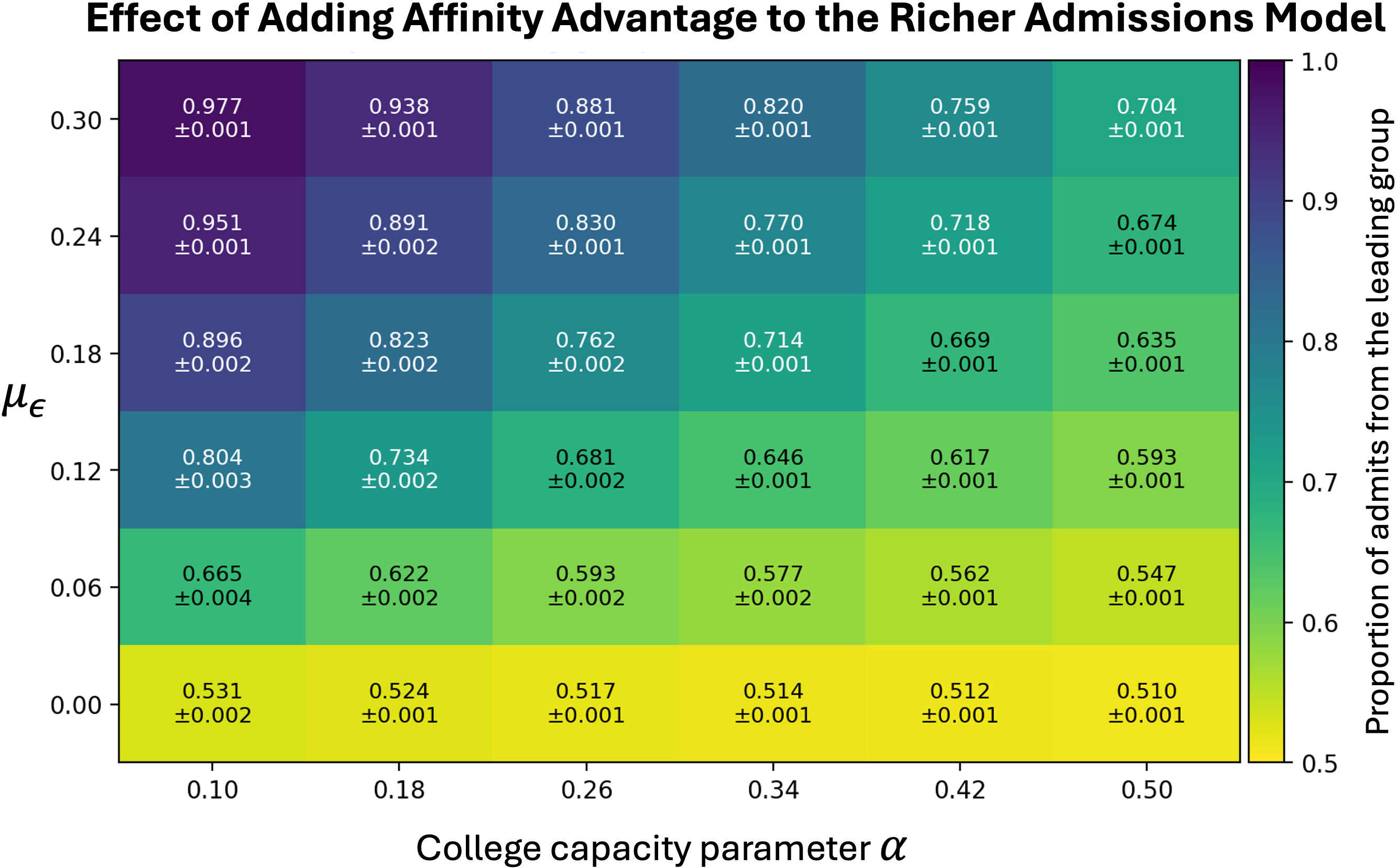}
    \caption{Heatmap showing the share of the leading group (with higher mean ability) out of all college admits as a function of the capacity parameter $\alpha$ (x-axis) and mean affinity advantage $\mu_\epsilon$ (y-axis). Cells show mean $\pm$ standard error calculated over $100$ independent runs. Experiment initialization: two identical groups $A$ and $B$ ($\mu=0$, $\sigma=1$, $n_A=n_B=500$); college admits the top $\alpha \cdot (n_A + n_B)$ students in the population and boosts each admitted individual's ability ($\mu_I=1.0,\sigma_I=1.0$). Affinity advantage ($\epsilon \sim \mathcal{N}(\mu_\epsilon,\sigma_\epsilon^2)$ with $\sigma_\epsilon=\mu_\epsilon/2$) to non college-goers from leading group. Abilities of individuals in each generation updated as defined in Section \ref{sec:full_model}. System simulated for $T = 500$ generations. Higher capacity improves long-term outcomes with respect to fairness with more equitable shares of admitted students from each group.}
    \label{fig:admit_proportions_full}
\end{figure*}

For our experiments, we set both groups to be initially identical ($\mu_A = \mu_B = 0,~\sigma_A = \sigma_B = 1$) with base parameters $\theta=~\mu_I = \sigma_I = 1,$ and $\sigma_{\lambda} = 0.5$. To demonstrate that the insights from our EA and AA models are not just artifacts of their simplifying assumptions, we test their core predictions in this richer simulation model. 
We observe even in this model that any small advantage for one group is amplified over time, creating a reinforcing cycle that leads to one group dominating (Figure \ref{fig:trajectories_full} shows an example).

We quantitatively assess long-run behavior by averaging admission shares over the last 30 generations across 100 runs on a grid of $\alpha$ and $\mu_\epsilon$ values (Fig.~\ref{fig:admit_proportions_full}). At $\mu_\epsilon=0$ we observe near parity (shares $\approx 0.5$). Already at $\mu_\epsilon=0.06$, the advantaged group captures $\approx0.6$ of admits on average, evidencing substantial separation; for $\mu_\epsilon \ge 0.24$ we frequently see $>0.7$, indicating greater-dominance. Increasing capacity $\alpha$ attenuates these effects but does not mitigate them completely. These findings reinforce that the main insights of our simple model are robust even under more realistic model extensions and inter-generational dynamics. Sensitivity analysis confirms that long-term group separation is a robust phenomenon that persists across a wide range of college efficacy levels $(\mu_I)$ and generational noise magnitudes $(\sigma_\lambda)$ (See Appendix ~\ref{app:sensitivity} for details).

\section{Discussion and Conclusion}\label{sec:conclusion}
We explore how disparities across groups can emerge and persist in systems with limited resources and inter-generational feedback loops. Importantly, disparities can arise in the long-term \emph{even when there are no disparities across the initial population} and when \emph{fair and meritocratic} selection rules are used. This highlights that considering fairness in a static sense can be insufficient to understand long-term disparities, \emph{even in best-case initial scenarios}. Even establishing parity across groups some time in the long run is also insufficient, as inter-generational dynamics and stochasticity can cause the system to drift away again, re-creating disparities, thus necessitating more pro-active approaches to fairness. Finally, with affinity advantages (for example, positive network effects from being part of an advantaged group), the findings are more concerning: group disparities quickly become persistent, with the potential to exclude one group from access to the resource in the worst case. 
We have briefly highlighted that increasing college capacity and access to education has the potential to alleviate these effects by preventing the formation of an elite class where network effects are strong and can amplify group disparities. However, much remains to be done in the space of designing policy interventions to mitigate such effects.

\paragraph{Limitations of Modeling Choices.}
Our transition matrix is intentionally simple to keep the feedback mechanism transparent: in the EA baseline, rejected low types do not upgrade, rejected high types retain their types with probability $q$, admissions are meritocratic, and capacity $\alpha$ is fixed—even though in practice signals are noisy and capacity can respond to demand. One could easily imagine richer models along a range of dimensions, for example allowing baseline upward mobility for rejected low types, imperfect persistence for rejected high types, mean-reverting mechanisms, and other more nuanced affinity mechanisms. However, as our theoretical model captures in its essence, and our richer simulations provide further evidence for, even under these relaxations, the same qualitative picture should emerge because \emph{three primitives remain consistent}: scarce, rank-based selection magnifies small leads; inter-generational transmission preserves realized advantages; and any positive group-level feedback pushes the leading group further ahead.

\paragraph{Impact.} The implications of our findings are multi-fold. First, they reveal the limitations of static, one-time fairness evaluations and interventions, underlining the need to account for temporal dynamics and stochasticity while making fairness assessments. Second, they identify new failure points (randomness and network effects) that could be targeted for interventions to promote long-term fairness. 
Finally, we believe that our findings also have implications in the context of algorithmic decision-making. The “randomness” in our model is abstract and covers a multitude of real mechanisms by which algorithmic decision pipelines can introduce stochasticity (measurement noise, model error, cutoff effect for borderline cases). For example, in algorithmic monoculture, similar algorithms deployed across multiple domains can learn to specialize on specific populations, creating self-reinforcing feedback loops where those groups get favored/disfavored disparately. This is analogous to the concept of affinity advantage (that we introduce in our model) and is expected to produce similar outcomes, only this is driven by monoculture and not by network effects. This demonstrates that insights from our model can be generalized to much broader settings.

\section*{Acknowledgements}
The authors are grateful for support from the US National Science Foundation (NSF) under grants  IIS-2504990, IIS-2336236 and IIS-2533162. Any opinions and findings expressed in this material are those of the authors and do not reflect the views of their funding agencies.

\bibliographystyle{plainnat} 
\bibliography{mybib}

%%%%%%%%%%%%%%%%%%%%%%%%%%%%%%%%%%%%%%%%%%%%%%%%%%%%%%%%%%%%%%%%%%%%%%
\newpage
\appendix
\newpage
\onecolumn

\section{Proofs of Theoretical Results}\label{app:theory}

\subsection{Proof of Theorem~\ref{thm:allocation}}
We will first analyze the over-subscribed regime followed by the under-subscribed regime. 
\paragraph{Over-subscribed regime.} When $X_A(t) + X_B(t) \geq 2\alpha$, note that a meritocratic selection rule must allocate only to high-type candidates. We now impose the fairness criteria which requires us to equalize type-wise selection rates across groups. The selection rate among low-type candidates is already $0$ across both groups (and thus trivially equalized). 

Now, let $\mathcal{A}_i(t+1)$ indicate the number of seats allocated to high-type candidates at time $t+1$ for group $i$, $i \in \{A, B\}$. Firstly, it must be that $\sum_{i \in \{A, B\}}\mathcal{A}_i(t+1) = C = 2N\alpha$ (all seats have to be allocated for efficiency). In order for the high-type selection rate to be equalized across groups, we need: 
\[
       \frac{\mathcal{A}_A(t+1)}{NX_A(t)} = \frac{\mathcal{A}_B(t+1)}{NX_B(t)}.
\]
By the addendo property, the high-type selection rate in a fair allocation rule must equal:
\[
           \frac{\mathcal{A}_A(t+1) + \mathcal{A}_B(t+1)}{N(X_A(t) + X_B(t))} = \frac{2N\alpha}{NX(t)} = \frac{2\alpha}{X(t)},
\]
which further implies that: 
\[
         \mathcal{A}_i(t+1) = \frac{X_i(t)}{X(t)}(2N\alpha), \quad \forall~i \in \{A, B\}.
\]
This is exactly our allocation rule in the over-subscribed regime. 

\paragraph{Under-subscribed regime.} When $X_A(t) + X_B(t) < 2\alpha$, in order for the allocation to be meritocratic and efficient, all high-types across both groups have to be admitted first and remaining seats need to be allocated to low-type candidates from both groups. Note that the selection rate for high-types across both groups is equal to $1$ and is thus trivially equalized. 

Finally, we need to enforce the fairness criteria on the selection rates of low-types during the allocation of residual seats $(C- NX(t))$. Let $\widetilde{\mathcal{A}}_i(t+1)$ indicate the seats allocated to low-type candidates from group $i$ at time $t+1$ in the under-subscribed regime. Therefore, we need: 
\[
       \frac{ \widetilde{\mathcal{A}}_A(t+1) }{N(1-X_A(t))} =  \frac{ \widetilde{\mathcal{A}}_B(t+1) }{N(1-X_B(t))}.
\]
Using the addendo property again, the selection rate among low-types in a fair allocation must equal: 
\[
         \frac{ \widetilde{\mathcal{A}}_A(t+1) + \widetilde{\mathcal{A}}_B(t+1)  }{N(1-X_A(t)) + N(1-X_B(t))} = \frac{(C - NX(t))}{N(2-X(t))},
\]
which implies that: 
\[
       \widetilde{\mathcal{A}}_i(t+1) = \frac{1-X_i(t)}{2-X(t)}\cdot (C-NX(t)) \quad \forall~i \in \{A, B\}. 
\]
This again is our exact allocation rule in the under-subscribed regime. This concludes the proof. 
%\end{proof}

\subsection{Proof of Theorem~\ref{thm:equal_adv}}
The proof of Theorem~\ref{thm:equal_adv} will be completed in the following steps: 
\begin{enumerate}
    \item First, we will show that under the Equal Advantage (EA) model, the system has %exactly one fixed point in the under-subscribed regime. We show this by proving that there exists no fixed points in the over-subscribed regime and 
    exactly one fixed point $(x_A, x_B)$ in the under-subscribed regime with $x_A = x_B =\alpha p$. 
    \item Then we show that for any starting point $(X_A(0), X_B(0)) \in [0,1]^2$, the system will always reach the above fixed point. This also implies that the above fixed point must be the unique fixed point of the system.  
\end{enumerate}

\paragraph{There exists one fixed point in the under-subscribed regime under the EA model.} By the definition of a fixed point and the system dynamics in the under-subscribed case, we have: 
\begin{align*}
    x_A &= x_A \cdot p + (2\alpha - x_A - x_B)\cdot \frac{(1-x_A)}{(2-x_A - x_B)}\cdot p\\
    x_B &= x_B \cdot p + (2\alpha - x_A - x_B)\cdot \frac{(1-x_B)}{(2-x_A - x_B)}\cdot p.
\end{align*}
Adding both equations together, we obtain $x_A + x_B = 2\alpha p$. Plugging it back into the individual equations, we obtain $x_A = x_B = \alpha p$ uniquely. Clearly, this must the only fixed point of the system in the under-subscribed regime. \\ 

\noindent
Putting the two results together, we conclude that the system has a unique fixed $(x_A, x_B)$ which occurs in the under-subscribed regime and is given by $x_A = x_B = \alpha p$. 

\paragraph{Given any starting point $(X_A(0), X_B(0))$ and $0 < q < p < 1$, the system always reaches the fixed point.} For ease of exposition, let us introduce some notation. Let $x_A(0) = X_A(0)$, $x_B(0) = X_B(0)$. Further, for any $t \geq 1$, we define: 
\[
     \left( x_A(t), x_B(t) \right) = \mathcal{T}\left(x_A(t-1), x_B(t-1) \right), 
\]
where $\mathcal{T}$ captures the transition dynamics of the EA model. We now make the following observations: 
\begin{itemize}
    \item If $x_A(t) + x_B(t) \geq 2\alpha$, $x_A(t+1) + x_B(t+1) < x_A(t) + x_B(t)$. This follows trivially because: 
    \[
      x_A(t+1) + x_B(t+1) = 2\alpha (p-q) + q \cdot \left(x_A(t) + x_B(t)\right) < x_A(t) + x_B(t)
    \]
    since $p < 1$ and $x_A(t) + x_B(t) \geq 2\alpha$ to begin with. The amount of decrease $\delta$ in each time step is given by: 
    \begin{align*}
        \delta &= \left( x_A(t) + x_B(t) \right) - \left( x_A(t+1) + x_B(t+1) \right) \\
        &= (1-q)\cdot \left(x_A(t) + x_B(t) \right) - 2\alpha (p-q)\\
        &\geq 2\alpha(1-q) - 2\alpha(p-q) = 2\alpha(1-p). 
    \end{align*}
    Since $\delta$ is finite (and not vanishingly small), there exists a finite time $\tau > t$ such that $x_A(\tau) + x_B(\tau) < 2\alpha$. 
    \item If $x_A(t) + x_B(t) < 2\alpha$, 
    \[
        x_A(t+1) + x_B(t+1) = 2 \alpha p.
    \] 
    \item Combining the above observations, we conclude that for any starting point, $x_A(t) + x_B(t)$ always converges to $2\alpha p$ in finite time steps.
    \item Once $x_A(t) + x_B(t)$ reaches $2\alpha p$, we have: 
    \[
        x_A(t+1) = x_A(t) + \frac{\alpha p(1-p)}{(1-\alpha p)} - \frac{(1-p)}{(1-\alpha p)}\cdot x_A(t),
    \]
    and similarly for $x_B(t+1)$. Observe that $x_A(t) > \alpha p$ implies $\alpha p < x_A(t+1) < x_A(t)$, but if $x_A(t) < \alpha p$, $\alpha p > x_A(t+1) > x_A(t)$. Also, if $x_A(t) = \alpha p$, $x_A(t+1) = \alpha p$. Clearly, $x_A(t)$ must eventually converge to $\alpha p$ (by the monotone convergence theorem). We can argue similarly for $x_B(t)$. 
\end{itemize}  
Thus, we have shown that for any starting point, the dynamics converge to the unique fixed point $(\alpha p, \alpha p)$. This concludes the proof of the theorem.  

\subsection{Stochastic Setting: Preliminaries}
$\{ \left(X_A(t), X_B(t) \right) \}$ is a stochastic process. According to the dynamics of the EA model, we can actually derive intuition about the distributions of $X_A(t+1)$ and $X_B(t+1)$, given the system state at time $t$. \\

\noindent
\textit{Over-subscribed regime.} We know that in the over-subscribed regime (i.e., $X_A(t) + X_B(t) \geq 2\alpha$), we have: 
\[
       X_A(t+1) = \frac{1}{N}\left(Y_A(t+1) + Z_A(t+1)\right), \quad Y_A(t+1) \sim Bin\left(\frac{(2N\alpha)X_A(t)}{X(t)}, p\right) 
\]
\[
       Z_A(t+1) \sim Bin\left(NX_A(t) - \frac{(2N\alpha)X_A(t)}{X(t)}, q\right).
\]
\[
    X_B(t+1) = \frac{1}{N}\left(Y_B(t+1) + Z_B(t+1)\right), \quad Y_B(t+1) \sim Bin\left(\frac{(2N\alpha)X_B(t)}{X(t)}, p\right)
\]
\[
    Z_B(t+1) \sim Bin\left(NX_B(t) - \frac{(2N\alpha)X_B(t)}{X(t)}, q\right). 
\]
For $N$ reasonable large, we can take the normal approximation of the binomial distribution (in fact, the binomial distribution approaches normal quickly, so the approximation works very well even without taking $N$ to $\infty$). Further, since $Y_i(t+1)$ and $Z_i(t+1)$ are independent, we have the following distribution for $X_i(t+1) ~\forall~i \in \{A, B\}$: 
\begin{align*}
    X_i(t+1) \sim \mathcal{N}\left(qX_i(t) + \frac{2\alpha(p-q)X_i(t)}{X(t)}, \frac{1}{N}\left( \frac{2\alpha X_i(t)}{X(t)}\left[p(1-p) - q(1-q) \right] + X_i(t)q(1-q)  \right)    \right). 
\end{align*}
We define the separation between groups $A$ and $B$ at time $t$ as $\Delta(t) = X_A(t+1)-X_B(t+1)$. Since $X_A(t+1)$ and $X_B(t+1)$ are also independent, $\Delta(t+1)$ is also normally distributed as follows: 
\begin{align}\label{eq:1}
    \Delta(t+1) \sim \mathcal{N}\left( \left(\frac{2\alpha(p-q)}{X(t)} + q\right)\Delta(t), \frac{1}{N}\left( 2\alpha p(1-p) + q(1-q)\left(X(t)-2\alpha \right) \right)  \right).
\end{align}

\noindent
\textit{Under-subscribed regime.} We can do an identical analysis for the under-subscribed regime where $X_A(t) + X_B(t) < 2\alpha$. There, we can show: 
\begin{align*}
    X_A(t+1) \sim \mathcal{N}\left( pX_A(t) + \frac{2\alpha - X(t)}{2-X(t)}(1-X_A(t))p, \frac{p(1-p)}{N}\left(X_A(t) + \frac{2\alpha-X(t)}{2-X(t)}(1-X_A(t)) \right)  \right), \\
    X_B(t+1) \sim \mathcal{N}\left( pX_B(t) + \frac{2\alpha - X(t)}{2-X(t)}(1-X_B(t))p, \frac{p(1-p)}{N}\left(X_B(t) + \frac{2\alpha-X(t)}{2-X(t)}(1-X_B(t)) \right)  \right).
\end{align*}
Therefore, 
\begin{align}\label{eq:2}
    \Delta(t+1) \sim \mathcal{N}\left( \frac{2(1-\alpha)p}{2-X(t)}\Delta(t), \frac{2\alpha p(1-p)}{N} \right).
\end{align}
\textbf{Notes:} Eqs~\eqref{eq:1} and \eqref{eq:2} are the \textbf{conditional} distributions of $\Delta(t+1)$, conditional on $X_A(t)$ and $X_B(t)$. Also, our proofs in Theorems~\ref{thm:time_parity} and \ref{thm:separation} work under the $N$ sufficiently large assumption where the normal approximation works!

\subsection{Proof of Theorem~\ref{thm:time_parity}.}
Given $X_A(0)$ and $X_B(0)$, we want to compute the time required to reach $\eta$-parity. For any $t$, we use the short-hand $X(t) = X_A(t) + X_B(t)$ and $\Delta(t) = X_A(t) - X_B(t)$.  

\paragraph{Under-subscribed regime.} Suppose, $X(t) < 2\alpha$ and $|\Delta(t)| > \eta$. Now, 
recall from the previous segment that the gap $\Delta(t+1)$ can be shown to be normally distributed, conditional on $X_A(t)$ and $X_B(t)$, when $N$ is large. Reusing the result in the under-subscribed regime, we have: 
\[
           \Delta(t+1) \sim \mathcal{N}\left(\frac{2(1-\alpha)p}{2-X(t)}\Delta(t), \frac{2\alpha p(1-p)}{N}\right).
\]
Since, $X(t) < 2\alpha$, we have: 
\[
        \mathbb{E}\left[ \Delta(t+1)~|~X_A(t), X_B(t) \right] = \frac{2(1-\alpha)p}{2-X(t)}\Delta(t) < p\cdot \Delta(t). 
\]
Note that by Jensen's inequality, for any random variable $Z$, we have: $\mathbb{E}^2[ |Z| ] \leq \mathbb{E}[Z^2] = \mathbb{E}^2[Z] + Var(Z)$. %Jensen's inequality also holds for conditional expectations.
Therefore, we have: 
\begin{align*}
    \mathbb{E}^2\left[~|\Delta(t+1)| ~|~X_A(t), X_B(t) \right] &\leq \mathbb{E}^2\left[ \Delta(t+1)~|~X_A(t), X_B(t) \right] + Var(\Delta(t+1)~|~X_A(t), X_B(t)) \\
    &< p^2 \cdot \Delta(t)^2 + \frac{2\alpha p(1-p)}{N} \\
    &= p^2 \cdot \Delta(t)^2 + 2\cdot p |\Delta(t)| \cdot \frac{\alpha (1-p)}{N |\Delta(t)|}\\
    &< \left(p |\Delta(t)| + \frac{\alpha (1-p)}{N |\Delta(t)|} \right)^2.
\end{align*}
This implies that: 
\begin{align*}
       \mathbb{E}\left[~|\Delta(t+1)| ~|~X_A(t), X_B(t) \right] < p |\Delta(t)| + \frac{\alpha (1-p)}{N |\Delta(t)|} < p|\Delta(t)| + \frac{\alpha(1-p)}{N \eta}.
\end{align*}
Therefore, we have: 
\[
       \mathbb{E}\left[~|\Delta(t+1) |~ \right] < p \cdot \mathbb{E}[~|\Delta(t)|~] + \frac{\alpha(1-p)}{N \eta}. 
\]

\paragraph{Over-subscribed regime.} Now, suppose that $X(t) \geq 2\alpha$ and $|\Delta(t)| > \eta$. Recall that for the over-subscribed regime, the conditional distribution of the gap $\Delta(t+1)$ is normal when $N$ is large. 
\[
    \Delta(t+1) \sim \mathcal{N}\left( \left(\frac{2\alpha(p-q)}{X(t)} + q\right)\Delta(t), \frac{1}{N}\left( 2\alpha p(1-p) + q(1-q)(X(t)-2\alpha)\right) \right).
\]
Since $X(t) \geq 2\alpha$, we again have: 
\[
     \mathbb{E}[\Delta(t+1)~|~X_A(t), X_B(t)] = \left(\frac{2\alpha(p-q)}{X(t)} + q\right)\Delta(t) \leq p \cdot \Delta(t).  
\]
\[
       Var(\Delta(t+1)~|~X_A(t), X_B(t)) = \frac{1}{N}\left( 2\alpha p(1-p) + q(1-q)(X(t)-2\alpha)\right) \leq \frac{2}{N}\left( \alpha p(1-p) + (1-\alpha)q(1-q) \right).
\]
Using an identical analysis as the under-subscribed regime, we can obtain an upper bound on $\mathbb{E}[~|\Delta(t+1)|~]$ as follows: 
\begin{align*}
    \mathbb{E}^2 \left[~|\Delta(t+1)|~|~X_A(t), X_B(t) \right] &\leq p^2 \cdot \Delta(t)^2 + \frac{2}{N}\left( \alpha p(1-p) + (1-\alpha)q(1-q) \right) \\
    &< \left( p\cdot |\Delta(t)| +  \frac{\alpha(1-p)}{N \eta} + \frac{(1-\alpha)q(1-q)}{Np \eta} \right)^2,
\end{align*}
which eventually implies that: 
\[
    \mathbb{E}\left[~|\Delta(t+1)| ~\right] < p \cdot \mathbb{E}[~|\Delta(t)|~] + \frac{\alpha(1-p)}{N \eta} + \frac{(1-\alpha)q(1-q)}{Np \eta}. 
\]
Therefore, across both regimes, we obtain an upper bound on $\mathbb{E}\left[~|\Delta(t+1)| ~\right]$ in terms of $\mathbb{E}\left[~|\Delta(t)| ~\right]$ which we can then use inductively to obtain an upper bound in terms of $\Delta(0)$. We have the following recursion relation: 
\begin{align*}
    \mathbb{E}\left[~|\Delta(t+1)|~ \right] &< p \cdot \mathbb{E}\left[~|\Delta(t)|~\right] + \frac{f(\alpha, p, q)}{N \eta} \\
    &< p^{t+1}\cdot |\Delta(0)| + \frac{f(\alpha, p, q)}{N \eta}\cdot \frac{(1-p^{t+1})}{(1-p)}\\
    &= p^{t+1}\cdot \left( |\Delta(0)| - \frac{f(\alpha, p, q)}{N \eta (1-p)} \right) + \frac{f(\alpha, p, q)}{N\eta (1-p)}. 
\end{align*}
We want to find the number of time steps required for the system to reach $\eta$-parity for the first time with probability $\geq 1-\omega$. By Markov's inequality, we have: 
\begin{align*}
    &\mathbb{P}\left[~|\Delta(t+1)| > \eta \right] \leq \frac{\mathbb{E}[~|\Delta(t+1)|~]}{\eta} < p^{t+1}\cdot\left(\frac{|\Delta(0)|}{\eta} - \frac{f(\alpha,p,q)}{N\eta^2 (1-p)} \right) + \frac{f(\alpha,p,q)}{N\eta^2 (1-p)} < \omega \\
    \implies & (t+1)\log(p) < \log\left( \omega - \frac{f(\alpha,p,q)}{N\eta^2 (1-p)} \right) - \log\left(\frac{|\Delta(0)|}{\eta} - \frac{f(\alpha,p,q)}{N\eta^2 (1-p)}\right) \\
    \implies & t > \left \lceil \frac{1}{\log(p)} \left[ \log\left( \omega - \frac{f(\alpha,p,q)}{N\eta^2 (1-p)} \right) - \log\left(\frac{|\Delta(0)|}{\eta} - \frac{f(\alpha,p,q)}{N\eta^2 (1-p)}\right)  \right]  \right \rceil - 1.
\end{align*}
Thus, for all $t$ at least as large as the above value, the system will reach $\eta$-parity by $t$ time steps with probability $\geq 1- \omega$.

\subsection{Separation: A Theoretical Treatment}

\begin{thm}\label{thm:separation}
Let $T_{\delta}$ be the number of time steps needed by the system to reach $\delta$-separation with probability $\geq 1-\omega$. Then, 
\[
       T_{\delta} \leq \left\lceil \frac{\log(1/\omega)}{p_s} \right\rceil,
\]
where $p_s$ is given by: 
\begin{align*}
    p_s = 2 \cdot \min \left\{ \Phi^c \left( \frac{ \sqrt{N}(1+\delta - p(1-\alpha)) }{\sqrt{2\alpha p(1-p)} } \right),
    \Phi^c \left( \frac{\sqrt{N}(1+\delta-q-\alpha(p-q))}{ \sqrt{2\alpha p(1-p)} }\right) \right\}
\end{align*}
and $\Phi^c(\cdot)$ indicates the complementary standard normal CDF. 
\end{thm}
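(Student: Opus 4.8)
The target bound $T_\delta \le \lceil \log(1/\omega)/p_s\rceil$ has the unmistakable signature of a geometric first-success argument, so the plan is to (i) produce a state-independent lower bound $p_s$ on the probability of creating $\delta$-separation in a single generation, and (ii) chain these single-step events together. I would set up (ii) first, since it is the cleaner half. Let $\tau_\delta = \inf\{t : |\Delta(t)| \ge \delta\}$ be the first-passage time to $\delta$-separation. If I can show that, conditional on any history with $|\Delta(t)| < \delta$, the next step achieves $|\Delta(t+1)| \ge \delta$ with probability at least $p_s$, then iterating $\mathbb{P}[\tau_\delta > t+1 \mid \tau_\delta > t] \le 1 - p_s$ gives $\mathbb{P}[\tau_\delta > T] \le (1-p_s)^T$. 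Requiring $(1-p_s)^T \le \omega$ and using the elementary inequality $-\log(1-p_s) \ge p_s$ shows that $T \ge \log(1/\omega)/p_s$ already suffices, and taking the ceiling yields exactly the stated $T_\delta$. This half is routine once $p_s$ is in hand.

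The substance is therefore the one-step bound, and the key observation that produces the $1+\delta$ appearing inside $p_s$ is a clean sufficient event. Since $X_B(t+1)\le 1$ always, the event $\{X_A(t+1) \ge 1+\delta\}$ already forces $\Delta(t+1) = X_A(t+1)-X_B(t+1) \ge \delta$, and symmetrically $\{X_B(t+1)\ge 1+\delta\}$ forces $\Delta(t+1) \le -\delta$; hence $\mathbb{P}[|\Delta(t+1)|\ge\delta] \ge \mathbb{P}[X_A(t+1)\ge 1+\delta] + \mathbb{P}[X_B(t+1)\ge 1+\delta]$, and group symmetry makes the worst case of these two equal, producing the factor $2$. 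I would then invoke the normal approximation for $X_A(t+1)$ established in the stochastic preliminaries: conditional on the current state, $X_A(t+1)$ is approximately Gaussian, so $\mathbb{P}[X_A(t+1)\ge 1+\delta] = \Phi^c\big((1+\delta-\mu_{X_A})/\sigma_{X_A}\big)$. Bounding the conditional mean $\mu_{X_A}$ from below and replacing $\sigma_{X_A}$ by the common scale $\sqrt{2\alpha p(1-p)/N}$, carried out separately in the two regimes, produces the two mean terms $p(1-\alpha)$ (under-subscribed) and $q+\alpha(p-q)$ (over-subscribed); taking the minimum of the two resulting $\Phi^c$ values guards against not knowing which regime the system occupies, and yields the claimed $p_s$.

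The main obstacle is making this one-step bound genuinely uniform over the state space. The conditional moments of $X_A(t+1)$ depend on $(X_A(t),X_B(t))$ through the selection rule and degenerate at some configurations (for instance, when a group is nearly empty of high types), so I must argue that the extreme states capturing the constants $p(1-\alpha)$ and $q+\alpha(p-q)$ are the binding ones within each regime, and, crucially, that the variance proxy $2\alpha p(1-p)/N$ enters in the direction that preserves a \emph{valid} lower bound (not merely an approximation). Care is also needed because this is exactly where the large-$N$ Gaussian approximation is used, as already flagged for Theorem~\ref{thm:time_parity}: the tail $\Phi^c$ only controls the true binomial-sum tail of $X_A(t+1)$ for $N$ sufficiently large, so the statement should be read in that regime. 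Once these moment bounds are pinned down, the geometric chaining from step (ii) closes the argument immediately.
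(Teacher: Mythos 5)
Your overall skeleton---a state-uniform one-step lower bound $p_s$ followed by geometric chaining $\mathbb{P}[\tau_\delta > T]\le(1-p_s)^T$ and the inequality $-\log(1-p_s)\ge p_s$---is exactly the paper's structure, and your part (ii) matches its closing computation. The gap is in the one-step bound, which you yourself identify as the substance. The sufficient event you propose, $\{X_A(t+1)\ge 1+\delta\}$, is empty in the actual model: $X_A(t+1)$ is a fraction of high types, so $X_A(t+1)\le 1$ and the event has probability zero; the Gaussian tail $\Phi^c\bigl((1+\delta-\mu_{X_A})/\sigma_{X_A}\bigr)$ you would attach to it comes from pushing the normal approximation outside the support of a bounded random variable. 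Even granting the approximation, this route cannot recover the stated $p_s$: a lower bound on that tail which is uniform over states degenerates to $0$, because both $\mu_{X_A}$ and $\sigma_{X_A}$ vanish as $X_A(t)\to 0$ (e.g.\ in the under-subscribed regime with $X_B(t)$ near $2\alpha$); in particular $\mu_{X_A}$ is \emph{not} bounded below by $p(1-\alpha)$. The additivity $\mathbb{P}[|\Delta(t+1)|\ge\delta]\ge\mathbb{P}[X_A(t+1)\ge1+\delta]+\mathbb{P}[X_B(t+1)\ge1+\delta]$ is also unjustified, since on the intersection of the two events $|\Delta(t+1)|\ge\delta$ need not hold.

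What the paper actually does is work with the increment of the gap: it defines ``one-step $\delta$-separation'' as $|\Delta(t+1)-\Delta(t)|\ge\delta$ and uses the conditional law $\Delta(t+1)\sim\mathcal{N}\bigl(c\,\Delta(t),\sigma^2\bigr)$, where the drift coefficient is $c=\tfrac{2(1-\alpha)p}{2-X(t)}$ (under-subscribed) or $c=\tfrac{2\alpha(p-q)}{X(t)}+q$ (over-subscribed), both in $(0,1)$. Then $\mathbb{P}[\Delta(t+1)-\Delta(t)\ge\delta]=\Phi^c\bigl((\delta-(c-1)\Delta(t))/\sigma\bigr)$: the ``$1$'' in $1+\delta$ comes from the worst case $|\Delta(t)|\le 1$, and the terms $p(1-\alpha)$ and $q+\alpha(p-q)$ are uniform lower bounds on the drift coefficient $c$ over each regime---not lower bounds on $\mathbb{E}[X_A(t+1)]$. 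The factor $2$ is the two-sided tail of this one Gaussian, and the variance is lower-bounded by $2\alpha p(1-p)/N$, which is indeed the direction that shrinks $\Phi^c$ (the numerator of its argument is positive), confirming the concern you flagged. If you replace your sufficient event with this increment event and carry out the moment bounds on $\Delta(t+1)$ rather than on $X_A(t+1)$ alone, your chaining step then closes the proof exactly as you describe.
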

\noindent
\textit{Proof.} Before we proceed with the proof of the theorem, let us
introduce a \textit{stronger} notion of $\delta$-separation as follows: 
\begin{defn}\label{defn:one_step_sep}
The system is said to have achieved \textit{one-step} $\delta$-separation at $\tau$ if:
\[
        \tau:= \arg \min \left\{t:~ \left| \Delta (t) - \Delta(t-1)\right| \geq \delta \right\}.
\]
\end{defn}
It must be clear that the probability of the one-step separation event provides a conservative lower bound on the probability of the original $\delta$-separation event. It also eliminates the challenge of tracking how the system evolves and transitions across regimes as we can now do a one-step computation at time $t+1$, assuming knowledge of the system state at time $t$. \\

Suppose, the system is fully characterized at time $t$ by the tuple $(X_A(t), X_B(t))$ where $X_A(t) - X_B(t) = \Delta(t)$. We want to find a lower bound on the probability that one-step $\delta$-separation is achieved at time $t+1$. We can re-use the conditional distributions for $\Delta(t+1)$ derived earlier for both over-subscribed and under-subscribed regimes:

\paragraph{Over-subscribed regime.}
\begin{align*}
    \Delta(t+1) \sim \mathcal{N}\left( \left(\frac{2\alpha(p-q)}{X(t)} + q \right)\Delta(t), \frac{1}{N}\left( 2\alpha p(1-p) + q(1-q)\left(X(t)-2\alpha \right) \right)  \right).
\end{align*}
Therefore, 
\begin{align*}
    \mathbb{P}\left[ \Delta(t+1)-\Delta(t) \geq \delta \right] &= \Phi^c \left( \frac{\delta - \left(q + \frac{2\alpha(p-q)}{X(t)}-1 \right)\Delta (t)} { \frac{1}{\sqrt{N}}\sqrt{2\alpha p(1-p) + (X(t)-2\alpha)q(1-q)} } \right)\\
    &> \Phi^c \left( \frac{\sqrt{N}(\delta + 1-q-\frac{2\alpha(p-q)}{X(t)} )}{ \sqrt{2\alpha p(1-p)} }  \right) \quad \text{( $\because q + 2\alpha(p-q)/X(t) - 1 < 0,~ |\Delta(t)| \leq 1$  )} \\
    &> \Phi^c \left( \frac{\sqrt{N}(\delta + 1-q-\alpha(p-q))}{ \sqrt{2\alpha p(1-p)} }  \right) \quad \text{ ($\because X(t) \leq 2$) } \\
    &\geq \Phi^c \left( \frac{\sqrt{N}(1+\delta-q-\alpha(p-q))}{ \sqrt{2\alpha p(1-p)} }\right).
\end{align*}
This gives us the following two-sided bound: 
\[
     \mathbb{P}\left[ \left|\Delta(t+1)-\Delta(t) \right| \geq \delta \right] > 2 \cdot \Phi^c \left( \frac{\sqrt{N}(1+\delta-q-\alpha(p-q))}{ \sqrt{2\alpha p(1-p)} }\right).
\]

\paragraph{Under-subscribed regime.} Similarly, in the under-subscribed regime,  
\begin{align*}
    \Delta(t+1) \sim \mathcal{N}\left( \frac{2p(1-\alpha)}{2-X(t)}\Delta(t), \frac{2\alpha p(1-p)}{N} \right).
\end{align*}
Which implies that: 
\begin{align*}
    \mathbb{P}\left[ \Delta(t+1)-\Delta(t) \geq \delta \right] &=  \Phi^c \left(  \frac{\delta - \Delta(t)\left[\frac{2p(1-\alpha)}{(2-X(t))} - 1\right] }{\frac{1}{\sqrt{N}} \sqrt{2\alpha p(1-p)} } \right) \\
    &= \Phi^c \left( \frac{ \sqrt{N}(1+\delta - \frac{2p(1-\alpha)}{(2-X(t)))} }{ \sqrt{2\alpha p(1-p)} } \right) \\
    &> \Phi^c \left( \frac{ \sqrt{N}(1+\delta - p(1-\alpha)) }{\sqrt{2\alpha p(1-p)}} \right). 
\end{align*}
Combining both regimes, we get that: 
\begin{align*}
    \mathbb{P}\left[ \left|\Delta(t+1) - \Delta(t) \right| \geq \delta \right] > \underbrace{2 \cdot\min \left( \Phi^c \left( \frac{ \sqrt{N}(1+\delta - p(1-\alpha)) }{\sqrt{2\alpha p(1-p)}} \right), \Phi^c \left( \frac{\sqrt{N}(1+\delta-q-\alpha(p-q))}{ \sqrt{2\alpha p(1-p)} }\right) \right)}_{p_s}.
\end{align*}
Now, 
\begin{align*}
    \mathbb{P}\left[ \text{$\delta$-separation occurs over $T$ time steps} \right] &= 1-\mathbb{P}\left[ \text{no $\delta$-separation occurs for any time step in $[0, T]$} \right] \\
    &= 1 - \left(\text{no $\delta$-separation occurs in one time step} \right)^T \\
    &> 1- (1-p_s)^T \\
    &> 1 - \exp(-p_s T)
\end{align*}
If we want the above probability to be at least $1-\omega$ for some pre-chosen level of $\delta_0$, then: 
\begin{align*}
   1 - \exp(-p_s T) \geq 1- \delta_0 &\implies \exp(-p_s T) \leq \omega \\
   &\implies -p_s T \leq \log(\omega) \\
   &\implies T \geq \frac{\log(1/\omega)}{p_s}.
\end{align*}
Thus, we have shown that for all $T \geq \left\lceil \frac{\log(1/\omega)}{p_s} \right\rceil $, $\delta$-separation will be achieved sometime over $T$ time steps with probability at least $1-\delta_0$. This also implies that the number of time steps needed to achieve $\delta$-separation for the first time with high probability is given by: 
\[
           T_{\delta} = \left\lceil \frac{\log(1/\omega)}{p_s} \right\rceil. 
\]
This concludes the proof of the theorem.

\subsection{Proof of Theorem~\ref{thm:affinity}.}
We will complete the proof of Theorem~\ref{thm:affinity} in the following steps: 
\begin{enumerate}
    \item First, we will show that when $\epsilon \geq \tilde \epsilon$ and is in favour of group $A$, the system has a fixed point $(x_A, x_B)$ in the over-subscribed regime, given by $x_A = 2\alpha(p-\epsilon) + \epsilon$, $x_B = 0$ (Claim~\ref{clm:1}). 
    \item Next, we will show that when $\epsilon \geq \tilde \epsilon$, for any starting point where group $A$ is better off, the system will always reach the above fixed point. This should immediately imply that the above fixed point is the unique fixed point of the system when $\epsilon \geq \tilde \epsilon$ (Lemma~\ref{lem:1}).
    \item However, when $\epsilon < \tilde \epsilon$, we will have to follow a different approach because we cannot characterize the fixed point in closed form. So, we will first establish existence of a fixed point in the under-subscribed regime by Brouwer's fixed point theorem (Lemma~\ref{lem:brouwer}). Then, we will argue that this must be the only fixed point in the under-subscribed regime and characterize its properties (Lemma~\ref{lem:fp_under}). 
    \item Finally, we will show that for any starting point, the system always converges to the only fixed point of the under-subscribed regime when $\epsilon < \tilde \epsilon$, further implying this must be the unique fixed point of the system when $\epsilon < \tilde \epsilon$ (Lemma~\ref{lem:unique}).  
\end{enumerate}
We now present proofs for the individual steps:
\begin{claim}\label{clm:1}
When $\epsilon \geq \tilde \epsilon = \frac{2\alpha(1-p)}{(1-2\alpha)}$ and is in favour of group $A$, the system has a fixed point $(x_A, x_B)$ in the over-subscribed regime with $x_A = 2\alpha(p-\epsilon)+\epsilon$ and $x_B = 0$. 
\end{claim}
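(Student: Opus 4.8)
The plan is to prove the claim by direct substitution and verification rather than by invoking any fixed-point machinery, since the candidate point is handed to us explicitly. Concretely, I would take the pair $(x_A, x_B) = (2\alpha(p-\epsilon)+\epsilon,\, 0)$, plug it into the two over-subscribed fixed-point equations of the Affinity Advantage model (with $q=0$), confirm both are satisfied, and then separately check that the pair actually lands in the over-subscribed regime, which is where those equations are valid in the first place.

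First I would dispatch the equation for $x_B$. Setting $x_B = 0$, the fraction $\frac{x_B(2\alpha)}{x_A + x_B}$ equals $0$ provided $x_A > 0$ (so the denominator is nonzero), and the equation collapses to $0 = 0$. Hence I would first record that $x_A = 2\alpha p + (1-2\alpha)\epsilon > 0$ under the standing assumptions $\alpha \in (0,\tfrac12)$, $p \in (0,1)$, $\epsilon > 0$, which guarantees the ratios are well-defined; handling this $x_B = 0$ boundary is the one place where a little care is needed. Next, substituting $x_B = 0$ into the equation for $x_A$ turns $\frac{x_A(2\alpha)}{x_A + x_B}$ into $2\alpha$, so the right-hand side becomes $2\alpha p + (1 - 2\alpha)\epsilon$, and a one-line rearrangement yields $2\alpha(p-\epsilon) + \epsilon$, matching the claimed value exactly.

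The remaining—and genuinely load-bearing—step is to verify the regime membership, namely $x_A + x_B = x_A \geq 2\alpha$, since the over-subscribed equations only describe the dynamics there. Writing $x_A = 2\alpha p + (1-2\alpha)\epsilon$, the inequality $x_A \geq 2\alpha$ is equivalent to $(1-2\alpha)\epsilon \geq 2\alpha(1-p)$, i.e.\ to $\epsilon \geq \frac{2\alpha(1-p)}{1-2\alpha} = \tilde\epsilon$. This is precisely the hypothesis of the claim, so the candidate sits in the over-subscribed regime exactly when $\epsilon \geq \tilde\epsilon$; this coincidence is what pins down the threshold $\tilde\epsilon$. Finally, for completeness I would note that $x_A \leq 2\alpha + (1-2\alpha) = 1$ (using $p \leq 1$, $\epsilon \leq 1$ and $1-2\alpha > 0$), so $(x_A, x_B)$ is a legitimate pair of fractions in $[0,1]^2$. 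I do not expect any step to present a real obstacle: the argument is essentially an algebraic verification, with the only mild subtlety being the well-definedness of the ratios at the boundary $x_B = 0$ and the bookkeeping that ties the regime condition to the threshold $\tilde\epsilon$.
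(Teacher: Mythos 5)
Your proof is correct and follows essentially the same route as the paper's: direct verification that the candidate point satisfies the over-subscribed fixed-point equations, together with the observation that $x_A + x_B = 2\alpha p + (1-2\alpha)\epsilon \geq 2\alpha$ is exactly the condition $\epsilon \geq \tilde\epsilon$. Your additional checks (positivity of the denominator at $x_B = 0$ and $x_A \leq 1$) are sensible bookkeeping that the paper leaves implicit.
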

\begin{proof}
We will prove this claim by direct verification. Firstly, observe that $\left(2\alpha(p-\epsilon) + \epsilon, 0\right)$ is a point in the over-subscribed regime when $\epsilon \geq \tilde \epsilon$. This is because: 
\[
   2\alpha(p-\epsilon) + \epsilon + 0 = 2\alpha p + \epsilon(1-2\alpha) \geq 2\alpha p + 2\alpha (1-p) = 2\alpha. 
\]
Also, it can be seen that $\left(2\alpha(p-\epsilon) + \epsilon, 0\right)$ satisfies the fixed point equations: 
\[
    x_A = \frac{x_A (2\alpha)}{x_A + x_B}\cdot p + \left(1 -  \frac{x_A(2\alpha)}{x_A + x_B}\right)\cdot \epsilon; \quad x_B = \frac{x_B(2\alpha)}{x_A + x_B}\cdot p.
\]
This implies that $\left(2\alpha(p-\epsilon) + \epsilon, 0\right)$ is indeed a fixed point of the over-subscribed regime. 
\end{proof}

\begin{lem}\label{lem:1}
When $\epsilon \geq \tilde \epsilon$ and is in favour of group $A$, then for any starting point where group $A$ is better off, the system always reaches the fixed point $(2\alpha(p-\epsilon)+\epsilon, 0)$.
\end{lem}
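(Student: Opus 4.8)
The plan is to study the deterministic iteration $(x_A(t+1),x_B(t+1)) = \mathcal{T}(x_A(t),x_B(t))$ with the affinity advantage attached to group $A$, and to show it converges to $(x_A^\star,x_B^\star) = (2\alpha(p-\epsilon)+\epsilon,\,0)$ from every start with $x_A(0) > x_B(0)$. I would organize the argument around four claims: (i) the leader never changes, so it is legitimate to keep the affinity with $A$ at every step; (ii) the over-subscribed region $\{X \ge 2\alpha\}$ is forward-invariant once $\epsilon \ge \tilde\epsilon$; (iii) any trajectory beginning under-subscribed is driven into the over-subscribed region (or, in the boundary case $\epsilon = \tilde\epsilon$, up to the boundary); and (iv) inside the over-subscribed region the dynamics collapse onto the fixed point. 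I write $X(t) = x_A(t)+x_B(t)$ and $\Delta(t) = x_A(t)-x_B(t)$ throughout.

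For (i) I would compute $\Delta(t+1)$ from the explicit maps in each regime. The over-subscribed map gives $\Delta(t+1) = \frac{2\alpha p}{X(t)}\Delta(t) + \epsilon\big(1 - \frac{2\alpha x_A(t)}{X(t)}\big)$, and the under-subscribed map gives $\Delta(t+1) = \frac{2(1-\alpha)}{2-X(t)}\big(p\,\Delta(t) + (1-x_A(t))\epsilon\big)$; since $\tfrac{2\alpha x_A}{X} \le 2\alpha < 1$ and $x_A \le 1$, both are strictly positive whenever $\Delta(t) > 0$. Hence $\Delta(t) > 0$ for all $t$, so $A$ always holds the affinity and the fixed-$A$ map is the correct one to iterate. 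For (ii), summing the over-subscribed map yields the clean identity $X(t+1) = 2\alpha p + \epsilon\big(1 - \frac{2\alpha x_A(t)}{X(t)}\big) \ge 2\alpha p + \epsilon(1-2\alpha)$, using $x_A \le X$, and $\epsilon \ge \tilde\epsilon$ makes the right-hand side at least $2\alpha$; thus $X(t) \ge 2\alpha \Rightarrow X(t+1) \ge 2\alpha$.

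Step (iii) is where I expect the real work, and it is the main obstacle. Summing the under-subscribed map gives $X(t+1) = 2\alpha p + \frac{2(1-\alpha)(1-x_A(t))}{2-X(t)}\epsilon$; bounding $x_A \le X$ produces $X(t+1) \ge g(X(t))$ with $g(X) = 2\alpha p + \frac{2(1-\alpha)(1-X)}{2-X}\epsilon$. I would show $g(X)-X$ is strictly decreasing (its derivative is $-1 - \frac{2(1-\alpha)\epsilon}{(2-X)^2} < 0$) and that $g(2\alpha)-2\alpha = (1-2\alpha)(\epsilon-\tilde\epsilon) \ge 0$, so $g(X) > X$ on $[0,2\alpha)$ and $X(t)$ strictly increases while under-subscribed. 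For $\epsilon > \tilde\epsilon$ the gap $g(2\alpha)-2\alpha$ is a strictly positive constant, forcing $X(t)$ up by a fixed amount per step and hence into $\{X \ge 2\alpha\}$ in finitely many steps. The delicate case is $\epsilon = \tilde\epsilon$, where the fixed point sits exactly on the boundary ($x_A^\star = 2\alpha$) and the trajectory may approach $X = 2\alpha$ only asymptotically from within the under-subscribed region; there I would argue $X(t) \uparrow 2\alpha$ and feed the vanishing slack $2\alpha - X(t)$ into the recursion $x_B(t+1) \le p\,x_B(t) + \frac{p\,(2\alpha-X(t))}{2-X(t)}$ to conclude $x_B(t) \to 0$ by a standard linear-recursion-with-vanishing-forcing argument, so that $x_A(t) = X(t)-x_B(t) \to 2\alpha = x_A^\star$.

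Finally, for (iv), once inside the invariant over-subscribed region the $B$-map is $x_B(t+1) = \frac{2\alpha p}{X(t)}x_B(t) \le p\,x_B(t)$, so $x_B(t) \to 0$ geometrically. Rewriting the total-mass identity as $X(t+1) = X^\star + \frac{2\alpha\epsilon\,x_B(t)}{X(t)}$ with $X^\star = 2\alpha p + \epsilon(1-2\alpha) = x_A^\star$ gives $|X(t+1)-X^\star| \le \epsilon\,x_B(t) \to 0$, whence $X(t) \to X^\star$ and $x_A(t) = X(t)-x_B(t) \to x_A^\star$. Combining the four steps yields convergence to $(2\alpha(p-\epsilon)+\epsilon,0)$ from every admissible start; since every such trajectory converges there, this point is also the unique fixed point of the system when $\epsilon \ge \tilde\epsilon$, consistent with Claim~\ref{clm:1}. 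The only genuinely tricky ingredient is step (iii), and within it the boundary case $\epsilon = \tilde\epsilon$; everything else reduces to the two exact identities for $X(t+1)$ together with the sign of $\Delta(t+1)$.
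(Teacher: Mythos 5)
Your proposal is correct, and its overall skeleton matches the paper's: establish that the over-subscribed region is forward-invariant when $\epsilon \ge \tilde\epsilon$, note that there $x_B(t+1) = \frac{2\alpha p}{X(t)}x_B(t) \le p\,x_B(t)$ so $x_B \to 0$ geometrically, and then read off $x_A \to 2\alpha(p-\epsilon)+\epsilon$ from the total-mass identity. Your steps (ii) and (iv) are essentially the paper's Subcase~(a), and your explicit verification in step (i) that the leader never flips is a point the paper passes over with a bare ``clearly'' --- your version is the more careful one, since the correctness of iterating the fixed-$A$ map depends on it. The one genuine divergence is step (iii). The paper proves a sharper statement: from \emph{any} under-subscribed point with $x_A > x_B$, the system lands in the over-subscribed region after a \emph{single} step. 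It gets this by keeping the full constraint $x_A < 2\alpha - x_B$ in play (the key inequality reduces to $2(1-\alpha)(1-x_A) - (1-2\alpha)(2-x_A-x_B) = 2\alpha(1-x_B) - (x_A - x_B) > 2x_B(1-\alpha) \ge 0$), which holds for all $\epsilon \ge \tilde\epsilon$ including the boundary case. Your coarser bound $x_A \le X$ discards exactly the slack that makes this work, leaving you only with monotone growth of $X(t)$; this forces the finitely-many-steps argument for $\epsilon > \tilde\epsilon$ and the separate vanishing-forcing analysis at $\epsilon = \tilde\epsilon$. That boundary analysis is workable (you would additionally need to rule out $X(t) \uparrow L < 2\alpha$, which follows from your own observation that $g(X)-X$ is decreasing and hence bounded below by $g(L)-L>0$ on $[0,L]$), but it is solving a difficulty that the sharper one-step estimate shows never actually arises. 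If you tighten step (iii) by using $1-x_A > 1 - (2\alpha - x_B)$ instead of $1-x_A \ge 1-X$, your proof collapses to the paper's.
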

\begin{proof}
We will do a case-by-case analysis, considering the location of the system at time $t$ (over-subscribed or under-subscribed regime) and then reasoning about the evolution in future time steps. WLOG, we assume that $x_A(t) > x_B(t)$. \\

\noindent
\textit{Subcase a):} Suppose, $(x_A(t), x_B(t))$ satisfies $x_A(t) + x_B(t) \geq 2\alpha$ (over-subscribed regime) with $x_A(t) > x_B(t)$. Then, for the next time step, we have: 
\[
       x_A(t+1) = \frac{x_A(t)}{x_A(t) + x_B(t)}\cdot (2\alpha)\cdot (p-\epsilon) + \epsilon, 
\]
\[
      x_B(t+1) = \frac{x_B(t)}{x_A(t) + x_B(t)}\cdot (2\alpha)\cdot p
\]
First, we claim that $x_A(t+1) + x_B(t+1) \geq 2 \alpha$. We will prove by contradiction. Suppose, if possible, $x_A(t+1) + x_B(t+1) < 2\alpha$. This implies: 
\begin{align*}
    x_A(t+1) + x_B(t+1) &= 2\alpha p + \epsilon \left(1 - \frac{2\alpha x_A(t)}{x_A(t) + x_B(t)} \right) < 2\alpha \\
    \implies & \epsilon \left(1 - \frac{2\alpha x_A(t)}{x_A(t) + x_B(t)} \right) < 2\alpha (1-p) \\
    \implies & \epsilon < \frac{2\alpha (1-p)}{\left(1 - \frac{2\alpha x_A(t)}{x_A(t) + x_B(t)} \right)} \leq \frac{2\alpha (1-p)}{(1-2\alpha)}.
\end{align*}
But this is a contradiction because $\epsilon \geq \frac{2\alpha(1-p)}{(1-2\alpha)}$. So we have shown that if the system starts off in the over-subscribed regime, it will continue to be there for all future time points. \\

Now, we make the following observations: 
\begin{enumerate}
    \item Observe that $x_B(t+1) = \frac{x_B(t)}{x_A(t) + x_B(t)}\cdot (2\alpha p)$ implies that $x_B(t+1) \leq p \cdot x_B(t)$ since $x_A(t) + x_B(t) \geq 2 \alpha$ by assumption. We have already argued that $x_A(t) + x_B(t) \geq 2\alpha \implies x_A(t+1) + x_B(t+1) \geq 2 \alpha$. Therefore, the same applies at time step $t+1$ implying that $x_B(t+2) \leq p\cdot x_B(t+1) \leq p^2 \cdot x_B(t)$. By induction, $x_B(t+n) \leq p^n \cdot x_B(t)$. Thus, $\lim_{n \to \infty}\{x_B(t+n)\} = 0$.   
    \item Since $x_A(t) + x_B(t) \geq 2 \alpha$, we know that for any $n \geq 1$, $x_A(t+n) + x_B(t+n) \geq 2 \alpha$. We have already shown that by choosing $n$ large enough, $x_B(t+n)$ can be made arbitrarily small. Therefore, 
    \[
           \frac{x_A(t+n)}{x_A(t+n) + x_B(t+n)} \to 1. 
    \]
    \item Using the last observation, we conclude that:
    \[
           \lim_{n \to \infty} x_A(t+n) = \lim_{n \to \infty} \frac{x_A(t+n-1)}{x_A(t+n-1) +x_B(t+n-1)}\cdot (2\alpha)(p-\epsilon) + \epsilon = 2\alpha (p-\epsilon) + \epsilon.
    \]
\end{enumerate}
This concludes the first part of the proof --- if the starting point lies in the over-subscribed regime, the system will always converge to the fixed point $(x_A, x_B)$ given by $\left(2\alpha(p-\epsilon)+\epsilon, 0\right)$. \\

\noindent
\textit{Subcase b):} Now, suppose that $(x_A(t), x_B(t))$ satisfies $x_A(t) + x_B(t) <  2\alpha$ (starting point is in the under-subscribed regime) with $x_A(t) > x_B(t)$. Then, we have the following: 
\begin{align*}
    x_A(t+1) &= x_A(t) \cdot p + \frac{(1-x_A(t))}{(2-x_A(t)-x_B(t))}\cdot (2\alpha - x_A(t) - x_B(t))\cdot p + \frac{2(1-\alpha)(1-x_A(t))}{(2-x_A(t) - x_B(t))}\cdot \epsilon, \\
    x_B(t+1) &= x_B(t) \cdot p + \frac{(1-x_B(t))}{(2-x_A(t)-x_B(t))}\cdot (2\alpha - x_A(t) - x_B(t))\cdot p
\end{align*}
Clearly, $x_A(t+1) > x_B(t+1)$. Our goal now is to show that $x_A(t+1) + x_B(t+1) \geq 2 \alpha$. If we can show this, the dynamics of the over-subscribed regime (subcase a) takes over and we have already shown that it converges to the desired fixed point. So if we can show that $x_A(t+1) + x_B(t+1) \geq 2 \alpha$, we are done. We will prove by contradiction. Suppose, $x_A(t+1) + x_B(t+1) <  2\alpha$. This implies: 
\begin{align*}
    x_A(t+1) + x_B(t+1) =& ~2\alpha p + \frac{2(1-\alpha)(1-x_A(t))}{(2-x_A(t) - x_B(t))}\cdot \epsilon < 2\alpha \\
    \implies & ~ \epsilon < \frac{2\alpha(1-p)}{(1-x_A(t))}\cdot \frac{(2-x_A(t) - x_B(t))}{2(1-\alpha)}. 
\end{align*}
But, we know that $\epsilon \geq \frac{2\alpha(1-p)}{(1-2\alpha)}$. Therefore, it must be that:
\begin{align*}
   & \frac{2\alpha(1-p)}{(1-2\alpha)} < \frac{2\alpha(1-p)}{(1-x_A(t))}\cdot \frac{(2-x_A(t) - x_B(t))}{2(1-\alpha)} \\
   \iff & (2-2\alpha)(1-x_A(t)) < (1-2\alpha)(2-x_A(t)-x_B(t))\\
   \iff & 2 - 2 x_A(t) - 2\alpha + 2\alpha x_A(t) < 2 - x_A(t) - x_B(t) - 4\alpha + 2 \alpha x_A(t) + 2\alpha x_B(t) \\
   \iff & 2\alpha - 2\alpha x_B(t) < x_A(t) - x_B(t).
\end{align*}
However, since $x_A(t) < 2\alpha - x_B(t)$, $x_A(t) - x_B(t) < 2\alpha - 2x_B(t)$. This implies, $2\alpha - 2\alpha x_B(t) < 2 - 2x_B(t) \implies 2x_B(t) < 2\alpha x_B(t)$. This is clearly a contradiction because $\alpha < \frac{1}{2}$. This concludes the proof. 
\end{proof}

\begin{lem}\label{lem:brouwer}
Suppose that $\alpha < \frac{1}{2}$. If the affinity parameter $\epsilon$ satisfies $0 < \epsilon < \tilde \epsilon = \frac{2\alpha(1-p)}{(1-2\alpha)}$ and is in favour of group $A$, then the system has a fixed point $(x_A, x_B)$ with $x_A > x_B$ and $x_A + x_B < 2\alpha$. 
\end{lem}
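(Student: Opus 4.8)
The plan is to obtain existence via Brouwer's fixed point theorem applied not to the under-subscribed formula in isolation, but to the \emph{full} transition map $\mathcal{T}$ restricted to a convenient invariant triangle, and then to localize the fixed point to the under-subscribed regime using the threshold condition $\epsilon<\tilde\epsilon$. I would first record that $\mathcal{T}$ is a continuous self-map of $[0,1]^2$: each coordinate of $\mathcal{T}$ is a convex combination of $p$ (or $0$) and $\epsilon$ with weights in $[0,1]$, so it lands in $[0,1]$, and the two piecewise formulas agree on the seam $x_A+x_B=2\alpha$ (both reduce to $x_A\mapsto px_A+(1-x_A)\epsilon$ there), so $\mathcal{T}$ is continuous across the regime boundary. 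This continuity is what lets Brouwer be applied to the combined dynamics rather than to each regime separately.

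Next I would introduce the compact convex triangle $R=\{(x_A,x_B)\in[0,1]^2 : x_A\ge x_B\}$ and show $\mathcal{T}(R)\subseteq R$, i.e. that $x_A\ge x_B$ forces $\mathcal{T}_A\ge\mathcal{T}_B$. In the over-subscribed regime a short computation gives $\mathcal{T}_A-\mathcal{T}_B=\tfrac{2\alpha(x_A-x_B)}{x_A+x_B}\,p+\bigl(1-\tfrac{2\alpha x_A}{x_A+x_B}\bigr)\epsilon\ge 0$, and in the under-subscribed regime a similar reduction yields $\mathcal{T}_A-\mathcal{T}_B=\tfrac{2(1-\alpha)}{2-x_A-x_B}\bigl[p(x_A-x_B)+(1-x_A)\epsilon\bigr]\ge 0$, both nonnegative because $\epsilon>0$ is in favour of $A$ and $x_A\le 1$. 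Since $R$ is convex and compact and $\mathcal{T}$ is continuous with $\mathcal{T}(R)\subseteq R$, Brouwer guarantees a fixed point $(x_A^\ast,x_B^\ast)\in R$.

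The real work is then localizing this fixed point, and here the hypothesis $\epsilon<\tilde\epsilon$ enters decisively. I would rule out any fixed point with $x_A^\ast+x_B^\ast\ge 2\alpha$: on that set $\mathcal{T}$ uses the over-subscribed formula, whose second equation $x_B^\ast=\tfrac{2\alpha x_B^\ast}{x_A^\ast+x_B^\ast}\,p$ forces either $x_B^\ast=0$ or $x_A^\ast+x_B^\ast=2\alpha p<2\alpha$ (the latter contradicting over-subscription), so $x_B^\ast=0$ and, as in Claim~\ref{clm:1}, $x_A^\ast=2\alpha(p-\epsilon)+\epsilon$ with sum $2\alpha p+(1-2\alpha)\epsilon$. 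Because $\epsilon<\tilde\epsilon=\tfrac{2\alpha(1-p)}{1-2\alpha}$ gives $(1-2\alpha)\epsilon<2\alpha(1-p)$, this sum is strictly below $2\alpha$, contradicting over-subscription; hence every fixed point, including the Brouwer one, satisfies $x_A^\ast+x_B^\ast<2\alpha$. Finally I would establish the strict separation $x_A^\ast>x_B^\ast$ by noting that on the diagonal $x_A=x_B<1$ the formulas above give $\mathcal{T}_A-\mathcal{T}_B>0$ (strictly, since the $\epsilon$-term survives), so no diagonal point can be fixed; the corner $(1,1)$ is excluded as it is not fixed by direct substitution.

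The main obstacle is not the existence step—Brouwer is routine once $R$ is shown invariant—but the localization: verifying continuity at the regime seam so that the theorem applies to the piecewise map, and then leveraging $\epsilon<\tilde\epsilon$ to force the fixed point strictly into the under-subscribed region and off the diagonal. The delicate point is that as $\epsilon\uparrow\tilde\epsilon$ the fixed point approaches the boundary $x_A+x_B=2\alpha$, so any argument based on a fixed (\mbox{$\epsilon$-independent}) invariant sub-region of the under-subscribed regime would fail; working with the full triangle $R$ and excluding over-subscribed fixed points by the exact threshold inequality is what makes the argument uniform in $\epsilon\in(0,\tilde\epsilon)$.
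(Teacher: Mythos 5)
Your proof is correct, and it takes a genuinely different route from the paper's. The paper applies Brouwer to an $\epsilon$-dependent compact convex subset $\mathcal{X}$ of the under-subscribed regime (carved out by the extra constraint $x_A \ge 1 - \tfrac{2\alpha(1-p)}{\epsilon}$), and the bulk of its work goes into verifying that the under-subscribed map sends $\mathcal{X}$ into itself, in particular that $y_A + y_B \le 2\alpha$ and $y_A \ge 1 - \tfrac{2\alpha(1-p)}{\epsilon}$ are preserved. You instead apply Brouwer to the full order triangle $R=\{x_A\ge x_B\}$, after checking that the piecewise map is continuous across the seam $x_A+x_B=2\alpha$ (where both formulas indeed agree), and then localize the fixed point: the only candidate fixed point of the over-subscribed formula is $(2\alpha(p-\epsilon)+\epsilon,\,0)$, whose sum $2\alpha p+(1-2\alpha)\epsilon$ falls strictly below $2\alpha$ exactly when $\epsilon<\tilde\epsilon$, so no fixed point can lie in the over-subscribed region; the diagonal is excluded because the $\epsilon$-term makes $\mathcal{T}_A-\mathcal{T}_B$ strictly positive there. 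All the computations you invoke check out (the invariance of $R$ in both regimes, the convex-combination bound keeping the image in $[0,1]^2$, and the exclusion of $x_B=0$ versus $x_A+x_B=2\alpha p$ in the over-subscribed fixed-point equation), and your argument is shorter than the paper's chain of inequalities. The trade-off is that the paper's heavier invariance computation is not only for existence: it establishes that the region $\mathcal{U}_{\ge}$ is forward-invariant under the dynamics, a fact that the proof of Lemma~\ref{lem:unique} explicitly reuses (``once the system reaches $\mathcal{U}_{\ge}$, it will continue to remain there''). So your proof fully suffices for the lemma as stated, but if it were substituted for the paper's, that invariance claim would have to be proved separately where it is needed downstream.
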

\begin{proof}
The proof will be using Brouwer's fixed point theorem. Consider the following set: 
\begin{align*}
    \mathcal{X} := \left\{ (x_A, x_B): 0 \leq x_A, x_B \leq 1,~ x_A + x_B \leq 2\alpha,~ x_A \geq x_B,~ x_A \geq 1 - \frac{2\alpha (1-p)}{\epsilon}  \right\}
\end{align*}
We make the following observations about $\mathcal{X}$:
\begin{itemize}
    \item $\mathcal{X} \neq \emptyset$. This follows because $\epsilon < \frac{2\alpha (1-p)}{(1-2\alpha)}$ which guarantees that $1 - \frac{2\alpha (1-p)}{\epsilon} <  2\alpha$. 
    \item Additionally, $\mathcal{X}$ is a convex set because it is defined entirely by linear inequalities. 
    \item $\mathcal{X}$ is also a compact set (it is both closed and bounded).  
\end{itemize}
Now, consider the mapping $\mathcal{T}: \mathcal{X} \to \mathbb{R}^2_{\geq 0}$ such that $(y_A, y_B) = \mathcal{T}(x_A, x_B)$ for $(x_A, x_B) \in \mathcal{X}$.
\begin{align*}
    y_A &= x_A \cdot p + \frac{(1-x_A)(2\alpha -x_A - x_B)}{(2-x_A - x_B)}\cdot p + \frac{2(1-\alpha)(1-x_A)}{(2-x_A - x_B)}\cdot \epsilon\\
    y_B &= x_B \cdot p + \frac{(1-x_B)(2\alpha -x_A - x_B)}{(2-x_A - x_B)}\cdot p
\end{align*}
Clearly, $\mathcal{T}(\cdot)$ is a continuous mapping. We will now show that $\mathcal{T}$ maps from $\mathcal{X}$ to itself, i.e., $(y_A, y_B) \in \mathcal{X}$. It can be easily verified that $0 \leq y_A, y_B \leq 1$. Now, \begin{align*}
    y_A - y_B &= (x_A - x_B)\cdot p - \frac{(x_A - x_B)(2\alpha - x_A - x_B)}{(2-x_A -x_B)}\cdot p + \frac{2(1-\alpha)(1-x_A)}{(2-x_A - x_B)}\cdot \epsilon \\
    &= \frac{2(1-\alpha)(x_A - x_B)}{(2-x_A -x_B)}\cdot p + \frac{2(1-\alpha)(1-x_A)}{(2-x_A - x_B)}\cdot \epsilon \\
    &> 0. \quad \text{(since $x_A \geq x_B$)}
\end{align*}
Further, 
\begin{align*}
    y_A + y_B &= 2\alpha p + \frac{2(1-\alpha)(1-x_A)}{(2-x_A-x_B)}\cdot \epsilon \\
    &\leq 2\alpha p + \frac{2(1-\alpha)}{(2-x_A-x_B)}\cdot 2\alpha (1-p) \quad \text{($\because~ x_A \geq 1-\frac{2\alpha(1-p)}{\epsilon} \implies (1-x_A)\epsilon \leq 2\alpha (1-p)$)} \\
    &\leq 2\alpha p + 2\alpha (1-p) \quad \text{($\because~x_A + x_B \leq 2\alpha \implies 2-x_A-x_B \geq 2(1-\alpha) \implies \frac{2(1-\alpha)}{(2-x_A-x_B)} \leq 1$)} \\
    &= 2\alpha. 
\end{align*}
Finally, we have to show that $y_A \geq 1 - \frac{2\alpha(1-p)}{\epsilon}$. We have: 
\begin{align*}
    y_A &= x_A \cdot p + \frac{(1-x_A)(2\alpha -x_A - x_B)}{(2-x_A - x_B)}\cdot p + \frac{2(1-\alpha)(1-x_A)}{(2-x_A - x_B)}\cdot \epsilon\\
    &= x_A \cdot p + \frac{(2\alpha -x_A-x_B)}{(2-x_A-x_B)}\cdot p - (x_A \cdot p) \cdot  \frac{(2\alpha -x_A-x_B)}{(2-x_A-x_B)} + \frac{2(1-\alpha)(1-x_A)}{(2-x_A - x_B)}\cdot \epsilon\\
    &= \frac{2(1-\alpha)}{(2-x_A-x_B)}\cdot (x_A \cdot p) + \frac{(2\alpha -x_A-x_B)}{(2-x_A-x_B)}\cdot p + \frac{2(1-\alpha)(1-x_A)}{(2-x_A - x_B)}\cdot \epsilon\\
    &= \frac{2(1-\alpha)}{(2-x_A-x_B)}\left(x_A \cdot p + (1-x_A)\cdot \epsilon \right) + \frac{(2\alpha -x_A-x_B)}{(2-x_A-x_B)}\cdot p \\
    &= \frac{2(1-\alpha)}{(2-x_A-x_B)}\left(x_A \cdot (p-\epsilon) + \epsilon \right) + \frac{(2\alpha -x_A-x_B)}{(2-x_A-x_B)}\cdot p \\
    &\geq \frac{2(1-\alpha)}{(2-x_A-x_B)}\left( (1-\frac{2\alpha(1-p)}{\epsilon}) \cdot (p-\epsilon) + \epsilon \right) + \frac{(2\alpha -x_A-x_B)}{(2-x_A-x_B)}\cdot p \\
    &= \frac{2(1-\alpha)}{(2-x_A-x_B)}\left( p-\frac{2\alpha(1-p)(p-\epsilon)}{\epsilon}  \right) + \frac{(2\alpha -x_A-x_B)}{(2-x_A-x_B)}\cdot p \\
    &= \frac{2(1-\alpha)p}{(2-x_A-x_B)} + \frac{(2\alpha -x_A-x_B)p}{(2-x_A-x_B)} - \frac{4\alpha(1-\alpha)(1-p)(p-\epsilon)}{\epsilon(2-x_A-x_B)}\\
    &= p - \frac{2(1-\alpha)}{(2-x_A-x_B)}\cdot \frac{2\alpha(1-p)(p-\epsilon)}{\epsilon} \\
    &\geq p - \frac{2\alpha(1-p)(p-\epsilon)}{\epsilon} \quad \text{(using $\frac{2(1-\alpha)}{(2-x_A-x_B)} \leq 1$)}\\
    &= p \left(1 - \frac{2\alpha(1-p)}{\epsilon} \right) + (1-p)(2\alpha) \\
    &> 1 - \frac{2\alpha(1-p)}{\epsilon}.
    \end{align*}
The last step follows from the fact that $1 - \frac{2\alpha(1-p)}{\epsilon} <  2\alpha$ due to $\epsilon < \frac{2\alpha(1-p)}{(1-2\alpha)}$ and any convex combination of two terms is always greater than the minimum of the terms. This concludes the proof that $(y_A, y_B) \in \mathcal{X}$. \\

Thus, $\mathcal{T}$ is a continuous mapping which maps from $\mathcal{X}$ to itself with $\mathcal{X}$ being non-empty, convex and compact. Therefore, by Brouwer's fixed point theorem, there must exist a fixed point of $\mathcal{T}$ in $\mathcal{X}$. However, we are not done yet -- in order to show that the fixed point $(x_A, x_B)$ satisfies both $x_A > x_B$ and $x_A + x_B < 2\alpha$, we still need to argue that it cannot satisfy $x_A + x_B = 2\alpha$ or $x_A = x_B$. Each of these cases is easy to verify. Firstly, if $x_A = x_B$ at the fixed point, we must have $\frac{(1-x_A)}{(2-x_A-x_B)} = 0$ which implies $x_A = x_B = 1$ which violates $x_A + x_B \leq 2\alpha$. Similarly, if $x_A + x_B = 2\alpha$, we must have $x_B = 0$ implying $x_A = 2\alpha$, but then it does not satisfy the first fixed point equation. Therefore, the fixed point must satisfy $x_A > x_B$ and $x_A + x_B < 2\alpha$ and is, thus, a fixed point of the under-subscribed regime. This concludes the proof of existence.   
\end{proof}

\begin{lem}\label{lem:fp_under}
Suppose that $\alpha < \frac{1}{2}$ and $0 < \epsilon < \tilde \epsilon$ in favour of group $A$. Then the system has \textbf{only one} fixed point $(x_A, x_B)$ which satisfies both $x_A > x_B$ and $x_A + x_B < 2\alpha$. Further, we have: 
\[
               x_A > 1 - \min\left(\frac{2\alpha (1-p)}{\epsilon}, \frac{(1-\alpha p)^2}{2(1-\alpha)\epsilon} \right),
\]
and the separation between the groups at the fixed point is strictly greater than\\ $\max\left(0, 2(1-\alpha) - \min\left( \frac{4\alpha(1-p)}{\epsilon}, \frac{(1-\alpha p)^2}{(1-\alpha)\epsilon} \right)\right)$.
\end{lem}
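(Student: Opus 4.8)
The plan is to collapse the two-dimensional fixed-point system into a single scalar equation in the aggregate $S := x_A + x_B$ and then exploit monotonicity. Writing the two under-subscribed fixed-point equations (with $q=0$) and adding them, the symmetric part telescopes: since $(1-x_A)+(1-x_B)=2-S$, the sum collapses to
\[
 S = 2\alpha p + \frac{2(1-\alpha)(1-x_A)}{2-S}\,\epsilon, \qquad (\star)
\]
which rearranges to the exact identity $1-x_A = \frac{(S-2\alpha p)(2-S)}{2(1-\alpha)\epsilon}$. Subtracting the two equations instead isolates the gap, giving $x_A-x_B = \frac{2(1-\alpha)(1-x_A)\epsilon}{\,2-S-2(1-\alpha)p\,}$, whose denominator is strictly positive (since $2-S>2(1-\alpha)>2(1-\alpha)p$), so $x_A>x_B$ automatically at any such fixed point. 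Finally, the group-$B$ equation alone lets me solve $x_B$ as an explicit rational function of $S$, namely $x_B(S)=\frac{(2\alpha-S)p}{\,2-S-2p(1-\alpha)\,}$, so $x_A(S)=S-x_B(S)$ is pinned down by $S$; this makes the whole system equivalent to a single equation in $S$.

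For uniqueness I would substitute $x_A(S)$ into $(\star)$ and show it has exactly one root in the feasible window $S\in(2\alpha p,2\alpha)$ (positivity of the $\epsilon$-term in $(\star)$ forces $S>2\alpha p$, under-subscription forces $S<2\alpha$). Concretely, I define $\psi(S):=\frac{1-x_A(S)}{2-S}$ and recast $(\star)$ as $\psi(S)=\frac{S-2\alpha p}{2(1-\alpha)\epsilon}$, i.e.\ an intersection of $\psi$ with a strictly increasing line. The endpoints give $\psi(2\alpha p)=\tfrac{1}{2}$ (above the line, which is $0$ there) and $\psi(2\alpha)=\frac{1-2\alpha}{2(1-\alpha)}$, while the hypothesis $\epsilon<\tilde\epsilon=\frac{2\alpha(1-p)}{1-2\alpha}$ makes the line strictly exceed $\psi$ at $S=2\alpha$; this already re-confirms the Brouwer existence of Lemma~\ref{lem:brouwer}. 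To upgrade ``at least one'' to ``exactly one'' crossing, the key step is to show $\psi$ is strictly decreasing on $[2\alpha p,2\alpha]$, so that $\psi$ minus the increasing line is strictly monotone. I expect this monotonicity check — differentiating the explicit rational form of $\psi$ and verifying $\psi'\le 0$ throughout — to be the main obstacle, since $1-x_A$ and $2-S$ both decrease in $S$ and their ratio is a priori ambiguous; the saving grace is that $\epsilon<\tilde\epsilon$ confines $S$ to the narrow window $(2\alpha p,2\alpha)$ where the computation stays tractable.

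For the bound on $x_A$, I would bound the numerator of the exact identity by studying the parabola $\pi(S):=(S-2\alpha p)(2-S)$, whose vertex lies at $S=1+\alpha p>2\alpha$. On the feasible interval $\pi$ is therefore increasing, so $\pi(S)<\pi(2\alpha)=4\alpha(1-\alpha)(1-p)$, yielding $1-x_A<\frac{2\alpha(1-p)}{\epsilon}$; bounding instead by the global maximum $\pi(1+\alpha p)=(1-\alpha p)^2$ yields $1-x_A<\frac{(1-\alpha p)^2}{2(1-\alpha)\epsilon}$. Since $1-x_A$ lies strictly below \emph{both} quantities, it lies below their minimum, giving $x_A>1-\min\!\big(\frac{2\alpha(1-p)}{\epsilon},\frac{(1-\alpha p)^2}{2(1-\alpha)\epsilon}\big)$ as claimed.

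The separation bound then follows with no further work. Because $x_B=S-x_A$ and $S<2\alpha$, I have $x_A-x_B=2x_A-S>2x_A-2\alpha$; substituting the $x_A$ bound and using $2\min(a,b)=\min(2a,2b)$ gives $x_A-x_B>2(1-\alpha)-\min\!\big(\frac{4\alpha(1-p)}{\epsilon},\frac{(1-\alpha p)^2}{(1-\alpha)\epsilon}\big)$, and combining with the already-established $x_A-x_B>0$ produces the stated $\max(0,\cdot)$ lower bound, completing the proof.
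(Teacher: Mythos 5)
Your proposal is correct, and for the uniqueness claim it takes a genuinely different route from the paper. The paper observes that any under-subscribed fixed point forces $S=x_A+x_B$ to solve the quadratic $g(y)=y^2-(2+2\alpha p)y+4\alpha p+2(1-\alpha)(1-x_A)\epsilon=0$, discards the larger root because it exceeds $1+\alpha p>2\alpha$, and then asserts that pinning down $S$ pins down $(x_A,x_B)$; since the quadratic's constant term still involves $x_A$, that argument leans implicitly on exactly the reduction you make explicit, namely that the group-$B$ equation determines $x_B(S)=\frac{(2\alpha-S)p}{2-S-2p(1-\alpha)}$ and hence $x_A(S)=S-x_B(S)$. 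Your formulation as a single crossing of $\psi(S)=\frac{1-x_A(S)}{2-S}$ with the increasing line $\frac{S-2\alpha p}{2(1-\alpha)\epsilon}$ is therefore a cleaner and more self-contained uniqueness argument, and it simultaneously re-derives existence (making Brouwer unnecessary for this regime) and gives $x_A>x_B$ for free from the subtracted equation. The one step you flag as the main obstacle does go through, and more easily than you fear: substituting $x_B(S)$ and using partial fractions collapses $\psi$ to
\[
\psi(S)=1-\frac{1-p}{\,2-S-2p(1-\alpha)\,},
\]
which is manifestly strictly decreasing on $(2\alpha p,2\alpha)$ because the denominator is positive there (it equals $2(1-\alpha)(1-p)>0$ at $S=2\alpha$) and decreasing in $S$; you can also sanity-check $\psi(2\alpha p)=\tfrac12$ and $\psi(2\alpha)=\frac{1-2\alpha}{2(1-\alpha)}$ against your endpoint values. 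For the bounds on $x_A$ and the separation, your argument is the paper's in different clothing: your parabola $\pi(S)=(S-2\alpha p)(2-S)$ satisfies $g(S)=-\pi(S)+2(1-\alpha)(1-x_A)\epsilon$, so bounding $\pi$ by its value at $2\alpha$ and by its vertex value $(1-\alpha p)^2$ at $S=1+\alpha p$ is exactly the paper's pair of conditions $g(2\alpha)<0$ and $g(1+\alpha p)<0$, and the final step $\Delta=2x_A-S>2x_A-2\alpha$ is identical. In short: same computations for the quantitative claims, a different and arguably tighter argument for uniqueness, with the single unverified monotonicity step closing cleanly.
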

\begin{proof}
Since $\alpha < \frac{1}{2}$ and $0 < \epsilon < \tilde \epsilon$, by the previous lemma, there does exist at least one fixed point $(x_A, x_B)$ with $x_A > x_B$ and $x_A + x_B < 2\alpha$. Recall, we have already shown that at any fixed point in the under-subscribed regime, $x_A$ and $x_B$ must satisfy: 
\[
       (x_A + x_B) = 2\alpha p + \frac{2(1-\alpha)(1-x_A)}{(2-x_A-x_B)}\cdot \epsilon.
\]
This implies that $(x_A + x_B)$ must be a solution to the following quadratic equation: 
\[
        g(y) = y^2 - (2+2\alpha p)y + 4\alpha p + 2(1-\alpha)(1-x_A)\epsilon = 0. 
\]
Since we know that at least one fixed point exists, $g(y) = 0$ must have at least one real root which implies that it must have both real roots (since a quadratic equation cannot have odd number of real roots). Therefore, $(x_A + x_B)$ must be from the set: 
\[
        \left\{ (1+\alpha p) - \sqrt{(1-\alpha p)^2 - 2(1-\alpha)(1-x_A)\epsilon},  (1+\alpha p) + \sqrt{(1-\alpha p)^2 - 2(1-\alpha)(1-x_A)\epsilon} \right\}
\]
Now, since $\alpha < \frac{1}{2}$, $1+\alpha p > 2\alpha$. Therefore, only one of the above solutions (the first one) is a candidate value for $x_A + x_B$ at a fixed point of the under-subscribed regime. This concludes the proof that there can be \textbf{only one such fixed point} (because once $x_A + x_B$ is determined uniquely, $x_A$ and $x_B$ are also characterized uniquely).  \\

Now, note that $2\alpha p < x_A + x_B < 2\alpha < 1 + \alpha p$. Further, we have the following: 
\begin{enumerate}
    \item $g(x_A + x_B)$ = 0 (self-explanatory).
    \item We can verify that $g(2\alpha p) = 2(1-\alpha)(1-x_A) > 0$.
    \item $g(y)$ attains its minimum value at $y = 1+\alpha p$. Since $g(y) = 0$ has distinct real roots, $g(1+\alpha p)$ must be $< 0$ which implies the discriminant is $> 0$ or equivalently, $(1-\alpha p)^2 > 2(1-\alpha)(1-x_A)\epsilon$ or $x_A > 1 - \frac{(1-\alpha p)^2}{2(1-\alpha)\epsilon}$.
\end{enumerate}
Since $g(\cdot)$ is clearly a continuous function, by the intermediate value theorem, we must also have $g(2\alpha) < 0$. 
\begin{align*}
    g(2\alpha) < 0 &\implies 4\alpha^2 - (2+2\alpha p)(2\alpha) + 4\alpha p + 2(1-\alpha)(1-x_A)\epsilon < 0 \\
    &\implies 4\alpha^2 - 4\alpha - 4\alpha^2 p + 4\alpha p + 2(1-\alpha)(1-x_A)\epsilon < 0 \\
    &\implies -4\alpha(1-\alpha)(1-p) + 2(1-\alpha)(1-x_A)\epsilon < 0 \\
    &\implies 2(1-\alpha) \left[ -2\alpha(1-p) + (1-x_A)\epsilon \right] < 0 \\
    &\implies (1-x_A)\epsilon < 2\alpha (1-p) \\
    &\implies x_A > 1 - \frac{2\alpha(1-p)}{\epsilon}. 
\end{align*}
Therefore, combining both results, we obtain: 
\[
          x_A > \max \left[ 1 - \frac{2\alpha(1-p)}{\epsilon}, 1 -  \frac{(1-\alpha p)^2}{2(1-\alpha)\epsilon}  \right] = 1 - \min\left(\frac{2\alpha(1-p)}{\epsilon}, \frac{(1-\alpha p)^2}{2(1-\alpha)\epsilon} \right).
\]
Finally, at the fixed point, the separation $\Delta$ is given by: 
\begin{align*}
    \Delta = x_A - x_B &= x_A - (x_A + x_B -x_A) \\
    &= 2x_A - (x_A + x_B) \\
    &> 2 - \min\left( \frac{4\alpha(1-p)}{\epsilon}, \frac{(1-\alpha p)^2}{(1-\alpha)\epsilon} \right) - 2\alpha \quad \text{(since $x_A + x_B < 2\alpha$)} \\
    &= 2(1-\alpha) - \min\left( \frac{4\alpha(1-p)}{\epsilon}, \frac{(1-\alpha p)^2}{(1-\alpha)\epsilon} \right).
\end{align*}
This lower bound is loose and may be negative which is why, we take the max with $0$ (since $x_A > x_B$, $\Delta > 0$ by default). This concludes the proof. 
\end{proof}

\begin{lem}\label{lem:unique}
When $0 < \epsilon < \tilde \epsilon$ and is in favour of group $A$, then for any arbitrary starting point where group $A$ is better off, the system will always reach the unique fixed point which satisfies $x_A > x_B$ and $x_A + x_B < 2\alpha$.
\end{lem}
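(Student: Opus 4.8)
The plan is to establish global convergence to the unique under-subscribed fixed point $(x_A^*, x_B^*)$ whose existence, uniqueness, and properties were established in Lemmas~\ref{lem:brouwer} and~\ref{lem:fp_under}. I would organize the argument in three stages: (i) verify that $\{x_A > x_B\}$ and the compact convex set $\mathcal{X}$ from Lemma~\ref{lem:brouwer} are forward-invariant; (ii) show that from any starting point with $x_A(0) > x_B(0)$ the trajectory enters $\mathcal{X}$ in finitely many steps; and (iii) show that once inside $\mathcal{X}$ the iterates converge to $(x_A^*,x_B^*)$. Note that, unlike Lemma~\ref{lem:1}, we cannot reuse the conclusion for $\epsilon \geq \tilde\epsilon$; only the regime-based computations (not the $\epsilon$-dependent ones) carry over.

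For stage (i), invariance of $\{x_A>x_B\}$ follows from one-step sign computations already present in the excerpt: in the over-subscribed regime $x_A' - x_B' = \frac{2\alpha(x_A-x_B)}{x_A+x_B}p + \epsilon\left(1 - \frac{2\alpha x_A}{x_A+x_B}\right) > 0$ (the second factor is nonnegative since $x_A+x_B\ge 2\alpha$ forces $\frac{2\alpha x_A}{x_A+x_B}\le x_A\le 1$), while in the under-subscribed regime the identity $y_A - y_B = \frac{2(1-\alpha)}{2-x_A-x_B}\left[(x_A-x_B)p + (1-x_A)\epsilon\right] > 0$ is exactly the one derived inside the proof of Lemma~\ref{lem:brouwer}. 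Forward-invariance of $\mathcal{X}$ itself is precisely what Lemma~\ref{lem:brouwer} proves ($\mathcal{T}$ maps $\mathcal{X}$ into $\mathcal{X}$).

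For stage (ii), I would split on the current regime. If the state is over-subscribed, the trailing coordinate decays geometrically, since $x_B' = \frac{2\alpha x_B}{x_A+x_B}p \le p\,x_B$ whenever $x_A+x_B\ge 2\alpha$, while $x_A>\alpha$ stays bounded away from $0$ (from $x_A>x_B$ and $x_A+x_B\ge 2\alpha$). Hence $\frac{2\alpha x_A}{x_A+x_B}\to 2\alpha$, so the sum $x_A'+x_B' = 2\alpha p + \epsilon\left(1-\frac{2\alpha x_A}{x_A+x_B}\right)$ approaches $2\alpha(p-\epsilon)+\epsilon$, which is strictly below $2\alpha$ exactly because $\epsilon<\tilde\epsilon$; therefore the trajectory crosses into the under-subscribed regime in finite time. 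At the crossing $x_A$ is near $2\alpha(p-\epsilon)+\epsilon$, and the elementary inequality $2\alpha(p-\epsilon)+\epsilon > 1-\frac{2\alpha(1-p)}{\epsilon}$ (which holds for $0<\epsilon<\tilde\epsilon$, with the two sides coinciding at $\epsilon=\tilde\epsilon$) places the state inside $\mathcal{X}$. Starting points that are already under-subscribed but outside $\mathcal{X}$ must be shown to either enter $\mathcal{X}$ directly or first cross into the over-subscribed regime, reducing to the previous case; this regime-tracking is the most tedious bookkeeping in the proof.

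Stage (iii) is the analytic crux and the step I expect to be the main obstacle. Here I would prove that on $\mathcal{X}$ the iteration is \emph{monotone}, with $x_A(t)$ and $x_B(t)$ each converging monotonically (in the direction set by their initial displacement from the fixed point) toward $(x_A^*,x_B^*)$. Concretely, I would analyze the signs of $x_A'-x_A$ and $x_B'-x_B$ on $\mathcal{X}$ relative to the fixed point, showing that $x_A$ below $x_A^*$ strictly increases but stays $\le x_A^*$, and $x_B$ above $x_B^*$ strictly decreases but stays $\ge x_B^*$, with boundedness supplied by the invariance of $\mathcal{X}$. Bounded monotone sequences converge, and by continuity of $\mathcal{T}$ the limit is a fixed point in the under-subscribed regime, which Lemma~\ref{lem:fp_under} certifies is unique and equal to $(x_A^*,x_B^*)$. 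The difficulty is that $x_A$ and $x_B$ are coupled through $S=x_A+x_B$ and the factor $\tfrac{1}{2-S}$, so the monotonicity is not a routine one-dimensional computation, and one must verify that the monotone structure holds uniformly across $\mathcal{X}$ (and across the regime boundary picked up in stage (ii)) rather than merely in a neighborhood of the fixed point.
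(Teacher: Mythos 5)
Your three-stage architecture matches the paper's proof in outline (the paper partitions the feasible region into the over-subscribed set $\mathcal{O}$, the under-subscribed set outside your $\mathcal{X}$, called $\mathcal{U}_{<}$, and the forward-invariant set $\mathcal{U}_{\geq}$ corresponding to $\mathcal{X}$), and your stage (i) is correct and is indeed recycled from Lemma~\ref{lem:brouwer}. However, both of the hard steps contain genuine gaps. In stage (ii), your claim that ``at the crossing $x_A$ is near $2\alpha(p-\epsilon)+\epsilon$'' is false: the trajectory can exit the over-subscribed regime on the very first step, long before $x_A$ approaches its limiting value, so the inequality $2\alpha(p-\epsilon)+\epsilon > 1-\tfrac{2\alpha(1-p)}{\epsilon}$ cannot be applied to the state at the crossing time. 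The paper instead proves directly that \emph{any} one-step transition out of $\mathcal{O}$ into the under-subscribed regime forces $x_A(t+1)\geq 1-\tfrac{2\alpha(1-p)}{\epsilon}$, by assuming the contrary, combining it with the crossing condition $\tfrac{2\alpha x_A(t)}{x_A(t)+x_B(t)} > 1-\tfrac{2\alpha(1-p)}{\epsilon}$, and deriving $\epsilon > \tilde\epsilon$. Separately, the case of starting points in $\mathcal{U}_{<}$, which you defer as ``tedious bookkeeping,'' is not bookkeeping at all: one must rule out the trajectory lingering in $\mathcal{U}_{<}$ forever (cycling or approaching a boundary point). The paper does this by showing $x_A(t)+x_B(t)$ is non-decreasing on $\mathcal{U}_{<}$ via a quadratic-discriminant computation, together with the fact that the unique fixed point does not lie in $\operatorname{cl}(\mathcal{U}_{<})\setminus\mathcal{U}_{<}$.

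In stage (iii) you propose coordinate-wise monotone convergence of $x_A(t)$ and $x_B(t)$ individually, acknowledge the coupling through $x_A+x_B$ as the main obstacle, and do not resolve it; this is the crux you identify but do not supply. The paper avoids the coupled two-dimensional analysis entirely by using the scalar quantity $S(t)=x_A(t)+x_B(t)$ as a Lyapunov-type function on $\mathcal{U}_{\geq}$: another discriminant argument shows $S(t)$ is non-increasing there, it is bounded below by $2\alpha p$, hence converges by monotone convergence, and convergence of $S$ then forces convergence of each coordinate to the unique fixed point of Lemma~\ref{lem:fp_under}. If you want to complete your proof along your own lines you would need to actually establish the sign structure of $x_A'-x_A$ and $x_B'-x_B$ uniformly on $\mathcal{X}$, which is substantially harder than the sum-based argument; otherwise you should adopt the scalar monotonicity of $S(t)$ in both $\mathcal{U}_{<}$ (increasing) and $\mathcal{U}_{\geq}$ (decreasing), which is the device that makes the whole lemma go through.
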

\begin{proof}
We will organize the proof as follows: 
\begin{figure}[!ht]
    \centering
    \includegraphics[width=0.5\linewidth]{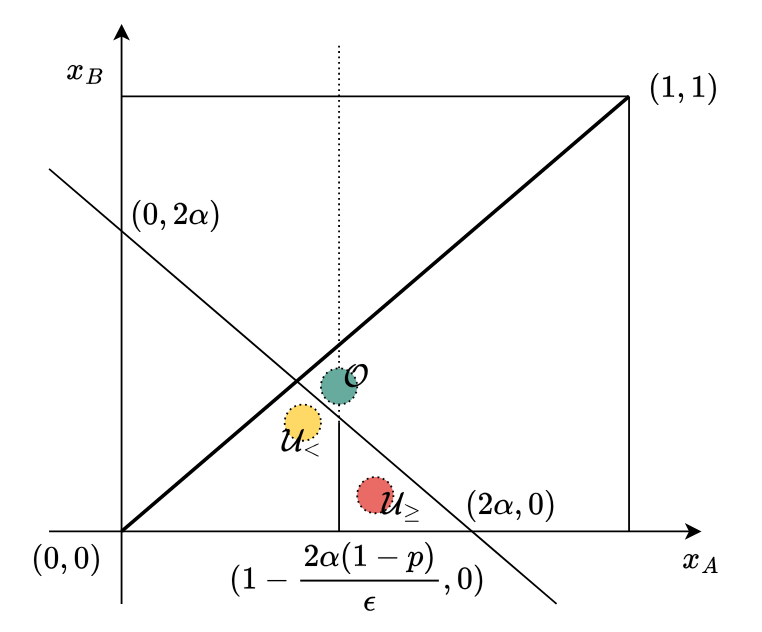}
    \caption{Feasible space with sub-regions}
    \label{fig:regions}
\end{figure}
\begin{enumerate}
    \item As shown in Figure~\ref{fig:regions}, any feasible starting point can belong in one of the $3$ sub-regions: $\mathcal{O}$, $\mathcal{U}_{<}$ and $\mathcal{U}_{\geq}$. The $3$ sub-regions are defined as follows:
    \begin{align*}
        \mathcal{O} &:= \left\{ (x,y) \in \mathbb{R}^2: 0 \leq x,y \leq 1,~x > y,~ x+y \geq 2\alpha  \right\};\\
        \mathcal{U}_{<} &:= \left\{ (x, y) \in \mathbb{R}^2: 0 \leq x,y \leq 1,~x > y, ~x+y < 2\alpha, x < 1 - \frac{2\alpha(1-p)}{\epsilon} \right\}; \\
        \mathcal{U}_{\geq} &:= \left\{ (x, y) \in \mathbb{R}^2: 0 \leq x,y \leq 1,~x > y,~x+y < 2\alpha, x \geq 1 - \frac{2\alpha(1-p)}{\epsilon} \right\};
    \end{align*}
    \item We will show that irrespective of the starting point (either $\mathcal{O}$ or $\mathcal{U}_{<}$), the system will always eventually reach $U_{\geq}$. Recall that we have already shown (in the proof for Lemma~\ref{lem:brouwer}) that once the system reaches $\mathcal{U}_{\geq}$, it will continue to remain there at all future time points. 
    \item Finally, we will show that any starting point in $\mathcal{U}_{\geq}$ always converges to the only fixed point in $U_{\geq}$. 
\end{enumerate}

\noindent
\textit{Starting at any point in $\mathcal{O}$, the system always reaches $\mathcal{U}_{\geq}$.} First we will show that starting at any point in $\mathcal{O}$, the system \textbf{cannot} stay in $\mathcal{O}$ indefinitely. We will prove by contradiction. Suppose, there exists a starting point in $\mathcal{O}$ such that the system stays in $\mathcal{O}$ for all future time steps. In that case, the system evolves according to the dynamics of the over-subscribed regime: 
\begin{align*}
    x_A(t+1) &= \frac{x_A(t)}{x_A(t) + x_B(t)}\cdot (2\alpha p) + \left(1 - \frac{2\alpha x_A(t)}{x_A(t) + x_B(t)} \right)\cdot \epsilon,\\
    x_B(t+1) &= \frac{x_B(t)}{x_A(t) + x_B(t)}\cdot (2\alpha p). 
\end{align*}
Since $x_A(t) + x_B(t) \geq 2 \alpha$ for all $t \geq 0$ (by assumption), we have $x_B(t+1) \leq p \cdot x_B(t)$ for all $t \geq 0$. This implies that $\lim_{t \to \infty} x_B(t) = 0$. But since $x_A(t+1) + x_B(t+1) = \frac{2\alpha x_A(t)}{x_A(t) + x_B(t)}(p-\epsilon) + \epsilon$, this implies that $\lim_{t \to \infty} x_A(t) + x_B(t) = 2\alpha (p-\epsilon) + \epsilon < 2\alpha$ since $\epsilon < \frac{2\alpha(1-p)}{(1-2\alpha)}$. But this is a contradiction because a system for which $x_A(t) + x_B(t) \geq 2\alpha~\forall~t$ cannot have a fixed point $(x_A, x_B)$ with $x_A + x_B < 2\alpha$. This concludes the proof that starting in $\mathcal{O}$, the system cannot remain in $\mathcal{O}$ indefinitely and therefore, must transition to $\mathcal{U}_{<}$ or $\mathcal{U}_{\geq}$. \\

We will now show that the system always transitions from $\mathcal{O}$ to $\mathcal{U}_{\geq}$. Suppose, the transition happens at time $t+1$, i.e., the system is in $\mathcal{O}$ at time $t$ and then in $\mathcal{U}_{<} \bigcup \mathcal{U}_{\geq}$ at time $t+1$. Then we have the following conditions: 
\begin{align*}
    &x_A(t) + x_B(t) \geq 2\alpha; \\
    &x_A(t+1) = \frac{x_A(t)}{x_A(t) + x_B(t)}\cdot (2\alpha p) + \left(1 - \frac{2\alpha x_A(t)}{x_A(t) + x_B(t)} \right)\cdot \epsilon; \quad x_B(t+1) = \frac{x_B(t)}{x_A(t) + x_B(t)}\cdot (2\alpha p); \\
    &x_A(t+1) + x_B(t+1) < 2\alpha. 
\end{align*}
Now, 
\begin{align*}
    x_A(t+1) + x_B(t+1) &= 2\alpha p + \left(1 - \frac{2\alpha x_A(t)}{x_A(t) + x_B(t)} \right)\cdot \epsilon < 2\alpha \\
    &\implies \frac{2\alpha x_A(t)}{x_A(t) + x_B(t)} > 1 - \frac{2\alpha (1-p)}{\epsilon}.
\end{align*}
In order to show that $(x_A(t+1), x_B(t+1)) \in \mathcal{U}_{\geq}$, it suffices to show that $x_A(t+1) \geq 1 - \frac{2\alpha (1-p)}{\epsilon}$. We will prove by contradiction. Suppose, this is not true, i.e.,
\begin{align*}
    x_A(t+1) < 1 - \frac{2\alpha (1-p)}{\epsilon} &\implies \frac{2\alpha x_A(t)}{x_A(t) + x_B(t)}\cdot (p-\epsilon) + \epsilon < 1- \frac{2\alpha (1-p)}{\epsilon} \\
    &\implies \left(1 - \frac{2\alpha (1-p)}{\epsilon} \right)(p-\epsilon) + \epsilon < 1 - \frac{2\alpha (1-p)}{\epsilon}  \\
    &\implies \epsilon < \left(1 - \frac{2\alpha (1-p)}{\epsilon} \right)(1-p+\epsilon) \\
    &\implies \epsilon < 1-p+\epsilon - \frac{2\alpha(1-p)^2}{\epsilon} - 2\alpha(1-p) \\
    &\implies 0 < (1-p)\left(1 - 2\alpha - \frac{2\alpha(1-p)}{\epsilon} \right) \\
    &\implies \frac{2\alpha (1-p)}{\epsilon} < (1-2\alpha) \\
    &\implies \epsilon > \frac{2\alpha(1-p)}{(1-2\alpha)},
\end{align*}
which is a contradiction. This completes the proof. \\

\noindent
\textit{Starting at any point in $\mathcal{U}_{<}$, the system cannot remain in $\mathcal{U}_{<}$ indefinitely.} We will prove by contradiction. Suppose, there exists a point in $\mathcal{U}_{<}$ starting at which the system remains in $\mathcal{U}_{<}$ indefinitely. Note that since $\mathcal{U}_{<}$ is a bounded space, this can happen in only two ways: i) either the system converges to or approaches a fixed point, or ii) it cycles. 

First, we will argue that convergence to a fixed point is not possible. Clearly, the system cannot converge to a fixed point in $\mathcal{U}_{<}$ because $\mathcal{U}_{<}$ does not have a fixed point. The other possibility is that since $\mathcal{U}_{<}$ is not closed, the system can approach a fixed point in $cl(\mathcal{U}_{<})\setminus \mathcal{U}_{<}$ while remaining in $\mathcal{U}_{<}$ indefinitely. But, we have established that the unique fixed point of the system does not satisfy $x_A + x_B = 2\alpha$ or $x_A = 1 - \frac{2\alpha(1-p)}{\epsilon}$ and thus cannot belong in $cl(\mathcal{U}_{<})\setminus \mathcal{U}_{<}$. Therefore, the system cannot converge to or approach a fixed point while remaining in $\mathcal{U}_{<}$ indefinitely. 

Now, we will prove that the system also cannot cycle in $\mathcal{U}_{<}$. We will prove this by showing that for any $(x_A(t), x_B(t)) \in \mathcal{U}_{<}$, it must be that $x_A(t+1) + x_B(t+1) \geq x_A(t) + x_B(t)$. Monotonicity guarantees that there cannot be a cycle. We will again do a proof by contradiction. Suppose, there exists a time step $t$ such that $x_A(t+1) + x_B(t+1) < x_A(t) + x_B(t)$. This implies: 
\begin{align*}
    &2\alpha p + \frac{2(1-\alpha)(1-x_A(t))}{(2-x_A(t) - x_B(t))}\cdot \epsilon < x_A(t) + x_B(t) \\
    \implies & (x_A(t) + x_B(t))^2 - (2+2\alpha p)(x_A(t) + x_B(t)) + 4\alpha p + 2(1-\alpha)(1-x_A(t))\epsilon < 0 \\
    \implies & (x_A(t) + x_B(t))^2 - (2+2\alpha p)(x_A(t) + x_B(t)) + 4\alpha p + 4\alpha(1-\alpha)(1-p) < 0.
\end{align*}
The last step follows because $x_A(t) < 1-\frac{2\alpha(1-p)}{\epsilon} \implies (1-x_A(t))\epsilon) > 2\alpha(1-p)$. Now, the quadratic polynomial can take negative values if the discriminant $D > 0$ and $x_A(t) + x_B(t)$ lies between the two roots, i.e., $1+\alpha p - \sqrt{D} < x_A(t) + x_B(t) < 1+\alpha p + \sqrt{D}$. The discriminant $D$ is given by: 
\begin{align*}
    D &= (1+\alpha p)^2 - 4\alpha p - 4 \alpha(1-\alpha)(1-p)\\
    &= (1+\alpha p)^2 - 4\alpha (1-\alpha) - 4\alpha^2 p \\
    &= (1+\alpha p)^2 + 4\alpha^2 - 4\alpha(1+\alpha p)\\
    &= (1+\alpha p - 2\alpha)^2. 
\end{align*}
Therefore, the above condition becomes: 
\[
      2\alpha  < x_A(t) + x_B(t) < 2(1+\alpha p) - 2\alpha.
\]
But this is a contradiction because $(x_A(t), x_B(t)) \in \mathcal{U}_{<}$ and therefore, $x_A(t) + x_B(t) < 2\alpha$. This concludes the proof. \\

It should be clear that the above result implies that starting at any point in $\mathcal{U}_{<}$, the system either transitions directly to $\mathcal{U}_{\geq}$ (in which case we are done), or transitions first to $\mathcal{O}$ and then must transition to $\mathcal{U}_{\geq}$ again (as we have shown before). Thus, combining both statements above, irrespective of the starting point, the system always reaches $\mathcal{U}_{\geq}$.  \\

\noindent
\textit{Starting at any point in $\mathcal{U}_{\geq}$, the system always converges to the unique fixed point in $\mathcal{U}_{\geq}$.} In order to complete this proof, we will first show that for any starting point in $\mathcal{U}_{\geq}$, the system evolves in a way that $x_A(t) + x_B(t)$ is decreasing in $t$, i.e., $x_A(t+1) + x_B(t+1) \leq x_A(t) + x_B(t)$ for all $t$. We will prove by contradiction. Suppose, there exists a time step $t \geq 1$ such that the following hold simultaneously: 
\begin{align*}
    &x_A(t) + x_B(t) < 2\alpha, \quad x_A(t) \geq 1-\frac{2\alpha(1-p)}{\epsilon}; \\
    &x_A(t) + x_B(t) < x_A(t+1) + x_B(t+1) < 2\alpha. 
\end{align*}
This implies:
\begin{align*}
    &x_A(t+1) + x_B(t+1) \\
    &= 2\alpha p + \frac{2(1-\alpha)(1-x_A(t))}{(2-x_A(t)-x_B(t))}\cdot \epsilon > x_A(t) + x_B(t) \\
    &\implies \left(x_A(t) + x_B(t)\right)^2 - (2+2\alpha p)(x_A(t) + x_B(t)) + 4\alpha p + 2(1-\alpha)(1-x_A(t))\epsilon > 0\\
    &\stackrel{(i)}{\implies} \left(x_A(t) + x_B(t)\right)^2 - (2+2\alpha p)(x_A(t) + x_B(t)) + 4\alpha p + 4\alpha(1-\alpha)(1-p) > 0\\
    &\implies (1+\alpha p)^2 < 4\alpha p + 4\alpha (1-\alpha)(1-p) = 4\alpha (1-\alpha + \alpha p) \quad \text{(no real roots, discriminant $<0$)} \\
    &\implies (1+\alpha p)^2 < 4\alpha(1-\alpha) + 4\alpha^2 p < 1 + 4\alpha^2 p \quad \text{(since $\alpha(1-\alpha) < \frac{1}{4}$ for $\alpha \in (0, \frac{1}{2})$)} \\
    &\implies 1 + 2\alpha p + \alpha^2 p^2 < 1 + 4\alpha^2 p \\
    &\implies 2+\alpha p < 4 \alpha. 
\end{align*}
Step $(i)$ above follows because $x_A(t) \geq 1-\frac{2\alpha(1-p)}{\epsilon} \implies (1-x_A(t))\epsilon \leq 2\alpha(1-p)$. The last step is a contradiction because the RHS $< 2$ since $\alpha < \frac{1}{2}$.  This proves that for any starting point in $\mathcal{U}_{\geq}$, $x_A(t) + x_B(t)$ is decreasing in $t$. We also know that $x_A(t) + x_B(t) \geq 2\alpha p$ for all $t$. Therefore, by the monotone convergence theorem, $x_A(t) + x_B(t)$ must converge as $t \to \infty$. But once $x_A(t) + x_B(t)$ converges, $x_A(t)$ must converge as well and so must $x_B(t)$. Thus, $\left\{(x_A(t), x_B(t))\right\}_{t \geq 0}$ converges to some fixed point (which must correspond to a fixed point in $\mathcal{U}_{\geq}$). But we know that there is a unique fixed point of the system in $\mathcal{U}_{\geq}$. Therefore, they must be the same fixed point. This concludes the proof.  
\end{proof}

\section{Extension to the Setting with Unequal Population Sizes}\label{app:unequal}

Let $N_A$ and $N_B$ be the population sizes of groups $A$ and $B$ with $N_A \neq N_B$. We define $\lambda_A = \frac{N_A}{N_A + N_B}$ and $\lambda_B = \frac{N_B}{N_A + N_B}$, i.e., $\lambda_A + \lambda_B = 1$. As earlier, let $X_A(t)$ and $X_B(t)$ denote the fraction of high types in groups $A$ and $B$ respectively at time $t$. \\

\noindent 
Based on our new notation, the total college capacity $C = (N_A + N_B)\alpha$. Also, our new definitions for the over- and under-subscribed regimes are as follows: 
\[
    \text{Over-subscribed: } N_A X_A(t) + N_B X_B(t) \geq (N_A + N_B)\alpha \iff 2\lambda_A X_A(t) + 2\lambda_B X_B(t) \geq 2\alpha.  
\]
\[
    \text{Under-subscribed: } N_A X_A(t) + N_B X_B(t) < (N_A + N_B)\alpha \iff 2\lambda_A X_A(t) + 2\lambda_B X_B(t) < 2\alpha. 
\]
Henceforth, we will use the following substitution: $Y_A(t) = 2\lambda_A X_A(t)$ and $Y_B(t) = 2\lambda_B X_B(t)$, whenever possible, to ease the exposition. 

\subsection{Revised Allocation Rule \& Fixed Point Equations}
Under the new setting, we construct the revised allocation rule $\mathcal{A}$ which is simultaneously meritocratic, fair and efficient. Once again, it is unique (by construction). Let $L_i(t+1)$ and $H_i(t+1)$ indicate the number of seats allocated at time $t+1$ to low type and high type individuals of group $i$ respectively for $i \in \{A, B\}$. 

\paragraph{Over-subscribed regime.} In the over-subscribed regime, no low types can go to college (meritocracy) as there are not enough seats for all high types in the population. Therefore, seats will be allocated only to high types. To guarantee fairness (through Equalized Odds), we must have: 
\begin{align*}
    \frac{H_A(t+1)}{N_A X_A(t)} = \frac{H_B(t+1)}{N_B X_B(t)} = \frac{H_A(t+1) + H_B(t+1) }{N_A X_A(t) + N_B X_B(t)}.
\end{align*}
For efficiency, all seats must be allocated, i.e., $H_A(t+1) + H_B(t+1) = (N_A + N_B)\alpha$. This leads to the following allocation rule in the over-subscribed regime:
\begin{align*}
      L_i(t+1) &= 0, \quad \forall~ i\in \{A, B\}, \\
      H_i(t+1) &= \frac{(N_A + N_B)\alpha}{N_A X_A(t) + N_B X_B(t)}\cdot N_i X_i(t) = \frac{(2\alpha)N_i X_i(t)}{2\lambda_A X_A(t) + 2\lambda_B X_B(t)}, \quad \forall~ i\in \{A, B\}. 
\end{align*}

We can now derive the fixed point equations for the over-subscribed regime. For all $i \in \{A, B\}$
\begin{align*}
   &N_i x_i = \frac{(2\alpha)N_i x_i}{2\lambda_A x_A + 2\lambda_B x_B}\cdot p + \left(N_i x_i - \frac{(2\alpha)N_i x_i}{2\lambda_A x_A + 2\lambda_B x_B} \right)\cdot q \\
   \iff &(2\lambda_i x_i) = \frac{(2\alpha)(2\lambda_i x_i)}{2\lambda_A x_A + 2\lambda_B x_B}\cdot p + \left( 2\lambda_i x_i - \frac{(2\alpha)(2\lambda_i x_i)}{2\lambda_A x_A + 2\lambda_B x_B} \right) \cdot q  \\
   \iff &y_i = \frac{(2\alpha)y_i}{y_A + y_B}\cdot p + \left(y_i - \frac{(2\alpha)y_i}{y_A + y_B} \right)\cdot q 
\end{align*}

\paragraph{Under-subscribed regime.} For the under-subscribed regime, meritocracy dictates that all high types individuals in the population be admitted. The residual seats given by $R = (N_A + N_B)\alpha - N_A X_A(t) - N_B X_B(t)$ have to be allocated among the low types individual from different groups. By the principle of equalized odds, we must have: 
\begin{align*}
    \frac{L_A(t+1)}{N_A(1 - X_A(t))} = \frac{L_B(t+1)}{N_B (1-X_B(t))} &= \frac{L_A(t+1) + L_B(t+1) }{N_A + N_B - N_A X_A(t) - N_B X_B(t)} \\
    &= \frac{(N_A + N_B)\alpha - N_A X_A(t) - N_B X_B(t)}{N_A + N_B - N_A X_A(t) - N_B X_B(t)} \\
    &= \frac{2\alpha - 2\lambda_A X_A(t) - 2\lambda_B X_B(t)}{2 - 2\lambda_A X_A(t) - 2\lambda_B X_B(t)}.
\end{align*}
This leads to the following allocation rule for the under-subscribed regime: 
\begin{align*}
    L_i(t+1) &=  \left(\frac{2\alpha - 2\lambda_A X_A(t) - 2\lambda_B X_B(t)}{2 - 2\lambda_A X_A(t) - 2\lambda_B X_B(t)}\right)\cdot N_i(1-X_i(t)), \quad i \in \{A, B\}\\
    H_i(t+1) &= N_i X_i(t), \quad i \in \{A, B\}.
\end{align*}

Therefore, the fixed point equations can be constructed as follows, for all $i \in \{A, B\}$: 
\begin{align*}
    &N_i x_i = \left(N_i x_i\right) \cdot p + \left( \frac{2\alpha - 2 \lambda_A x_A - 2\lambda_B x_B}{2-2\lambda_A x_A - 2\lambda_B x_B}  \right)\cdot N_i(1-x_i) \cdot p \\
    \iff & (2\lambda_i x_i) = (2\lambda_i x_i)\cdot p + \left( \frac{2\alpha - 2 \lambda_A x_A - 2\lambda_B x_B}{2-2\lambda_A x_A - 2\lambda_B x_B}  \right)\cdot 2\lambda_i(1-x_i) \cdot p \\
    \iff & y_i = y_i \cdot p + \left(\frac{2\alpha - y_A - y_B}{2-y_A - y_B} \right)\cdot (2\lambda_i - y_i)\cdot p.
\end{align*}

\subsection{Fixed Points in the EA regime}
The analysis here is exactly identical to the equal population case. We first show that there is no fixed point in the over-subscribed regime (by contradiction). In the under-subscribed regime, we have: 
\begin{align*}
    y_A &= y_A \cdot p + \left(\frac{2\alpha - y_A - y_B}{2-y_A - y_B} \right)\cdot (2\lambda_A - y_A)\cdot p, \\
    y_B &= y_B \cdot p + \left(\frac{2\alpha - y_A - y_B}{2-y_A - y_B} \right)\cdot (2\lambda_B - y_B)\cdot p.
\end{align*}
%\textcolor{red}{The only difference with the fixed point equations from earlier is the multiplicative factor in the second term. Instead of $(1-x_i)$ earlier, we now have $(2\lambda_i - y_i)$ here.}  \\

Adding the equations together, we obtain: 
\[
   y_A + y_B = (y_A + y_B)\cdot p + (2\alpha - y_A - y_B)\cdot p = 2\alpha p.
\]
Plugging back and solving, we obtain $y_A = (2\lambda_A)\alpha p$ and $y_B = (2\lambda_B)\alpha p$ which implies that $x_A = x_B = \alpha p$ at the fixed point. Using an identical analysis as earlier, we can show that all starting points must converge at this unique fixed point in the under-subscribed regime. 

\subsection{Fixed Points in the AA regime}
Using an identical analysis as the equal population case, we can derive similar results about the fixed points of the system even in the AA regime.

\section{Additional Experiments}\label{app:exp}

\subsection{Mean time for return to parity: Dependence on $p$ and $q$}
We complement our discussion in Section~\ref{sub:equal_adv} (Return to Parity) with Figures~\ref{fig:parity_q} and \ref{fig:parity_p}. These plots demonstrate how the average time for the system to reach $\eta$-parity depends on $q$ and $p$ respectively. Our target level of $\eta$ is equal to $0.01$ and in both figures, the system starts off at $(X_A(0), X_B(0)) = (0.9, 0.1)$ with an initial separation $|\Delta(0)| = 0.8$. On the left, we observe that there is no clear trend for how the average time to parity depends on $q$, but it is clear that the overall effect of $q$ is not that significant --- varying $q$ over the range $[0, 0.5]$ leads to a very small change in the mean time to parity. Note that our starting point is in the over-subscribed regime where $q$ does affect the dynamics. This implies that for starting points in the under-subscribed regime, the dependence on $q$ will be even more negligible.  

On the right, however, we find that the mean time to parity increases sharply with the success probability $p$. By varying $p \in [0.7, 0.96]$, the mean time to parity grows almost $7$-fold. As described earlier, this is due to the fact that as $p$ approaches $1$, college outcomes become more and more deterministic and any initial/accumulated separation is hard to correct. 
\begin{figure*}[ht]
    \centering
    \begin{subfigure}[b]{0.48\textwidth}
        \centering
        \includegraphics[height=6cm]{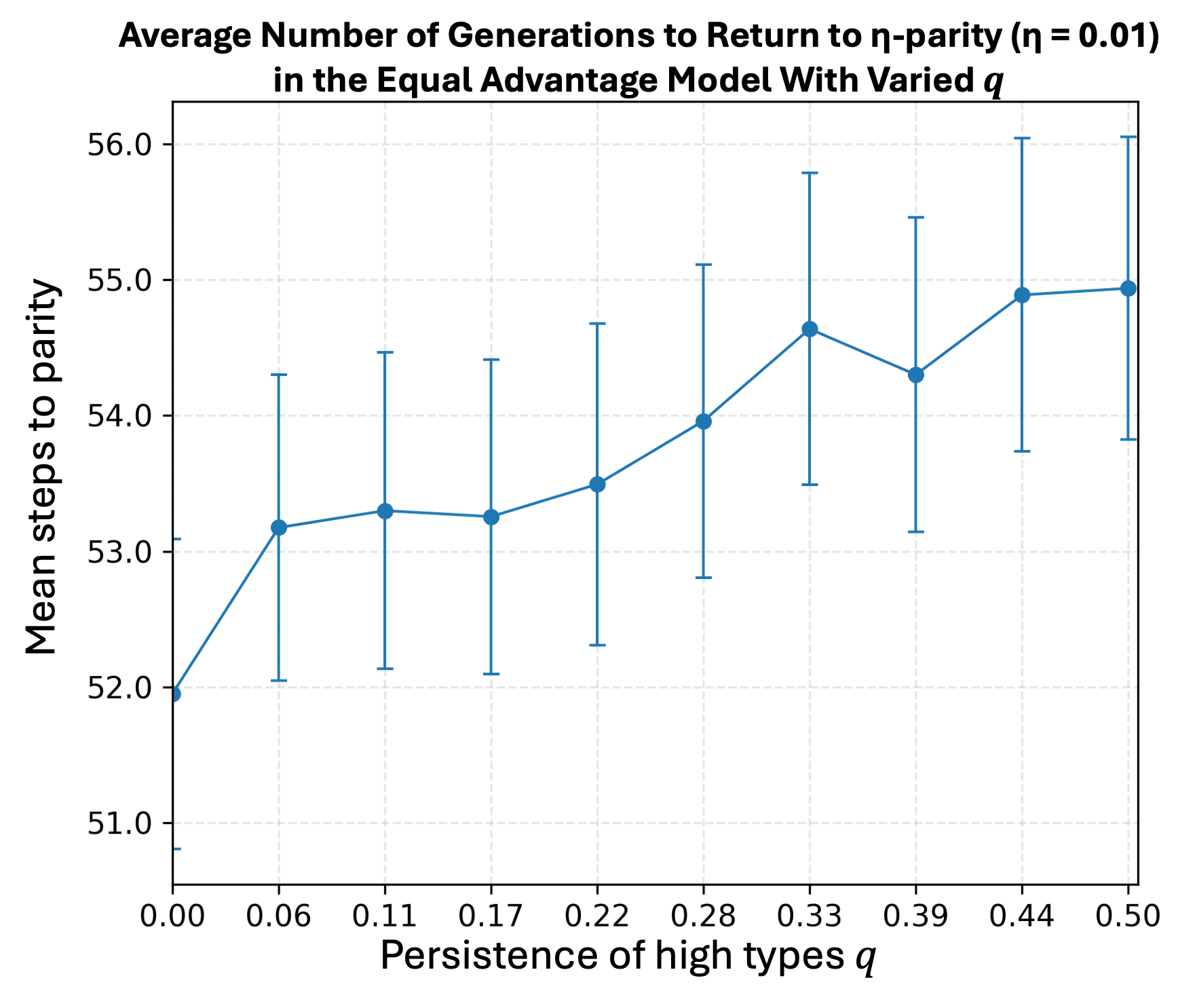}
        \caption{Dependence on $q$, with fixed $p = 0.95$}
        \label{fig:parity_q}
    \end{subfigure}
    \hfill
    \begin{subfigure}[b]{0.48\textwidth}
        \centering
        \includegraphics[height=6cm]{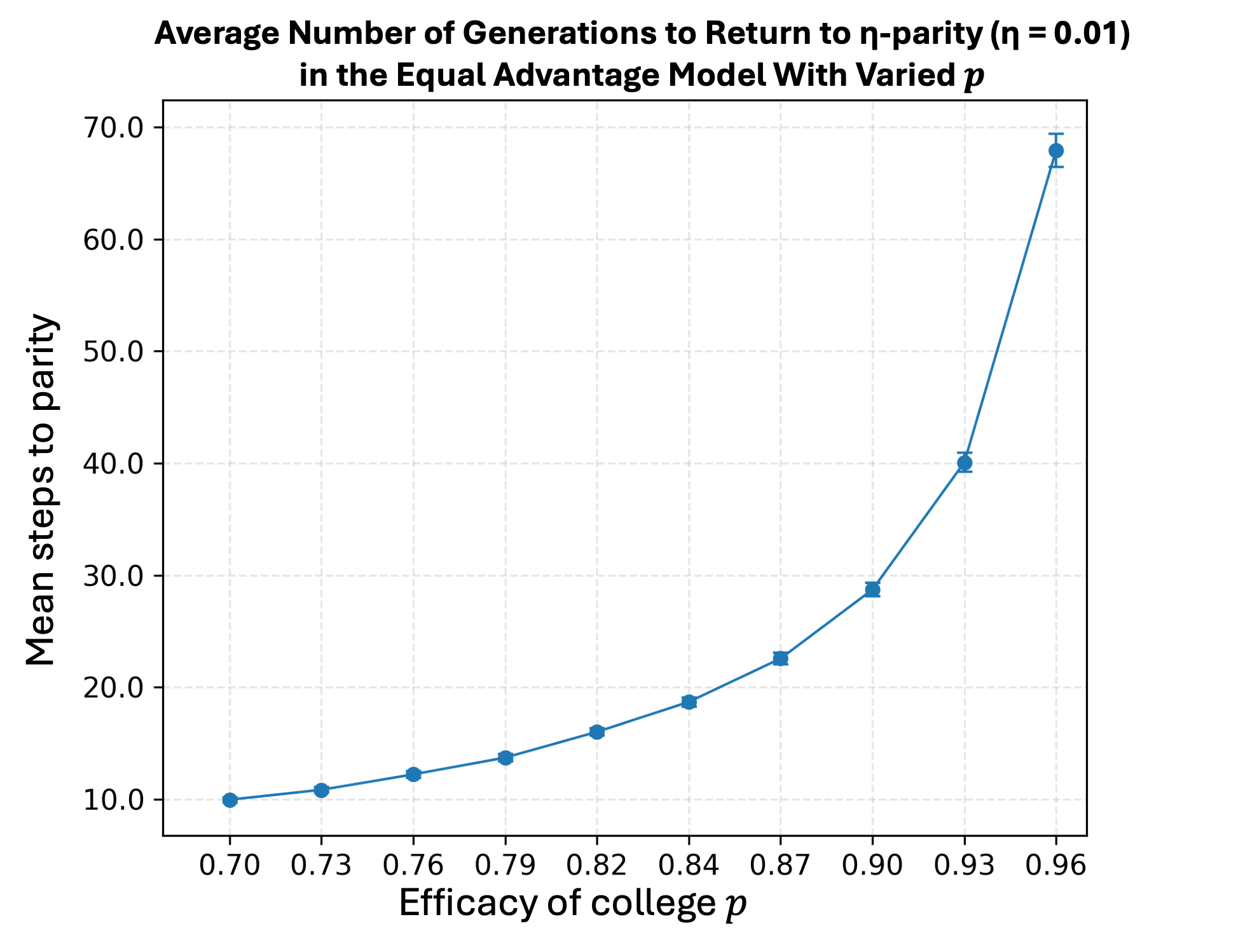}
        \caption{Dependence on $p$, with fixed $q = 0.40$}
        \label{fig:parity_p}
    \end{subfigure}
    \caption{Plots showing how the mean time to $\eta$-parity depends on $q$ (left) and $p$ (right) for a target $\eta$ of $0.01$ and a starting point $(X_A(0), X_B(0) = (0.9, 0.1))$ with an initial separation of $0.8$. Other parameters of interest: $\alpha = 0.30, N = 2000$. }
    \label{fig:return_to_parity_theory_comparison}
\end{figure*}

For a more comprehensive overview of how the mean time to parity depends on $\alpha$, $p$, $q$ for different combinations of starting points across regimes, refer to the heatmap in Figure~\ref{fig:ea_parity_heatmaps}.

\begin{figure*}[!ht]
  \centering
    \includegraphics[width=\linewidth]{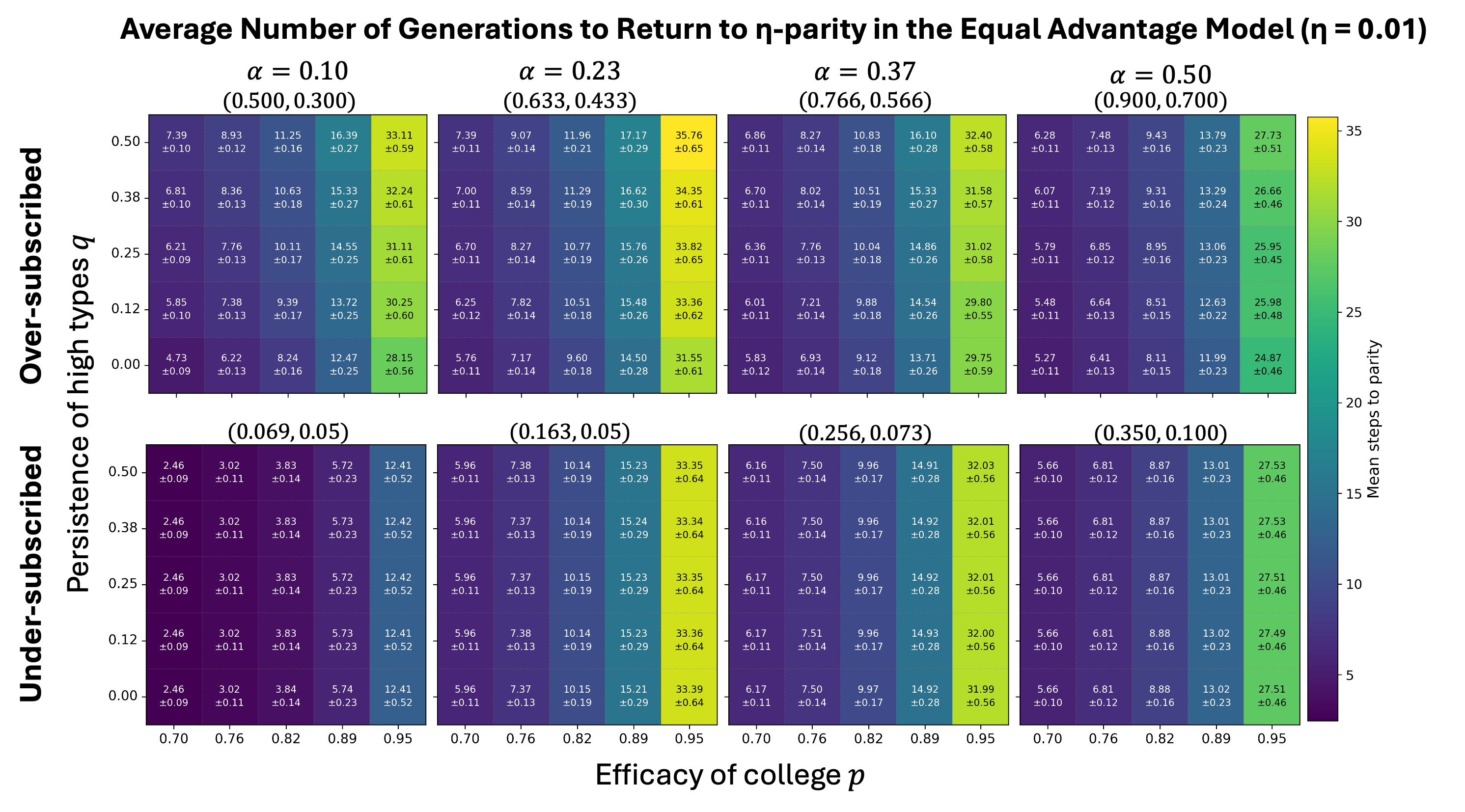}
      \caption{Average time (in generations) the system (under the EA model) needs to reach $\eta$-parity as a function of the college success probability $p$, probability of persistence of high types $q$, and capacity ~$\alpha$. The numbers in parenthesis above each heatmap shows the starting proportions of high types in the two groups i.e., $(X_A(0), X_B(0))$. We have $N=2000$ for all experiments. Each cell shows the mean time to parity over 100 runs; darker colors indicate faster return to parity. }
      \label{fig:ea_parity_heatmaps}
\end{figure*}

\subsection{Comparing Upper bounds provided by Theorem~\ref{thm:time_parity} to Numerical Experiments}
We compare how our theoretical high-probability bounds on the time to $\eta$-parity fare against empirical estimates of the same. We study this jointly as a function of the college success probability $p$ and the capacity parameter $\alpha$. We fix $N = 65,000$ (large enough according to the requirements of Theorem~\ref{thm:time_parity}), $q = 0.40$ and initialize the system at $(X_A(0), X_B(0)) = (0.9, 0.1)$, i.e., with an initial separation $|\Delta (0) = 0.8|$. Our target $\eta$ is $0.05$ and our goal is to compute the time to $\eta$-parity (empirically) with high confidence ($95\%$). Accordingly, we choose $\omega = 0.05$ to find upper bound $T_{\eta}$ with high probability (at least $0.95$) from Theorem~\ref{thm:time_parity}. We compute the ratio $r$ ($r \geq 1$) to see how loose our upper bound is. Figure~\ref{fig:thm3_ratios} plots these ratios for different $(\alpha, p)$ combinations in the form of a heatmap where lighter colors indicate smaller ratios. We find that at high $p$, our theoretical bound is consistently more conservative. But in general, $r \leq 5$.  

\begin{figure*}[ht]
  \centering
    \includegraphics[width=0.8\linewidth]{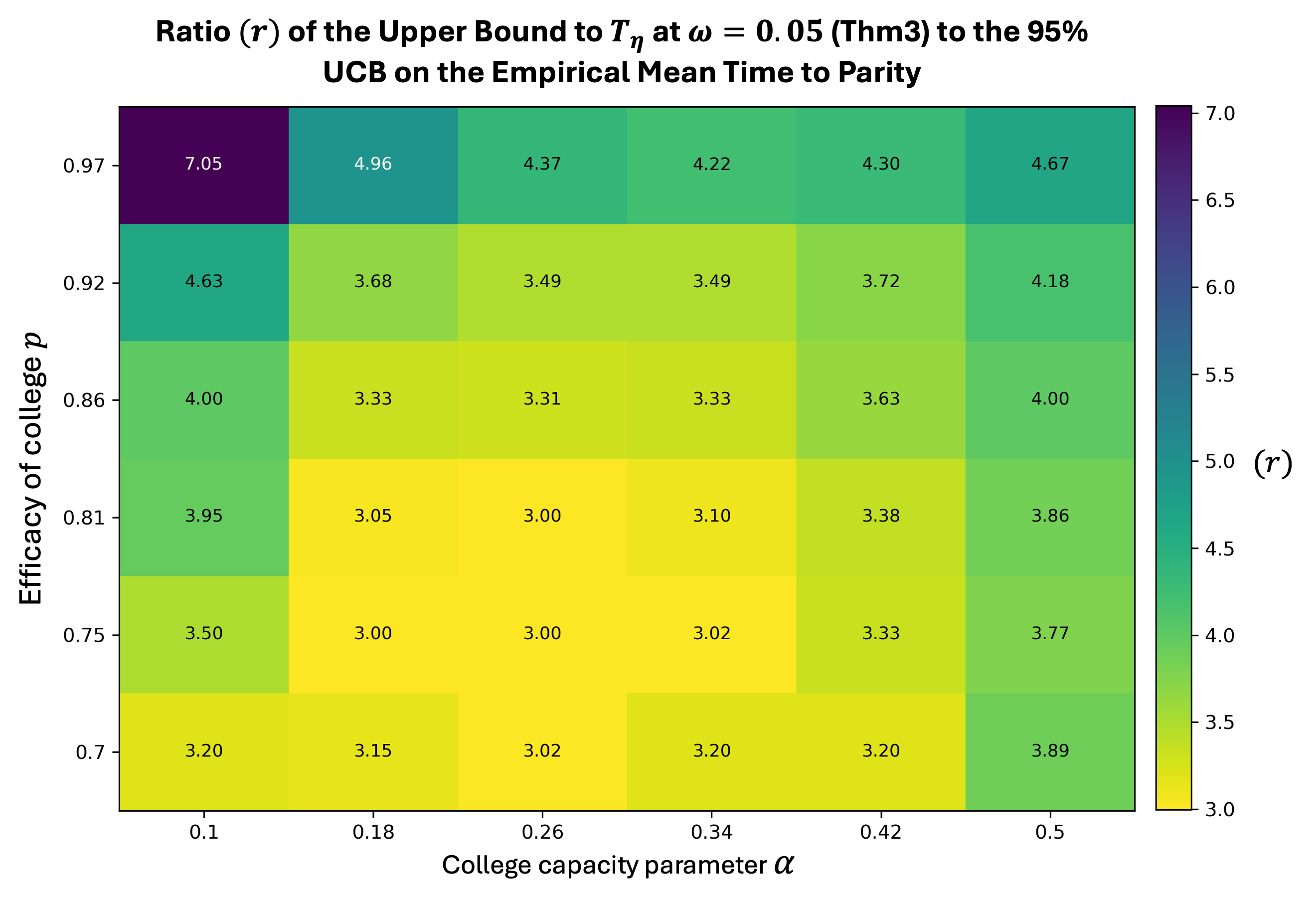}
      \caption{Heatmap showing how the upper bound in Theorem~\ref{thm:time_parity} compares to the empirical time to parity calculations. Experiment settings: $q=0.40$, $(X_A(0),X_B(0))=(0.90,0.10)$, $|\Delta(0)|=0.80$, $\eta=0.05$, $\omega=0.05$, $N = 65000$. The value corresponding to each $(\alpha, p)$ combination is computed using $100$ independent runs. Lighter colors indicate smaller ratios which imply smaller gap between theoretical predictions and empirical values.}
      \label{fig:thm3_ratios}
\end{figure*}

\subsection{Affinity Advantage Model: Results for $q > 0$ setting}
We now extend our analysis for the Affinity Advantage model with $q = 0$ to accommodate scenarios with $q > 0$. Note that the fixed point equations for the under-subscribed setting still remain unchanged, but for the over-subscribed regime, we have the following revised equations: 
\begin{align*}
    x_A &= \frac{x_A(2\alpha)}{x_A + x_B}\cdot p + \left(x_A - \frac{x_A (2\alpha)}{x_A + x_B} \right)\cdot (q+\epsilon) + (1-x_A)\cdot \epsilon,\\
    x_B &= \frac{x_A(2\alpha)}{x_A + x_B}\cdot p + \left(x_A - \frac{x_A (2\alpha)}{x_A + x_B} \right)\cdot q. 
\end{align*}
Observe that when $q  = 0$, we recover our fixed point equations in Section~\ref{sub:affinity_adv}. 

Now in Figure~\ref{fig:delta_eps_panels}, we plot the equilibrium separation $\Delta$ of the system as a function of the affinity advantage parameter $\epsilon$ for different values of $q$. We use the $q = 0$ setting as a benchmark for comparison. Firstly, we find that the same threshold $\tilde \epsilon$ still applies, even for $q > 0$. The system also retains the unique fixed point property for all $\epsilon$ when $q > 0$ (starting points in both over-subscribed and under-subscribed regimes converge to the same equilibrium, given $\epsilon$, as we see in Figures~\ref{fig:delta_eps_under} and \ref{fig:delta_eps_over}). In the sub-$\tilde \epsilon$ regime, the system attains the same equilibrium as the $q = 0$ setting for $q > 0$ --- this was expected because the fixed point equations for the under-subscribed regime are independent of $q$ and therefore, any equilibrium in the under-subscribed regime should also be independent of $q$. When $\epsilon \geq \tilde \epsilon$, however, we see that the system achieves an equilibrium separation $\Delta_{q > 0}$ which depends on $q$ and is given by: 
\[
     \Delta_{q > 0} = \frac{2\alpha (p-q-\epsilon)+\epsilon}{1-q} > \Delta_{q = 0}, 
\]
and is increasing in $q$. These experiments provide conclusive evidence that: i) the $q = 0$ setting captures all the properties/intricacies of the general $q > 0$ setting, ii) when $q > 0$, the equilibrium separation between groups can only be worse. 

\begin{figure*}[!ht]
    \centering
    \begin{subfigure}[b]{0.48\textwidth}
        \centering
        \includegraphics[width=\textwidth]{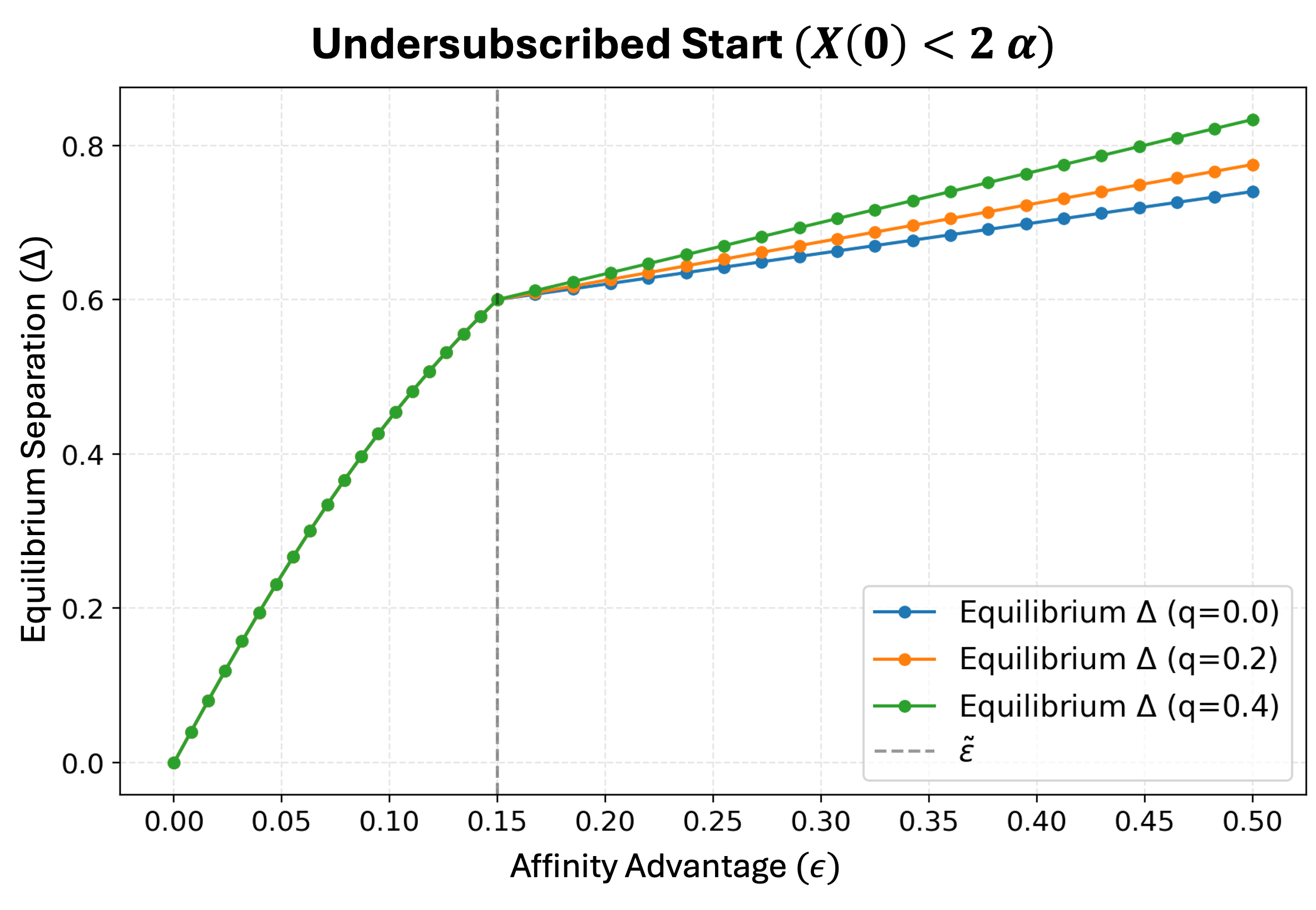}
        \caption{}
        \label{fig:delta_eps_under}
    \end{subfigure}
    \hfill
    \begin{subfigure}[b]{0.48\textwidth}
        \centering
        \includegraphics[width=\textwidth]{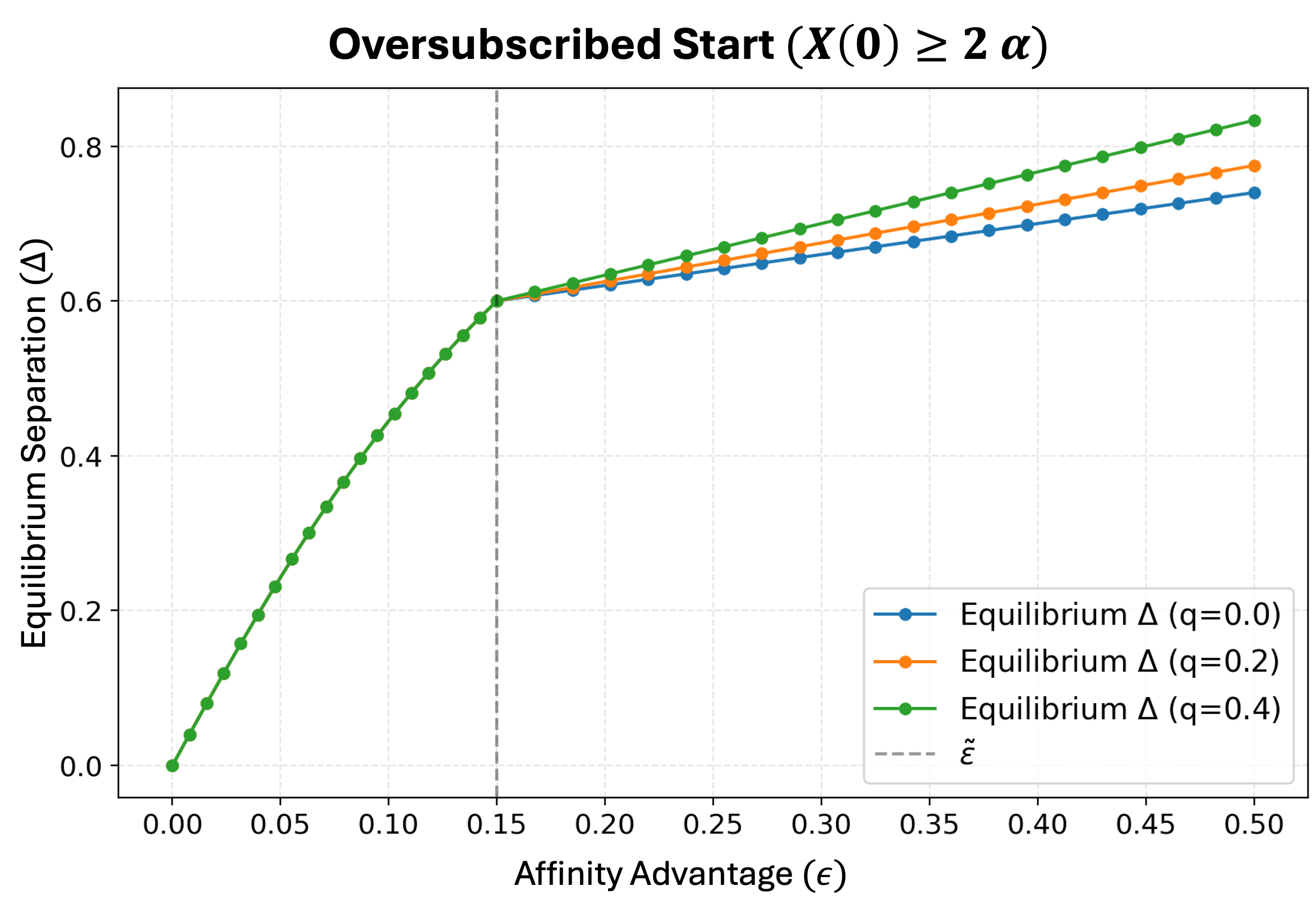}
        \caption{}
        \label{fig:delta_eps_over}
    \end{subfigure}
    \caption{Figure showing equilibrium separation as a function of the affinity advantage when $q > 0$. Parameter combination: $\alpha=0.30$, $p=0.90$. We try $q$ values in $\{0.0, 0.2, 0.4\}$ with $q = 0.0$ as a benchmark. Left and right represent under-subscribed $(X_A(0),X_B(0))=(0.20,0.19)$ and over-subscribed start $(0.5005,0.4995)$ respectively. When $q > 0$, the equilibrium separation between groups can only be worse. }
    \label{fig:delta_eps_panels}
\end{figure*}

\subsection{Affinity Advantage Model: More numerical experiments for $q > 0$}
In Figure~\ref{fig:aa_stable_heatmap}, we provide comprehensive numerical results that capture the long-run separation between groups under affinity advantage. We fix $N = 1000$,$p = 0.90$ and $q = 0.40$, but vary the capacity parameter $\alpha$ and the affinity advantage parameter $\epsilon$. The dashed line in both subplots indicates the threshold $\tilde \epsilon$ which depends on $\alpha$. Each grid represents an $(\alpha, \epsilon)$ pair for which we empirically obtain the long-run group separation averaged over $1000$ independent runs. Lighter grids indicate larger group separation in the long-run(left) or larger group share for the leading group (right). 

We observe that at low capacities, even small $\epsilon$ can lead to very large group separations in the long-run --- this indicates that network effects can exacerbate (already significant) separation effects generated by stochasticity alone when capacity is limited (observed earlier in Figure~\ref{fig:separation_N}). Therefore, investing in more capacity can help to mitigate some of these adverse outcomes.

\begin{figure}[ht]
  \centering
    \includegraphics[width=\columnwidth]{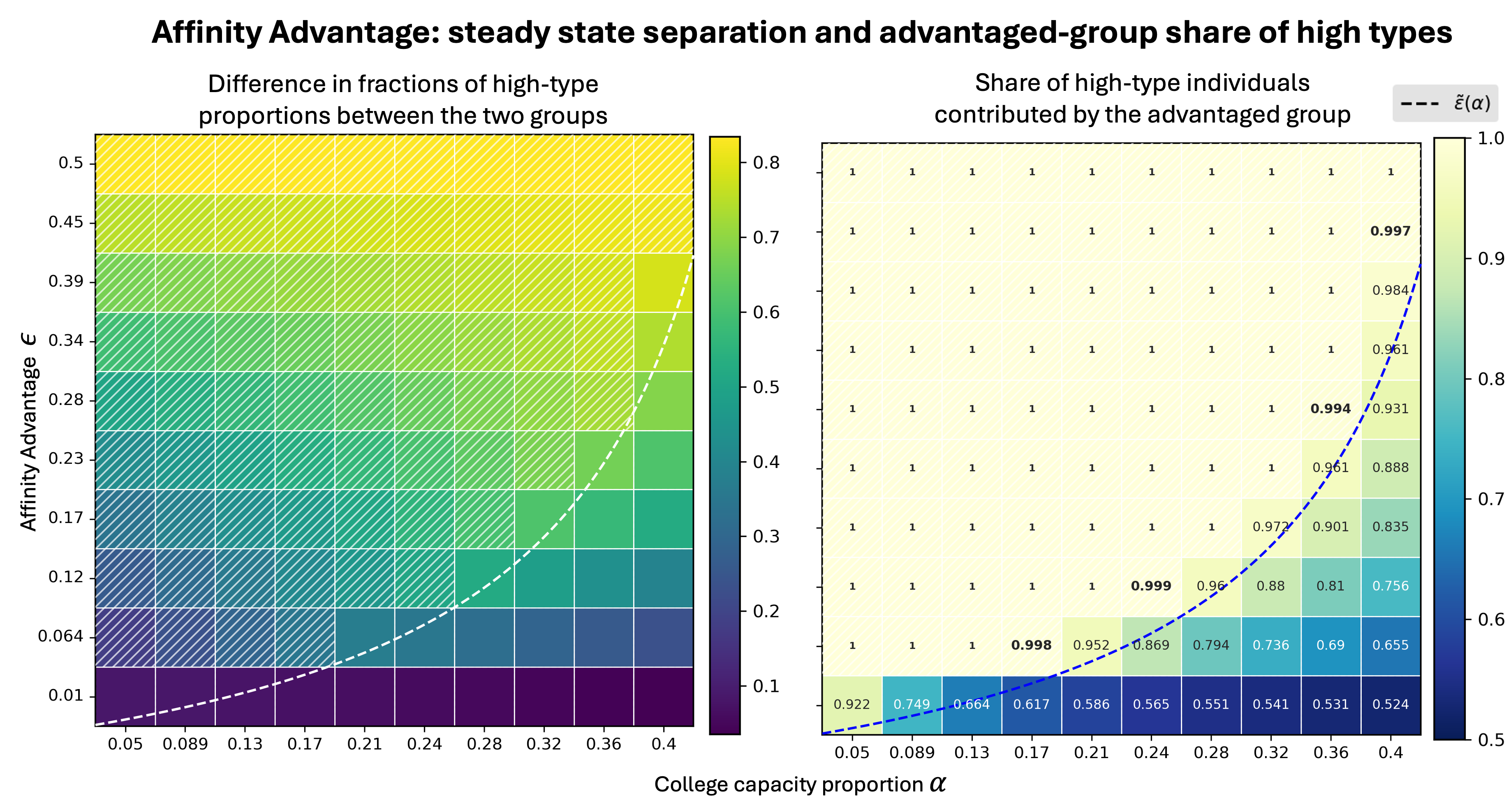}
    \caption{Outcomes for the Affinity Advantage model, showing the long-run separation (left) and the leading group's share of high-types (right). The dashed curve marks the threshold $\tilde{\epsilon}(\alpha) = 2\alpha(1-p)/(1-2\alpha)$ for extreme dominance (hatched region). Parameter combination: $p=0.90, q=0.40, N=1000, (X_A(0), X_B(0)) = (0.20, 0.20)$. At low capacities, long-run separation effects are exacerbated. }
    \label{fig:aa_stable_heatmap}
\end{figure}

\subsection{Richer Model: Sensitivity Analysis}
\label{app:sensitivity}

To ensure that our findings regarding group separation in the richer admissions models are not artifacts of specific parameter choices, we examined the system's behavior across a broader range of settings. We specifically investigated the impact of two key parameters: college efficacy ($\mu_I$) and generational noise ($\sigma_\lambda$). These results are shown in Figure ~\ref{fig:sensitivity_analysis_combined}. 

\begin{figure*}[ht]
  \centering
  \begin{subfigure}[t]{0.48\textwidth} 
    \centering
    \includegraphics[width=\textwidth]{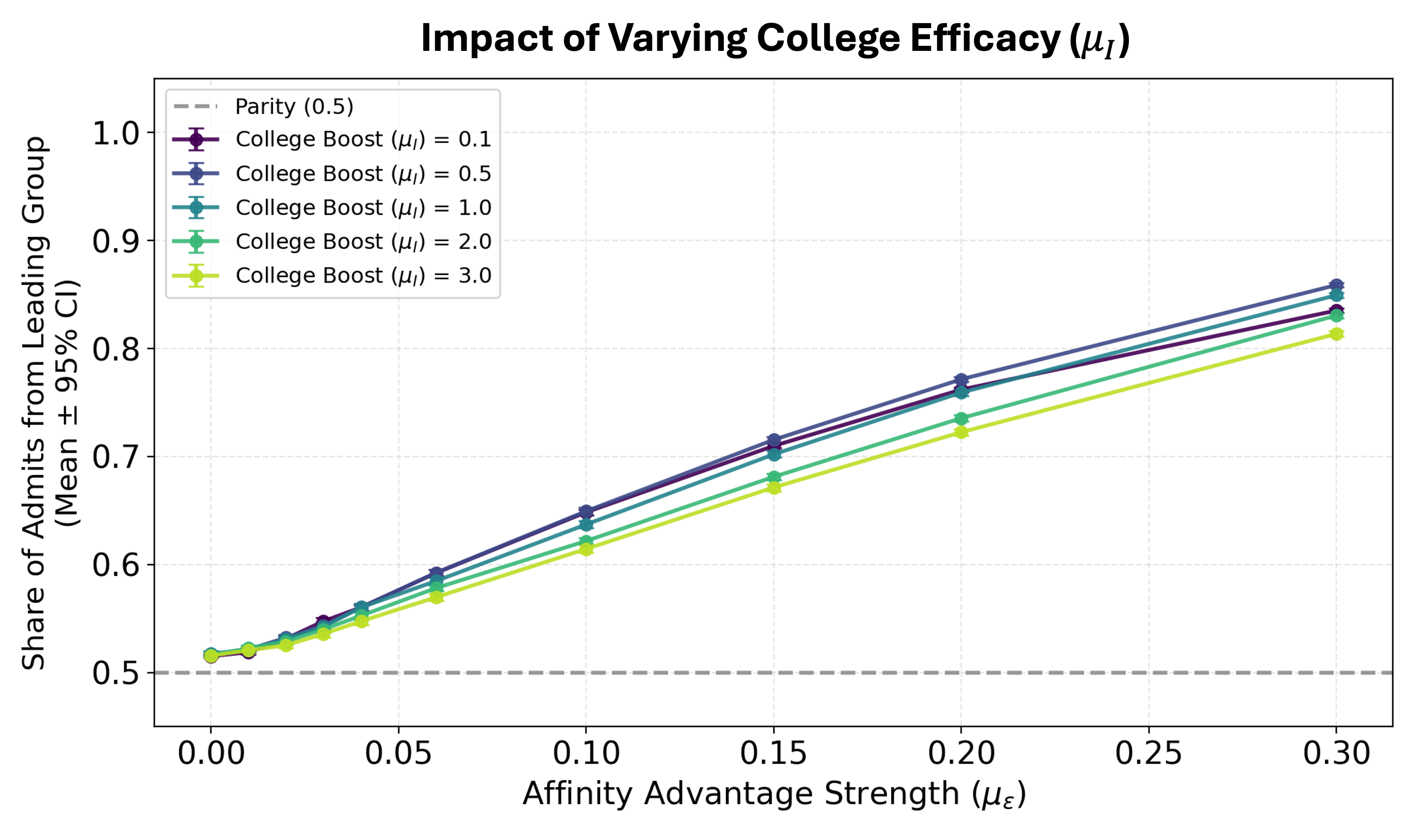} 
    % \caption{Impact of varying college efficacy ($\mu_I$)}
    \caption{ }
    \label{subfig:sens_muI}
  \end{subfigure}
  \hfill 
  \begin{subfigure}[t]{0.48\textwidth}
    \centering
    \includegraphics[width=\textwidth]{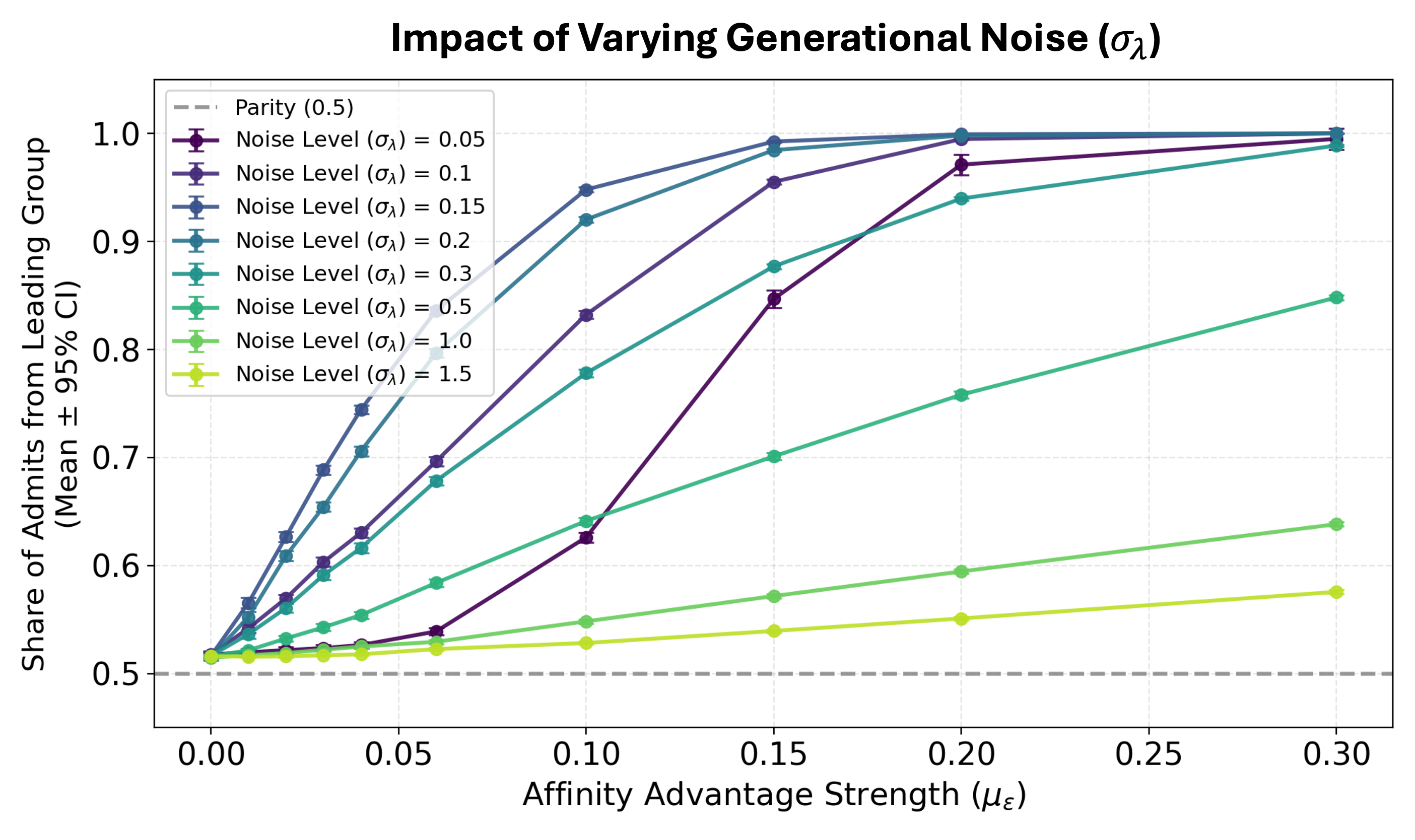}
    % \caption{Impact of varying generational noise ($\sigma_\lambda$)}
    \caption{ }
    \label{subfig:sens_sigma}
  \end{subfigure}

  \caption{Sensitivity analysis of long-term admission outcomes. The figures show the share of admits from the leading group (mean $\pm$ 95\% CI over 100 runs) plotted against affinity advantage $\mu_\epsilon$. These experiments confirm that long-term separation is a robust feature of the dynamics and not an artifact of specific parameter choices for college boost or generational transmission noise.}
  \label{fig:sensitivity_analysis_combined}
\end{figure*}

\paragraph{Varying College Efficacy} First, we varied the college efficacy parameter in the range $\mu_I \in [0.1, 3.0]$. As shown in Figure~\ref{subfig:sens_muI}, increasing the meritocratic boost ($\mu_I$) slightly decreases the equilibrium separation. This effect is likely due to the meritocratic boost dampening the relative impact of the affinity advantage noise. However, notably, even at higher levels of $\mu_I$, the separation is not entirely eliminated, confirming the persistence of the advantage.

\paragraph{Varying Generational Noise} Second, we varied the generational noise parameter in the range $\sigma_\lambda \in [0.05, 1.5]$. Figure~\ref{subfig:sens_sigma} yields a non-monotonic relationship between generational noise and separation: separation is lower at the extremes (attributable to system rigidity at low noise and the dilution of advantage by noise at high levels) but peaks at intermediate levels ($\sigma_\lambda \approx 0.15$). These results confirm that long-term separation is a robust feature of the dynamics across a wide parameter space.

\end{document}